\title{Direct Doubly Robust Estimation of\\Conditional Quantile Contrasts}
\author{Josh Givens \\
School of Mathematics\\
University of Bristol\\
\And
Song Liu \\ School of Mathematics\\
University of Bristol
\And
Henry W. J. Reeve \\
School of Artificial Intelligence \\
Nanjing University
\And
Katarzyna Reluga \\
School of Business and Economics\\
Humboldt University of Berlin
}
\begin{document}
\maketitle
\begin{abstract}
Within heterogeneous treatment effect (HTE) analysis, various estimands have been proposed to capture the effect of a treatment conditional on covariates. Recently, the \emph{conditional quantile comparator} (CQC) has emerged as a promising estimand, offering quantile-level summaries akin to the conditional quantile treatment effect (CQTE) while preserving some interpretability of the conditional average treatment effect (CATE).
It achieves this by summarising the treated response conditional on both the covariates and the untreated response. Despite these desirable properties, the CQC's current estimation is limited by the need to first estimate the difference in conditional cumulative distribution functions and then invert it. 
This inversion obscures the CQC estimate, hampering our ability to both model and interpret it. To address this, we propose the first direct estimator of the CQC, allowing for explicit modelling and parameterisation.
This explicit parameterisation enables better interpretation of our estimate while also providing a means to constrain and inform the model. We show, both theoretically and empirically, that our estimation error depends directly on the complexity of the CQC itself, improving upon the existing estimation procedure. Furthermore, it retains the desirable double robustness property with respect to nuisance parameter estimation. We further show our method to outperform existing procedures in estimation accuracy across multiple data scenarios while varying sample size and nuisance error. Finally, we apply it to real-world data from an employment scheme, uncovering a reduced range of potential earnings improvement as participant age increases.

\end{abstract}

\section{Introduction}
As data becomes more and more readily available, the demand for personalised treatments and interventions has increased dramatically. The statistical field addressing this challenge is heterogeneous treatment effect (HTE) analysis in which one aims to learn the effect of a treatment on an outcome or response conditional on key covariates \citep{Hirano2009,Collins2015, Obermeyer2016, Lei2021}.
% These covariates typically take the form of characteristics of an individual for which the decision of treatment must be made. Therefore, effective HTE analysis ensures that each individual gets the appropriate treatment. 

A core strategy for the analysis of HTE data is to estimate key estimands that quantify the effectiveness of a treatment given the covariates. The two commonly used estimands are the conditional average treatment effect (CATE) \citep{Abadie2002a,Imbens2004,Semenova2021} and the conditional quantile treatment effect (CQTE) \citep{Abadie2002, Autor2017, Powell2020} which represent the difference in the conditional mean and quantile of the response respectively for the two treatments given the covariates. Both approaches have advantages: the CQTE yields more granular treatment-effect summaries and is less sensitive to extreme values \citep{Firpo2007, Bitler2006}, while the CATE provides a more interpretable estimand with stronger estimation guarantees \citep{Kennedy2023, Kennedy2023b, Nie2020}.

A recently introduced estimand, the conditional quantile comparator (CQC) \citep{Givens2024}, aims to bridge the gap between the CATE and the CQTE.
The CQC does this by providing a transport map between the conditional treated and untreated response distributions in a quantile preserving manner. The definition of the CQC more naturally aligns with how treatment effects are discussed as they are often talked about as either improving the response by a fixed amount or scaling the response (e.g. a medicine increased life expectancy by 2 years or by 50\%). This scaling can be expressed naturally as a function of response while we would need to transform the input via the conditional cumulative distribution function of the untreated response in order to express it as a function of the associated quantile. Therefore in this case the CQC would be able to directly capture this effect helping better understand the treatment and its efficacy while the CATE and CQTE would likely have much more complex relationship for the CATE and CQTE. As the CQC shares properties with the CQTE it also shares its strengths. Namely it is useful in settings where our distribution is heavily skewed, such as platform use or income, as it allows us to make effective decision on which treatment is better without being heavily affected by a small number of extreme samples \citep{Firpo2007, Belloni2017}. In relation to this, it can also help with decision making in cases where we want to evaluate our treatment only for certain response values. For example if we want to evaluate some employment intervention on income for those on lower incomes (see our example in Section \ref{sec:real_world}.) 

In summary, this leads to the CQC being able to give information on the relationships between the treated and untreated distributions at all levels similarly to the CQTE, while having a more direct interpretation at the response level.  The current estimation method for the CQC, introduced in \citet{Givens2024}, involves estimating an intermediate estimand and then inverting this to obtain a CQC estimate.
Despite having some strong theoretical guarantees, this framework does not enable direct modelling of the CQC itself. This in turn prevents the use of informative parameterisations and limits our ability to constrain or inform the model structure, such as by enforcing smoothness in nonparametric settings.
This approach also hinders interpretability of the estimate as it can only be examined via evaluating it at various samples, a procedure which itself can be computationally costly.

In this paper, we provide the first direct estimator of the CQC which addresses these limitations. Crucially, our new approach allows the CQC to be explicitly parameterised. This enables us to enforce assumptions on the CQC via flexible techniques including linear parameterisation, neural networks, kernel bandwidth choice in nonparametric settings, and regularisation. This also enhances interpretability by allowing greater flexibility in model inspection.
Finally, because our approach models the CQC directly, the estimation error depends on the complexity of the CQC itself, rather than that of an upstream intermediate function. Meanwhile, it retains the doubly robust property, ensuring accurate estimation of the CQC even when all nuisance parameters are estimated suboptimally.

To summarise, in this paper we:
\begin{itemize}
    \item Provide the first direct CQC estimation procedure.
    \item Provide finite sample bounds on this estimation procedure.
    \item Illustrate the robustness of our estimator theoretically, and through numerical experiments.
    \item Show it to empirically outperform existing procedures in terms of estimation accuracy along various axes directly highlighting the advantage given by our explicit parameterisation.
    \item Illustrate its interpretability by applying it to real world problems and analysing the results.
\end{itemize}

\section{Problem Formulation}

We first introduce the general HTE setting. Let $Y,X,A$ be random variables each representing information about an individual in our treatment setting. Specifically we take $Y$ (on 
 $\responsespace\subseteq\R$) to give their univariate outcome/response; $X$ (on $\covariatespace\subseteq\R^\datadim$) to give their covariates of interest e.g. age, height, etc.; and $A$ (on $\{0,1\}$) to give their treatment assignment with $1=$ Treatment and $0=$ Control. Our overall aim is to understand the effect of treatment, $A$, on the response, $Y$, given the covariates, $X$. 

We define $Z = (Y,X,A)$ and let $\rsamp\coloneqq\lrbrc{Z^{(i)}}_{i=1}^{2n}\equiv\{(Y^{(i)},X^{(i)},A^{(i)})\}_{i=1}^{2n}$ for $n\in\N$ denote IID copies of $Z$ representing our data sample with $i$ indexing each sample/individual and $2n$ used for notational convenience. We assume that we are in the potential outcome framework so there exists $Y(1),Y(0)$ representing an individuals response both on and off treatment such that $Y\equiv Y(A)$. To allow our results to translate back to these potential outcomes we make the no unobserved confounding assumption given by the identity $(Y(0),Y(1))\perp A|X$. Crucially this means that $Y(\zerone)|X=\bm x$ and $Y|X=\bm x,A=\zerone$ are identically distributed for $\bm x\in\covariatespace,\zerone\in\{0,\!1\}$\citep{Rubin2005}.

For $n\in\N$, let $[n]\coloneqq\{1,\dotsc,n\}$. For a vector $\bm w\in\R^p$ let $w_j$ to represent the $j$\textsuperscript{th} component of $\bm w$ and let $\|\bm w\|$ be the Euclidean norm of $\bm w$ unless otherwise specified. For a function $f:\R\times\covariatespace\rightarrow\R$, we let $\partial_{y}f(y,\bm x)$ denote the partial derivative $\frac{\partial}{\partial y}f(y,\bm x)$. Finally, as convention, for $a<b$ we take $\int_{b}^af(x)\diff x=-\int_{a}^b f(x)\diff x = -\int_{[a,b]}f(x)\diff x$.
With this notation and basic treatment effect set-up introduced, we can now define key estimands used in our framework.

\begin{remark}
    For simplicity, we will assume that response, $Y$, is continuous with strictly positive density when conditioned upon any covariate, $X$, and treatment, $A$.
\end{remark}

\subsection{Nuisance parameters and key estimands}
We first define various \emph{nuisance parameters}, which are additional distributional objects necessary for the estimation of our estimand. The three nuisance parameters of interest are the propensity score, $\pi:\covariatespace\rightarrow (0,1)$, \emph{conditional cumulative distribution function} (CCDF) of $Y|X,A$, $\ccdf$, and the \emph{conditional quantile function} of $Y|X,A$, $\invccdf$, each defined as
\begin{align}
    \propensity(\bm x)&\coloneqq\prob(A=1|X=\bm x)\\
    \ccdf(y|\bm x)&\coloneqq \prob(Y\leq y|X=\bm x,A=\zerone),\\
    \invccdf[\zerone](\alpha|\bm x)&\coloneqq \inf\{y\in\R|\ccdf(y|\bm x)\geq\alpha\}.\label{eq:quantile_func}
\end{align}
for all $\bm x\in\covariatespace$ and $\zerone\in\{0,1\}$ and
with $\pi$ assumed to be continuous and bounded away from $\{0,\!1\}$.
The propensity score can be thought of as the probability of an individual being assigned to treatment given their covariates. Finally we take $\density(.|\bm x)$ to represent the probability density function (pdf) of $Y|X=\bm x, A=a$. We can now introduce the core HTE estimands.
\begin{defn}[CATE, CQTE, CQC]\label{def:estimands}
The CATE, CQTE and the CQC of the triple $Z=(Y,X,A)$ are given by $\cate:\covariatespace\rightarrow \R$, $\cqte:[0,1]\times \covariatespace\rightarrow \R$, and $\truecqc:\responsespace\times\covariatespace\rightarrow\responsespace$ respectively with
\begin{align*}
    \cate(\bm x)&\coloneqq\E[Y|X=\bm x,A=1]-\E[Y|X=\bm x,A=0],\\
    \cqte(\alpha|\bm x)&\coloneqq \invccdf[1](\alpha|\bm x)-\invccdf[0](\alpha|\bm x),\\
    \truecqc(y_0|\bm x)&\coloneqq\invccdf[1]\lrbrc{\ccdf[0](y_0|\bm x)|\bm x}.
\end{align*}
\end{defn}
Both the CATE and the CQTE aim to summarise the effect of the treatment by examining the difference in the outcome for the treated and untreated patients given specific covariate values. The CQTE offers added granularity by allowing the effect to be examined at specific quantiles rather than providing a single summary statistic per covariate value.

The CQC is the central focus of our work and differs from previous estimands by instead mapping an untreated response and covariate value to a treated response value \citep{Givens2024}. Specifically, it defines a transport map from the distributions of the untreated response to the equivalent quantile value of the treated response via conditional on the covariates. 
% We can also define it as the function $\truecqc$ satisfying the identity $\ccdf[1]\lrbrc{\truecqc(y_0|\bm x)|\bm x}=\ccdf[0](y_0|\bm x)$. 
% Throughout we will also define $\cqcclass=\{\cqc:\responsespace\times\covariatespace\rightarrow\responsespace| \cqc\text{ measurable}\}$ to be our class of candidate CQCs. 
Previous work has demonstrated the CQC's ability to provide granular quantile level summaries of the treatment effect similarly to the CQTE while framing the input more naturally in terms of an untreated response value as opposed to a quantile level.
The CQC achieves this by providing summaries over multiple quantiles similarly to the CQTE and in fact has the relation that $\cqte\{\ccdf[0](y_0|\bm x)|\bm x\}=\cqc(y_0|\bm x)-y_0$.

A key strength of the CQC working specifically in the response space, is that this more naturally mimics how the impact of a treatment or intervention is often characterised. Specifically the effect of a treatment is often expressed in terms of either the absolute effect (additive effect) or a scaling effect on the response itself (multiplicative effect.) If this impact is deterministic, the CQC will be able to represent these effects in a simple manner either of these effects while the CQTE may not.
Figure \ref{fig:ill_example} provides an example of this when the treatment doubles the response. We plot the CATE, CQTE and CQC and show that both the CATE and the CQTE contain complex high frequency changes not present in this treatment effect while the CQC does not. Specifically, the CQC will be $\truecqc(y_0|\bm x)=2y_0$ regardless of the marginal distributions. This relative simplicity of the CQC not only improves interpretability but can also lead to more accurate estimation.

\begin{figure}[htbp]
    \begin{subfigure}{0.3\textwidth}
        \includegraphics[width=\textwidth]{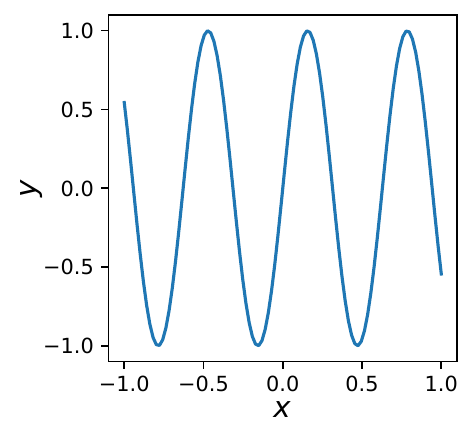}
        \caption{$\cate(x)$}
    \end{subfigure}
    \begin{subfigure}{0.3\textwidth}
        \includegraphics[width=\textwidth]{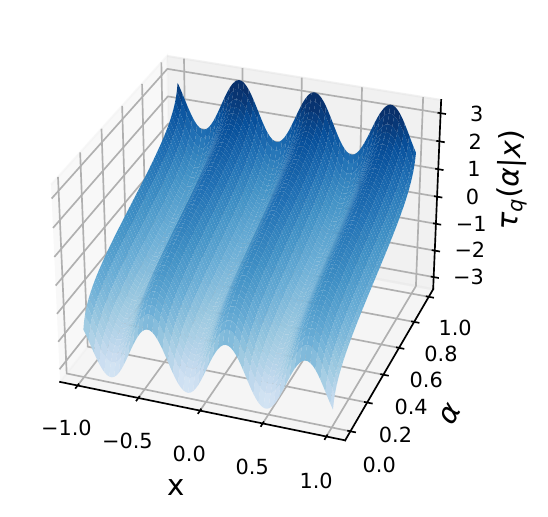}
        \caption{$\cqte(\alpha|x)$}
    \end{subfigure}
    \begin{subfigure}{0.3\textwidth}
        \includegraphics[width=\textwidth]{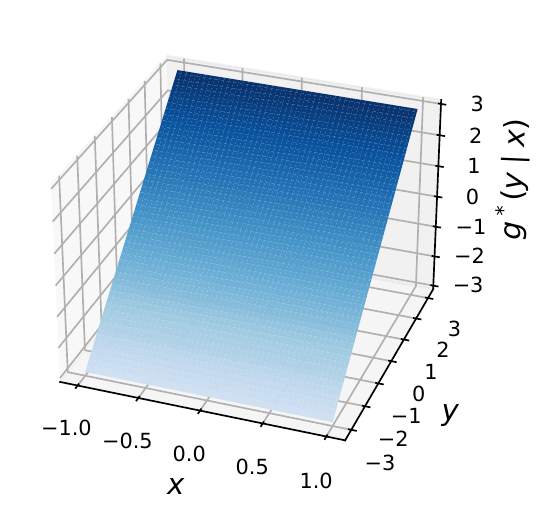}
        \caption{$\truecqc(y|x)$}
    \end{subfigure}
    \caption{Surface plots for CATE (panel (a)), CQTE (panel (b)), and CQC (panel (c)) where $Y|X=x,A=0\sim N(\sin(10x),1),~Y|X,A=1\sim N(2\sin(10x),4)$. We can see that CATE, and CQTE have high-frequency changes in $x$ while the CQC does not depend on $x$ instead simply representing the doubling of the response as $\truecqc(y|x)=2y$.}
    \label{fig:ill_example}
\end{figure}

Optimal estimation of the estimands in Definition \ref{def:estimands} has been the focus of much previous work \citep{Robins2008,Shalit2017,Foster2023,Melnychuk2025,Sun2025}. To achieve their optimal estimation, it is first necessary to estimate nuisance parameters, such as the propensity score and conditional cumulative distribution functions (CCDFs) in case of CQC estimation. Consequently, prior work has focused on developing methods that are robust to inaccuracies in these nuisance estimates. A notable class of such methods, known as \emph{doubly robust} methods, can attain the desired overall convergence rate even when all nuisance parameter estimates converge at slower rates.
Doubly robust methods have been introduced for each of the CATE \citep{Kennedy2023, Kennedy2023b, Nie2020}, CQTE \citep{Kallus2023}, and CQC \citep{Givens2024}.
We now introduce the existing doubly robust CQC estimation method, which serves as a point of comparison for our proposed approach.

\subsection{Current CQC estimation}\label{sec:old_method}

In \citet{Givens2024} a CQC estimation method was introduced which involved estimating 
an intermediary function called the \emph{CCDF contrasting function} defined as
\begin{align*}
    \contrastfunc(y_1,y_0,\bm x)&=\ccdf[1](y_1|\bm x)-\ccdf[0](y_0|\bm x).
\end{align*}
To obtain an estimate of $\truecqc(y_0|\bm x)$ one would then have to estimate $\contrastfunc(y_1,y_0,\bm x)$ over a large number of $y_1$ samples, isotonically project, and then choose the $y_1$ sample which gave $\contrastfunc$ closest to 0. This approach has three main shortcomings:

\begin{enumerate}
    \item Its lack of explicit form for our CQC estimate, $\estcqc$, makes it harder to interpret and constrain.
    \item Its estimation quality depends upon the difficulty of estimating $\contrastfunc$ rather than our parameter of interest, $\truecqc$.
    \item Its evaluation is computationally expensive especially when the estimate of $\contrastfunc$ is expensive to evaluate (see Appendix \ref{app:time_taken} for further exploration and experimental validation of this.
\end{enumerate}

\begin{remark}
We view simplicity of the CQC as a more natural and easily satisfied notion than that of the CCDF contrast function. See Appendix~\ref{app:cqc_vs_ccdfcontrast} for further discussion and an illustrative example.
\end{remark}
We now introduce our approach which directly estimates the CQC, thereby addressing these issues.

\section{The direct CQC estimator}\label{sec:method}
Similarly to the existing approach, we can frame our estimation problem as finding $y_1$ for a given $y_0,\bm x$ such that $\contrastfunc(y_1,y_0,\bm x)=\ccdf[1](y_1|\bm x)-\ccdf[0](y_0|\bm x)=0$.
While we could treat this as a Z-estimation problem, in order to extend this to learning a function over all $y_0,\bm x$, it is instead helpful to view it through this lens of M-estimation.
To this end, since $\contrastfunc$ is an increasing function of $y_1$, any loss function $\fixedsampleloss$ satisfying $\partial_{y_1}\fixedsampleloss(y_1,y_0,\bm x)=\contrastfunc(y_1, y_0,\bm x)$ will be minimised at the value of $y_1$ such that $\contrastfunc(y_1,y_0, \bm x)=0$, our desired goal.
Using this idea we now introduce our loss in Definition \ref{def:loss_fn}, justify it via Equation \eqref{eq:loss_justification}, and demonstrate its direct relation to CQC estimation error in Proposition \ref{prop:loss_bound}.
\begin{defn}\label{def:loss_fn}
For a parameter space $\Param\subset\R^\paramdim$, let $\cqcclass_\param\coloneqq\{\cqc_\param:\responsespace\times\covariatespace\rightarrow\responsespace|\param\in\Param\}$ be the set of parameterised CQC estimates. Additionally, for $y_0\in\responsespace$, $\bm x\in\covariatespace$, $\param\in\Param$, and $Y_0$ a RV over $\responsespace$, define 
\begin{align*}
\fixedsampleloss(y_1,y_0,\bm x)&\coloneqq\int^{y_1}_{\truecqc(y_0|\bm x)}\contrastfunc(t, y_0,\bm x)\diff t&
    \sampleloss(\param,y_0,\bm x)&\coloneqq\fixedsampleloss\lrbrc{\cqc_\param(y_0|\bm x), y_0, \bm x}.\\
    \loss(\param)&\coloneqq\E[\sampleloss(\param, Y_0,X)]&
    \minparam&\coloneqq\argmin_{\param\in\Param}\loss(\param)
\end{align*}
\end{defn}
In summary, evaluating $\fixedsampleloss$ at the CQC estimate, $\cqc_{\theta}(y_0|x)$, yields the pointwise loss, $\sampleloss(\theta, y_0, x)$, whose expectation guides the estimation of $\theta$. Specifically we then have that
\begin{align}\label{eq:loss_justification}
    \truecqc(y_0|\bm x)=\argmin_{y_1}\fixedsampleloss(y_1,y_0,\bm x).
\end{align}
This result follows from a simple application of the Fundamental Theorem of Calculus. A detailed proof is provided in Appendix \ref{app:loss_just_proofs}.
Now, suppose there exists unique $\trueparam\in\Param$ such that $\truecqc=\cqc_{\trueparam}$ and $\support(Y_0|X=\bm x)=\responsespace$ for all $\bm x\in\covariatespace$ then, as $\trueparam$ minimises $\fixedsampleloss\lrbrc{\cqc_\param(y_0|\bm x), y_0, \bm x}$ pointwise for all $y_0,\bm x$, we have that $\minparam=\trueparam$ i.e. our minimiser is the true parameter.
% Meaning the function is minimised at the correct parameter value given it exists.

To further aid in the interpretation and justification of the loss function in Definition \ref{def:loss_fn}, including in cases where $\cqcclass_\param$ does not contains the true CQC, we will provide various bounds on the loss function in Proposition \ref{prop:loss_bound}. We do this via three different avenues, each requiring \emph{separate} assumptions on the distribution of our treated response with varying levels of generality. While these bounds are helpful and illustrative, our loss is still justified even when none of these bounds hold.

\begin{prop}\label{prop:loss_bound}
    For any $y\in\responsespace$, $\bm x\in\covariatespace$, and $\param\in\Param$ we have the following upper bound on the loss:
    \begin{align*}
        \sampleloss(\param,y_0,\bm x)\leq \abs{\cqc_\param(y_0|\bm x)-\cqc^*(y_0|\bm x)}\abs{\ccdf[1]\{\cqc_\param(y_0|\bm x)|\bm x\}-\ccdf[1]\{\truecqc(y_1|\bm x)|\bm x\}}.
    \end{align*}
    Under various conditions we have the following three lower bounds on the loss:
    \begin{enumerate}[(a)]
        \item Suppose that $\density[1](y|\bm x)\leq \densityub$ for all $y,\bm x$, then\\ $(\ccdf[1]\{\cqc_\param(y_0|\bm x)|\bm x\}-\ccdf[1]\{\truecqc(y_0|\bm x)|\bm x\})^2\leq2\densityub \sampleloss(\param,y_0,\bm x)$.
        \item Suppose that $\density[1](y|\bm x)\geq \densitylb$ for all $y,\bm x$, then 
        $\densitylb\{\cqc_\param(y_0|X)-\cqc^*(y_0|X)\}^2\leq 2\sampleloss(\param,y_0,\bm x)$.
        \item Suppose that $\density[1](y|\bm x)$ is an decreasing function of $y$, then\\ $\abs{\cqc_\param(y_0|\bm x)-\cqc^*(y_0|\bm x)}\abs{\ccdf[1]\{\cqc_\param(y_0|\bm x)|\bm x\}-\ccdf[1]\{\truecqc(y_0|\bm x)|\bm x\}}\leq 2\sampleloss(\param,y_0,\bm x)$.
    \end{enumerate}
\end{prop}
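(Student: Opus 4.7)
The plan is to reduce every bound to an elementary calculus identity by first rewriting the loss in Bregman-divergence form. Since $\truecqc(y_0|\bm x)$ is characterised by $\ccdf[1]\{\truecqc(y_0|\bm x)|\bm x\} = \ccdf[0](y_0|\bm x)$, substituting this into the definition of $\contrastfunc$ and writing $a = \truecqc(y_0|\bm x)$ and $b = \cqc_\param(y_0|\bm x)$ for brevity gives
\[
    \sampleloss(\param, y_0, \bm x) = \int_a^b \bigl[\ccdf[1](t|\bm x) - \ccdf[1](a|\bm x)\bigr] \diff t.
\]
This is the Bregman divergence of $b$ from $a$ with respect to the (automatically convex) antiderivative $\Phi$ of $\ccdf[1]$, so each claim becomes a statement about how $\density[1] = \Phi''$ controls this divergence.

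The upper bound and parts (a) and (b) are then routine. For the upper bound, the monotonicity of $\ccdf[1]$ implies $|\ccdf[1](t|\bm x) - \ccdf[1](a|\bm x)| \leq |\ccdf[1](b|\bm x) - \ccdf[1](a|\bm x)|$ uniformly for $t$ between $a$ and $b$, so integrating over an interval of length $|b - a|$ yields the stated product. For (a), I would change variables to $u = \ccdf[1](t|\bm x)$ so that $\diff t = \diff u / \density[1](t|\bm x) \geq \diff u / \densityub$, and then compute $\int_{\ccdf[1](a|\bm x)}^{\ccdf[1](b|\bm x)} (u - \ccdf[1](a|\bm x)) \diff u = \tfrac{1}{2}(\ccdf[1](b|\bm x) - \ccdf[1](a|\bm x))^2$ explicitly. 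For (b), I would instead lower-bound the inner integrand via $\ccdf[1](t|\bm x) - \ccdf[1](a|\bm x) = \int_a^t \density[1](s|\bm x) \diff s \geq \densitylb (t - a)$ and integrate once more.

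The main obstacle is (c). Here the assumption that $\density[1](\cdot|\bm x)$ is decreasing makes $\ccdf[1](\cdot|\bm x)$ concave, which suggests a Hermite--Hadamard style estimate. When $b > a$, concavity places the curve $t \mapsto \ccdf[1](t|\bm x) - \ccdf[1](a|\bm x)$ above the chord joining $(a, 0)$ and $(b, \ccdf[1](b|\bm x) - \ccdf[1](a|\bm x))$, so the integral dominates the triangle area $\tfrac{1}{2}(b-a)(\ccdf[1](b|\bm x) - \ccdf[1](a|\bm x))$ under that chord. Equivalently, pairing the symmetrised Bregman identity $D_\Phi(b,a) + D_\Phi(a,b) = (b-a)(\ccdf[1](b|\bm x) - \ccdf[1](a|\bm x))$ with the inequality $D_\Phi(b,a) \geq D_\Phi(a,b)$ that follows from Hermite--Hadamard applied to the concave $\ccdf[1]$ over $[a,b]$ yields the claim after halving. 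The trickiest bookkeeping is the reversed-endpoint case $b < a$, where the sign convention for integrals with swapped limits introduced in the preliminaries must be tracked carefully so that the concavity input continues to produce a lower bound of the desired form.
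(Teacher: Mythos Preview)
Your proposal is correct and follows the same skeleton as the paper: both reduce the claim to bounding the integral $\int_a^b F(t)\,\diff t$ with $F(t)=\ccdf[1](t|\bm x)-\ccdf[1](a|\bm x)$, $a=\truecqc(y_0|\bm x)$, $b=\cqc_\param(y_0|\bm x)$, and then use monotonicity of $F$ for the upper bound, the pointwise bound $F(t)\geq\densitylb(t-a)$ for (b), and the chord-below-curve inequality for concave $\ccdf[1]$ in (c). Your Bregman-divergence framing is a tidy unifying device the paper does not make explicit, but the underlying estimates are the same.

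The one genuine methodological difference is in (a). The paper constructs an explicit extremal function $\tilde F$ (flat then linear with maximal slope $\densityub$) that minimises the integral among all increasing functions with $F(a)=0$, $F(b)=\beta$, $F'\leq\densityub$, and computes its integral directly. Your change-of-variables $u=\ccdf[1](t|\bm x)$ is slicker: it turns the problem into $\int(u-\ccdf[1](a|\bm x))\,\diff u/\density[1]\geq\densityub^{-1}\int(u-\ccdf[1](a|\bm x))\,\diff u$, which integrates in one line. Both arguments handle the reversed-endpoint case $b<a$ by the same symmetric sign bookkeeping you flag, so there is no gap there.
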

The proof is given in Appendix \ref{app:proof_loss_bound}.

Error terms involving both $\abs{\cqc_\param(y_0|\bm x)-\truecqc(y_0|\bm x)}$ and $\abs{\ccdf[1]\{\cqc_{\param}(y_0|\bm x)|\bm x\}-\ccdf[0]\{\truecqc(y_0|\bm x)|\bm x\}}$ are natural as the first represents the error on our estimator while the second is the error of our estimator when mapped on to probability space.
The assumption in (a) covers many common distributions with densities bounded above. The assumption in (b) applies to many bounded-support distributions such as the Beta. The final case is less common but holds for some distributions, e.g., the exponential, and has been studied in density estimation \citep{Birge1989}.

\subsection{Our estimator}
While the above results justify our loss $\sampleloss$ in Definition \ref{def:loss_fn}, they do not give us any approach to evaluate or even approximate it. To achieve this we return back to the derivative of $\fixedsampleloss$ (also given in Definition \ref{def:loss_fn}) with which we initially motivated our approach. To this end, with $\bm z\coloneqq(y,\bm x,a)$, define 
\begin{align}
    \gradsampleloss_{\dr}(\param, y_0,\bm z)&\coloneqq \nabla_\param\cqc_\param(y_0|\bm x)\Bigg(\frac{a}{\propensity(\bm x)}\lrbrc{\one\{y\leq \cqc_\param(y_0|\bm x)\}-\ccdf[1](\cqc_\param(y_0|\bm x)|\bm x)}-\label{eq:grad_sample}\\
    &\qquad\qquad\qquad\quad~~\frac{1-a}{1-\propensity(\bm x)}\lrbrc{\one\{y\leq y_0\}-\ccdf[0](y_0|\bm x)}~+~\ccdf[1](y_1|\bm x)-\ccdf[0](y_0|\bm x)\Bigg)\nonumber\\
    \gradloss(\param)&\coloneqq\E[\gradsampleloss_{\dr}(\param,Y_0,Z)].\label{eq:grad_pop}
\end{align}

We then have the following proposition.
\begin{prop}\label{prop:dr_grad_term}
    For $y_0\in\responsespace$, $\bm x\in\covariatespace$, and $\param\in\Param$ we have that 
    \begin{align*}
        \E[\gradsampleloss_{\dr}(\param,y_0,Z)|X=\bm x] &=\nabla_\param\sampleloss(\param,y_0,\bm x) ~~\text{and}~~  \gradloss(\param)=\nabla_\param\loss(\param).
    \end{align*}
\end{prop}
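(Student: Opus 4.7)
The plan is to prove the conditional (pointwise) identity first and deduce the population one by the tower property together with differentiation under the integral. The key observation is that each of the two inverse-propensity-weighted brackets in \eqref{eq:grad_sample} has conditional mean zero given $X$, leaving a residual that the fundamental theorem of calculus (FTC) and the chain rule identify with $\nabla_\param\sampleloss(\param,y_0,\bm x)$.

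For the first identity, I will pull $\nabla_\param\cqc_\param(y_0|\bm x)$ outside the conditional expectation (it is $X$-measurable) and write $c\coloneqq\cqc_\param(y_0|\bm x)$. Iterating on $A$ and using $\E[A\mid X]=\propensity(X)$ together with $\prob(Y\leq c\mid X=\bm x, A=1)=\ccdf[1](c|\bm x)$ gives
\begin{align*}
\E\left[\frac{A}{\propensity(X)}\one\{Y\leq c\}\,\Big|\,X=\bm x\right]=\ccdf[1](c|\bm x)=\E\left[\frac{A}{\propensity(X)}\ccdf[1](c|\bm x)\,\Big|\,X=\bm x\right],
\end{align*}
so the treated bracket vanishes; an analogous calculation with $1-A$ and $1-\propensity$ kills the control bracket. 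What remains is $\nabla_\param\cqc_\param(y_0|\bm x)\bigl(\ccdf[1](c|\bm x)-\ccdf[0](y_0|\bm x)\bigr)=\nabla_\param\cqc_\param(y_0|\bm x)\cdot\contrastfunc(c,y_0,\bm x)$. On the other side, since $\sampleloss(\param,y_0,\bm x)=\int_{\truecqc(y_0|\bm x)}^{\cqc_\param(y_0|\bm x)}\contrastfunc(t,y_0,\bm x)\diff t$, FTC yields $\partial_{y_1}\fixedsampleloss(y_1,y_0,\bm x)=\contrastfunc(y_1,y_0,\bm x)$, and the chain rule therefore produces $\nabla_\param\sampleloss(\param,y_0,\bm x)=\contrastfunc(\cqc_\param(y_0|\bm x),y_0,\bm x)\cdot\nabla_\param\cqc_\param(y_0|\bm x)$, matching the residual exactly.

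For the population identity, I will apply the tower property to reduce $\E[\gradsampleloss_\dr(\param,Y_0,Z)]$ to $\E_X\bigl[\E[\gradsampleloss_\dr(\param,Y_0,Z)\mid X]\bigr]=\E[\nabla_\param\sampleloss(\param,Y_0,X)]$, which uses that, given $X$, the conditional distribution of $(Y,A)$ is unaffected by further conditioning on $Y_0$. Finally, swapping $\nabla_\param$ with the outer expectation under standard dominated-convergence regularity—smoothness of $\param\mapsto\cqc_\param(y_0|\bm x)$ together with an integrable envelope for its gradient—gives $\gradloss(\param)=\nabla_\param\loss(\param)$.

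The proof is essentially bookkeeping: the only conceptual step is recognising the two IPW brackets as mean-zero corrections, which is immediate from $\E[A\mid X]=\propensity(X)$ and the definition of $\ccdf[\zerone]$. The one place where genuine care is needed is the final exchange of $\nabla_\param$ with $\E$, a routine differentiation-under-the-integral argument that nevertheless silently imposes regularity on the parameterised class $\cqcclass_\param$.
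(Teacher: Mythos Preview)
Your proposal is correct and follows essentially the same route as the paper: factor out $\nabla_\param\cqc_\param(y_0|\bm x)$, show via the tower property on $A$ that each IPW bracket has conditional mean zero given $X$, and identify the remaining term $\ccdf[1](\cqc_\param(y_0|\bm x)|\bm x)-\ccdf[0](y_0|\bm x)$ with $\partial_{y_1}\fixedsampleloss$ by FTC and the chain rule. You are slightly more explicit than the paper about the second identity, spelling out the tower-property reduction and the exchange of $\nabla_\param$ with the outer expectation, whereas the paper leaves these implicit.
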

The proof can be found in Appendix \ref{app:loss_just_proofs}. 
\begin{remark}
    While an inverse probability weighting approach could instead be used to approximate $\nabla_{\param}\sampleloss$, this form of $\gradsampleloss$ provides the desirable double robustness property, as we will demonstrate later.
\end{remark}
\begin{remark}
    While this only gives us a gradient of a loss function rather than the loss function itself we discuss how an estimate of the loss itself can be derived via 1D quadrature for validation and hyper parameter selection purposes in Appendix \ref{app:loss_eval}.
\end{remark}

This result allows us to use $\gradsampleloss_{\dr}$ and samples from $Z$ to perform gradient descent on the sample version of $\loss(\param)$.
In practice, we do not have access to $\ccdf,\propensity$ and so will replace these with estimates given by $\estccdf, \estpropensity$. We use $\estgradsampleloss_{\dr}$ to represent the version of $\gradsampleloss_{\dr}$ with $\ccdf,\propensity$ replaced by $\estccdf,\estpropensity$. 
With data, $\rsamp=\{ Z^{(i)}\}_{i=1}^n$, and testing points $\{Y_0^{(i)}\}$, we define our Monte-Carlo estimate of the gradient to be
\begin{align}\label{eq:grad_sample_estimate}
    \estgradloss_{\dr}(\param,\{(Y_0^{(i)}, Z^{(i)})\}_{i=1}^n)\coloneqq\inv{n}\sum_{i=1}^n\estgradsampleloss_{\dr}(\param, Y_0^{(i)},Z^{(i)}).
\end{align}
This finally allows us to define our estimation procedure which is presented in Algorithm \ref{alg:cqc_est}.

\begin{algorithm}[H]
   \caption{Doubly robust, direct CQC estimation algorithm}
   \label{alg:cqc_est}
\begin{algorithmic}[1]
   \Require{ $\rsamp = \{Z^{(i)}\}_{i=1}^{2n}$, $\cqcclass_\Param$, $\param^{(0)}$, $T\in\N$, $\lr>0$}
   \State Define $\isplitone\coloneqq\{1,\dotsc,n\}$, $\isplittwo\coloneqq\{n+1,\dotsc, 2n\}$ and split $D$ into $\rsampsplitone\coloneqq\{Z^{(i)}\}_{i\in\isplitone},\rsampsplittwo\coloneqq\{Z^{(j)}\}_{j\in\isplittwo}$.
   \State Use $\rsampsplitone$ to estimate $\hat\pi,\estccdf[0],\estccdf[1]$
   \State Set $\param=\param_0$.
   \For{$t=1$ {\bfseries to} $T$}
   \State For $i\in\isplittwo$ sample $Y_0^{(i)}$ (potentially dependent upon $X^{(i)}$). See Remark \ref{rmk:y0_selection} for more detail.
   \State Obtain our Monte-Carlo estimate $\gradloss(\param^{(t)})$ given by $\estgradloss_{\dr}(\param,\{(Y_0^{(i)}, Z^{(i)})\}_{i\in\isplittwo})$ in \eqref{eq:grad_sample_estimate}.
   \State Update $\param$ by $\param^{(t+1)} = \param^{(t)}-\lr\estgradloss(\param^{(t)})$.
   \EndFor
   \State \Return $\param^{(T)}$.
\end{algorithmic}
\end{algorithm}
\begin{remark}
    In practice, we can replace step 7 of Algorithm \ref{alg:cqc_est} with any exclusively gradient-based (stochastic or otherwise) optimisation procedure such as Adam \citep{Kingma2015}.
\end{remark}
\begin{remark}\label{rmk:y0_selection}
    We can choose our distribution over $Y_0$ relatively flexibly as this simply defines the test points for our CQC function (similarly to choosing the quantile level $\alpha$ in CQTE estimation). We commonly take $Y_0\sim Y|A=0$ with $Y_0\perp Z$ by simply choosing random untreated responses for each sample. Thus testing our CQC at typical $Y_0$ values. An experiment testing this choice is given in Appendix \ref{app:y0_sample}.
\end{remark}

Due to its more direct nature, this estimation procedure solves all three problems of the previous inversion approach discussed in Section \ref{sec:old_method}. Crucially, its explicit parameterisation of the CQC allows us to inform and constrain our model, as well as making our model more interpretable and significantly faster to sample from.
 In addition, since the estimation procedure operates directly on $\cqc_\param$, we might naturally suspect its accuracy to depend upon the complexity of the underlying CQC. We might also hope it retains the double-robustness property present in the previous approach. Below, we show that both of these properties hold.

\subsection{Accuracy results}\label{sec:finite_bounds}
As we intend to use gradient descent for our minimisation, a natural question is when is this procedure guaranteed to converge and at what rate does this convergence occur. We now make some restrictions on our model architecture which allow us to achieve this.
\begin{assumption}\label{ass:bdd_sgd_conv}
For all $y_0\in\responsespace,~\bm x\in\covariatespace,\param\in\Param$:
    \begin{enumerate}[(a)]    
        \item $a<\estpropensity(\bm x)<1-a$ for some $a>0$.
        \item $\cqc_\param$ is of the form $\cqc_{\param}(y_0|\bm x)=\param^\T\featurefunc(y_0,\bm x)$ for some feature function $\featurefunc:\responsespace\times\covariatespace\rightarrow\R^\paramdim$. 
        \item $\|\featurefunc(y_0,\bm x)\|\leq \rho$ for some $\rho>0$.
    \end{enumerate}
\end{assumption}
Assumption \ref{ass:bdd_sgd_conv}(a) assumes that we can bound our estimated propensity away from $\{0,1\}$, this is a common assumption within HTE literature and is not very restrictive due to the true propensity already being assumed to be bounded away from  $\{0,1\}$. Assumption \ref{ass:bdd_sgd_conv}(b) enforces convexity of our loss function w.r.t. $\param$ and bears similarity to the linear smoother framework used in \citet{Kallus2023,Kennedy2023a}. Importantly, this assumption does not confine us to linear CQC functional estimates as the form of $\featurefunc$ can be chosen freely, enabling the use of kernel methods via random Fourier features \citep{Avron17, Liu2022b, Rahimi2007} and other general architectures. Assumption \ref{ass:bdd_sgd_conv}(c) is required in order to control the rate at which our CQC estimate changes with respect to our parameter $\param$.

\begin{theorem}\label{thm:cqc_est_conv}
    Let $\minparam$ be the minimiser of our population loss as given in Definition \ref{def:loss_fn}. Suppose that Assumption \ref{ass:bdd_sgd_conv} holds and that $\|\minparam\|\leq B$ for some $B>0$.
    For $t\in[n]$, define $\param^{(t)}$ inductively by
    $\param^{(1)}=\bm 0$, $\param^{(t+\half)}=\param^{(t)}-\lr[t] v^{(t)}$, and $
        \param^{(t+1)}=\argmin_{\param:\|\param\|\leq B }\|\param-\param^{(t+\half)}\|$,
    with , $\lr[t]=\frac{B\propensitybound}{2\featurebound\sqrt{n}}$, and $v^{(t)}\coloneqq\estgradsampleloss(\param^{(t)},Y_0^{(t)},Z^{(t)})$. Finally, define our parameter estimate as $\estparam=\inv{n}\sum_{t=1}^n\param^{(t)}$.
    Then, if $\estpropensity,\estccdf$ are independent of $\lrbrc{\lrbr{Y^{(t)}_0,Z^{(t)}}}_{t=1}^n$, we have that
    \begin{align}\label{eq:cqc_convergence}
        \E[\loss(\estparam)-\loss(\minparam)]&\leq C_1\lrbr{1/\sqrt{n}+\nuisanceerror(\estpropensity,\estccdf[0], \estccdf[1])} \quad\text{with}\\
        \nuisanceerror(\estpropensity,\estccdf[0], \estccdf[1])\coloneqq&\sqrt{\E\lrbrs{\Big(\propensity(X)-\estpropensity(X)\Big)^2}\E\lrbrs{\sup_{y_0\in\responsespace, \zerone\in\{0,1\}}\lrbr{\ccdf(y_0|X)-\estccdf(y_0|X)}^2}}
    \end{align}
    where $C_1$ is a constant depending upon, $B,\propensitybound, \featurebound$. 
    Suppose further that the assumption in Proposition \ref{prop:loss_bound}(b) holds and that $\E[\featurefunc(Y_0,X)\featurefunc(Y_0,X)^\T]\geq\featurecorrlb$. If we instead take $\lr[t]=\inv{\densitylb\featurecorrlb n}$ then we have that
    \begin{align}\label{eq:cqc_strongconvergence}
        \E[\loss(\estparam)-\loss(\minparam)]\leq C_2\lrbr{\log(n)/n+\nuisanceerror(\estpropensity,\estccdf[0], \estccdf[1])}
    \end{align}
    where $C_2$ is a constant depending upon, $B,\propensitybound, \featurebound, \densitylb,\featurecorrlb$.
\end{theorem}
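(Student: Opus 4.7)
The plan is to treat Algorithm \ref{alg:cqc_est} as projected stochastic gradient descent on the convex objective $\loss$ with a biased gradient oracle, and to combine the standard projected-SGD regret analysis with a Neyman-orthogonality style bound on the bias introduced by the plugged-in nuisances $\estpropensity,\estccdf$. Since Assumption \ref{ass:bdd_sgd_conv}(b) makes $\cqc_\param(y_0|\bm x)=\param^\T\featurefunc(y_0,\bm x)$ linear in $\param$ and $\fixedsampleloss(\cdot,y_0,\bm x)$ has nondecreasing derivative $\contrastfunc(\cdot,y_0,\bm x)$, the pointwise loss is convex in $\param$ and hence so is $\loss$. Under Proposition \ref{prop:loss_bound}(b) together with $\E[\featurefunc\featurefunc^\T]\geq\featurecorrlb$, the Hessian $\nabla^2\loss(\param)=\E[\density[1](\cqc_\param(Y_0|X)|X)\,\featurefunc\featurefunc^\T]$ is bounded below by $\densitylb\featurecorrlb$, so $\loss$ is in addition $\densitylb\featurecorrlb$-strongly convex.

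Two preparatory bounds are needed. First, $\|v^{(t)}\|\leq G$ for a constant $G$ depending only on $\propensitybound$ and $\featurebound$, obtained directly from Assumption \ref{ass:bdd_sgd_conv}(a,c) and the fact that every indicator and CDF in \eqref{eq:grad_sample} lies in $[0,1]$. Second, and most importantly, is the \emph{bias identity}. Since $\estpropensity,\estccdf$ are built from $\rsampsplitone$ and so are independent of $(Y_0^{(t)},Z^{(t)})$, one can condition on them along with $(X^{(t)},Y_0^{(t)})$ and use unconfoundedness to obtain, for any $c$, the doubly robust cancellation
\begin{align*}
\E\!\left[\tfrac{A}{\estpropensity(X)}\bigl(\one\{Y\leq c\}-\estccdf[1](c|X)\bigr)+\estccdf[1](c|X)\,\Big|\,X\right]-\ccdf[1](c|X)=\tfrac{(\propensity(X)-\estpropensity(X))(\ccdf[1](c|X)-\estccdf[1](c|X))}{\estpropensity(X)},
\end{align*}
and an analogous identity for the $(1-A)/(1-\estpropensity(X))$ summand. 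Multiplying by $\featurefunc=\nabla_\param\cqc_\param$ and using Proposition \ref{prop:dr_grad_term}, the gradient bias becomes a sum of $\featurefunc$-weighted products of a propensity error and a CDF error, so two applications of Cauchy--Schwarz together with Assumption \ref{ass:bdd_sgd_conv}(a,c) yield
\begin{align*}
\bigl\|\E[v^{(t)}\mid\param^{(t)}]-\nabla\loss(\param^{(t)})\bigr\|\leq C\,\nuisanceerror(\estpropensity,\estccdf[0],\estccdf[1]).
\end{align*}
I expect this to be the main obstacle, as the product-of-errors structure relies on careful book-keeping of which conditional expectations make each summand of \eqref{eq:grad_sample} cancel, and is the step that delivers the doubly robust property of the final bound.

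With these inputs the remainder is a standard projected-SGD argument. Non-expansiveness of the projection onto $\{\param:\|\param\|\leq B\}$ and expanding $\|\param^{(t+\half)}-\minparam\|^2$ gives, after taking expectations, splitting $\langle\E[v^{(t)}\mid\param^{(t)}],\param^{(t)}-\minparam\rangle$ into the true gradient plus bias, bounding the bias inner product by $2BC\,\nuisanceerror$ via Cauchy--Schwarz, telescoping, and invoking convexity of $\loss$ with Jensen's inequality on $\estparam$,
\begin{align*}
\E[\loss(\estparam)-\loss(\minparam)]\leq\tfrac{2B^2}{n\lr[t]}+\tfrac{\lr[t]G^2}{2}+2BC\,\nuisanceerror(\estpropensity,\estccdf[0],\estccdf[1]).
\end{align*}
Choosing $\lr[t]=B\propensitybound/(2\featurebound\sqrt n)$ balances the initialisation and variance terms and produces \eqref{eq:cqc_convergence}. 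For \eqref{eq:cqc_strongconvergence} replace the convexity step by $\densitylb\featurecorrlb$-strong convexity, set $\lr[t]=1/(\densitylb\featurecorrlb n)$, and apply the classical averaged-SGD analysis for strongly convex losses; this improves the noise--initialisation contribution to $\mathcal O(\log n/n)$ while the bias term enters additively exactly as above.
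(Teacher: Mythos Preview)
The proposal is correct and follows essentially the same route as the paper: the paper likewise establishes convexity (respectively $\densitylb\featurecorrlb$-strong convexity via the Hessian $\E[\density[1](\cqc_\param(Y_0|X)|X)\featurefunc\featurefunc^\T]$), bounds $\|v^{(t)}\|$ from Assumption~\ref{ass:bdd_sgd_conv}(a,c), derives the product-of-errors bias identity by conditioning and applying Cauchy--Schwarz, and then plugs both into the standard projected-SGD regret bound (from \citet{shalevshwartz2014}, \S14.4.1) together with $\|\param^{(t)}-\minparam\|\leq 2B$. The only cosmetic difference is that the paper packages the SGD inequality and the bias bound as separate lemmas before combining them.
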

The proof is provided in Appendix \ref{app:acc_proofs}. An additional result giving high probability bounds of the same rate as \eqref{eq:cqc_convergence} is given by Proposition \ref{prop:cqc_est_conv_prob} in Appendix \ref{app:cqc_est_conv_prob}. The requirement for the nuisance parameter estimates to be independent of the data used for fitting the CQC motivates the sample-splitting procedure in Algorithm \ref{alg:cqc_est}. One could instead use a cross-fitting approach after sample-splitting and average the two CQC estimates which would lead to comparable theoretical results.   

Regarding the result, first we see that in both \eqref{eq:cqc_convergence} \& \eqref{eq:cqc_strongconvergence} we have \emph{double robustness}. This is because both of the nuisance parameter estimators can converge \emph{slower} than the leading term in the error while not affecting the overall convergence rate due to said errors multiplying. This is similar to other doubly robust approaches which have been presented for the CATE \citep{Kennedy2023b}, CQTE \citep{Kallus2023}, and CQC \citep{Givens2024} which all also derive their robustness results via a product of errors over the nuisance parameters. For the second result our requirement on the nuisance parameter estimation is stronger however as we need to obtain $\log(n)/n$ convergence on the product of the nuisance parameter estimates.
% This means we require reasonable estimation of our nuisance parameters to not hinder our overall estimate.

We also note that when our density is bounded below as in the second result, if $\minparam=\trueparam$ where $\cqc_{\trueparam}=\truecqc$, we have (using Proposition \ref{prop:loss_bound}) that
$\E[\{\cqc_{\estparam}(Y_0|X)-\truecqc(Y_0|X)\}^2]\leq\E[\loss(\estparam)-\loss(\minparam)]$. Hence if the nuisance term converges at the same rate as the leading term we get a convergence rate on the root mean square error (RMSE) of our CQC estimate of order $1/\sqrt{n}$ which is desirable. Furthermore from Assumption \ref{ass:bdd_sgd_conv} (c) this gives convergence of $\estparam$ to $\minparam$ of $1/\sqrt{n}$ as well.

\section{Simulated results}\label{sec:numerical_experiments}
We now illustrate the advantages of our approach by comparing it to two alternatives across multiple dimensions. First, we evaluate it against the previously proposed inverting CQC estimation method from \citet{Givens2024} (labelled "Inv. DR") to highlight the benefits of our direct CQC parameterisation. Second, we compare it to an inverse probability weighting (IPW) variant of our method, where $\gradsampleloss_\dr$ is replaced by its IPW counterpart (labelled "IPW"; see Appendix~\ref{app:ipw_approach} for details), to demonstrate the gains from our double robustness.
For each method we present an oracle version which uses the exact nuisance parameters ($\ccdf,\propensity$) and an estimated version that uses their estimated equivalents. Further details can be found in Appendix \ref{app:experimental_details}.
Further comparisons to the S-Learner approach, where $\estccdf[0],\estccdf[1]$ are used to directly produce our CQC estimate are given in Appendix \ref{app:add_experiments}.

Throughout each experiment, we take $X\sim N(0,I_d)$ for $d=10$, $Y|X=\!\bm x, \!A\!=\!\zerone\sim N(\sin(\pi\bm v^\T\bm x)+a\gamma\bm v^\T\bm x, 1)$ and $\propensity(\bm x)=\sigma(\bm v^\T\bm x)$ where $\bm v$ is a random vector in $\R^d$ with $\|\bm v\|=\sqrt{d}$, $\sigma$ is the sigmoid function, and $\gamma>0$ can be varied.
The sine term represents complexity in the marginal distributions as this an oscillating nonlinear change in the distribution. We thus have that the CCDFs contain the oscillating sine dependency over $\bm x$ while the CQC itself does not, simply being $\truecqc(y_0|\bm x)=\gamma\bm v^\T\bm x$.
% (see Figure \ref{fig:ill_example} for an illustration.) 
% Interestingly the CQTE and CCDF contrasting function do contain these sine terms meaning they fail to capture the simplicity of the treatment effect in this example. We include the $\bm v^\T\bm x$ term to ensure dependency between $Y$ and $A$ via $X$.

We test two distinct models for the CQC. The first, ``DR-Lin'', is a correctly specified linear model where we take $\cqc_\param(y_0|\bm x)=(\param_{\scale}^\T\bm x+\theta_{\scale,0})(y_0)+(\param_{\shift}^\T\bm x+\theta_{\shift,0})$ so that $\param_{\scale}, \param_{\shift}$ represent the scaled and shift components of the CQC respectively. The second, ``DR-NN'' is a full connected Neural Network (NN) with ReLU activations and 2 hidden layers each of width 20.

We fit the propensity score via logistic regression and the CCDFs using kernel CCDF estimation in order to effectively model the sine terms. For each of the following experiments, 100 runs are repeated and mean absolute error of our CQC estimate alongside 95\% confidence intervals are presented. Code to reproduce all experiments is provided in the Supplementary Materials.
Further experiments with different distributional settings are given in Appendix \ref{app:1dim_experiments} and experiments exploring sensitivity of performance to hyperparameters are given in Appendix \ref{app:vary_hyperparameter}.

\subsection{Increasing steepness of the CQC}
 For the first experiment we increase $\gamma$ to increase the slope of the CQC. As our current approach is able to model the CQC directly as a linear function, it should be minimally affected by the increase in slope while methods which cannot model this linearity will struggle. 
 Figure \ref{fig:10Dim_varyslope} shows that our directly parameterised approach (Est. DR-Lin) does indeed perform stronger especially at larger slopes. We see that our NN approach also performs comparably to the linear model. While our estimated versions (Est. DR-Lin/NN) are somewhat worse than their oracle counterparts, they still outperform the oracle inverting method.

\subsection{Increasing the error of nuisance parameters}
We further investigate how errors in nuisance parameter estimation affect our estimator's accuracy. To do this, we add increasing levels of biased, random noise to the logits of the original nuisance parameter estimates. Results are shown in Figure \ref{fig:10Dim_varynuisanceerror}.
We observe that both parameterisations of our method (Est. DR-Lin, Est. DR-NN) perform strongest with the linear model performing marginally better. We also see that the inverting approach (Est. Inv. DR) performs well under increasing nuisance parameter error. Interestingly, the inverting estimator appears somewhat less sensitive to this error than our approach. Nonetheless, our gradient-based approaches (Est. DR-Lin/NN) perform comparably or better across almost all levels of added noise. Additional experiments estimating each nuisance parameter separately is given in Appendix \ref{app:separate_nuisance_perf}.

\subsection{Increasing sample size}
Finally we plot the error of these estimation procedures over various sample sizes which can be found in Figure \ref{fig:10Dim_varysamplesize}. We can see that, once again, our approach performs best, achieving the lowest mean error across all sample sizes and demonstrating consistent improvement as sample size increases.

\begin{figure}
    \centering
    \begin{subfigure}{0.32\linewidth}
        \centering
        \includegraphics[width=3\linewidth]{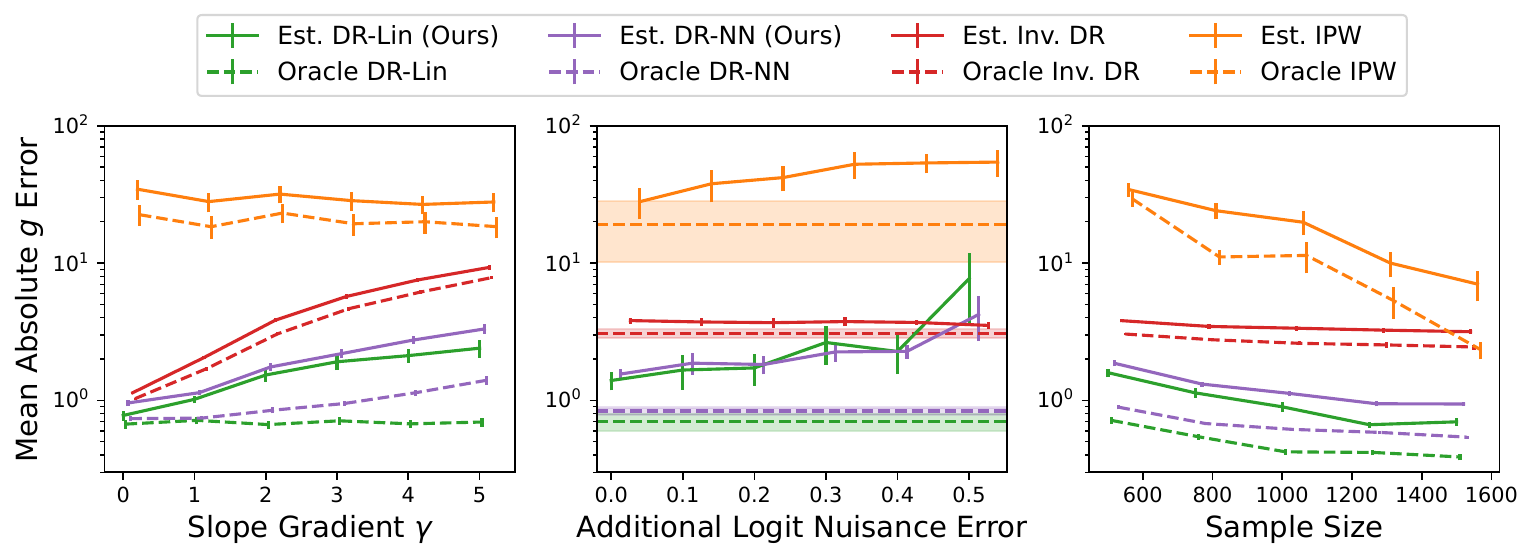}
        \caption{Varying CQC slope steepness w.r.t. $\bm x$ with sample size 500.}
        \label{fig:10Dim_varyslope}
    \end{subfigure}
    \hfill
    \begin{subfigure}{0.32\linewidth}
        \caption{Varying nuisance parameter error with sample size  500 and $\gamma=2$.}
        \label{fig:10Dim_varynuisanceerror}
    \end{subfigure}
    \hfill
    \begin{subfigure}{0.32\linewidth}
        \caption{Varying sample size with $\gamma=2$}
        \label{fig:10Dim_varysamplesize}
    \end{subfigure}
    \hfill
    \caption{Mean absolute error of CQC estimate for various methods with 95\% C.I.s over 100 runs.}
    \label{fig:10Dim}
\end{figure}
To summarise, across all our results we see that our approach is the strongest for both a linear and NN based CQC model with substantial gains over the existing inverting approach especially when the slope of the CQC is larger.
We see that the linear CQC model is marginally stronger than the NN model throughout which we would expect due to it encompassing the true CQC while being a simpler model.
Overall, these results are promising as they suggest that not only is our approach strong, but it maintains much of this strength even when we do not know the explicit parametric form of the CQC.

\section{Real world setting}\label{sec:real_world}
We also apply our results to real world data to demonstrate their interpretability. Here we look at an employment example which has been studied in multiple heterogeneous treatment effect examples \citep{Autor2010, Autor2017, Powell2020, Givens2024}. 
Here, the intervention ($A = 1$) corresponds to enrolment in an employment programme, and the outcome ($Y$) represents total earnings in a two-year period in thousands of dollars.

For our estimation, we use the linear CQC model described in Section \ref{sec:numerical_experiments}. We then subtract $y_0$ from $\estcqc=\cqc_{\estparam}$, to estimate $\Delta(y_0|\bm x)\coloneqq \truecqc(y_0|\bm x)-y_0$. This enables easier interpretation as positive and negative values of $\Delta$ are associated with benefit and detriment of the intervention respectively.
\begin{figure}[htbp]
    \centering
    \begin{subfigure}{0.38\textwidth}
    \includegraphics[width=\textwidth]{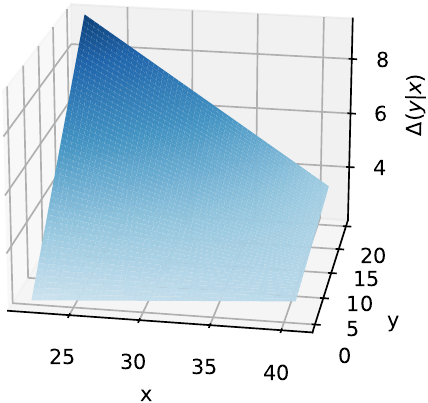}    
    \end{subfigure}
    \begin{subfigure}{0.52\textwidth}
    \includegraphics[width=\textwidth]{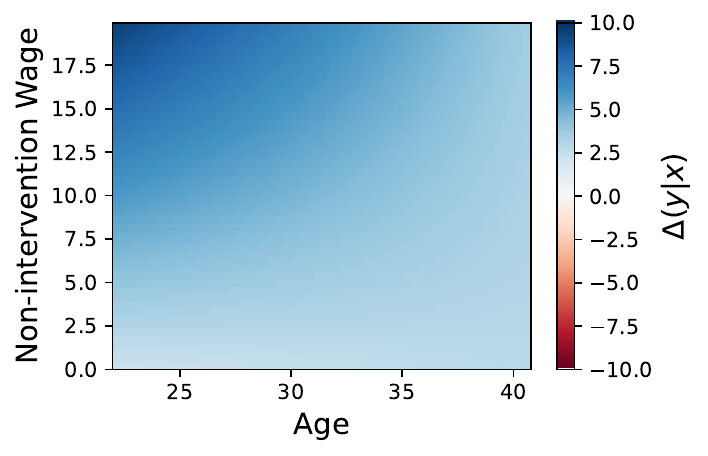}    
    \end{subfigure}
    \caption{Surface and heat plot of $\quantilediff^*(y|\bm x)$ for our employment data with $X=$Age, $Y$=Income.}
    \label{fig:Employment}
\end{figure}

Figure \ref{fig:Employment} shows this estimate for various values of $(y,\bm x)$.
From these results we see an interesting pattern. Across all ages, the intervention had the most impact for those with high non-intervention earnings. The change in wage improvement as a function of non-intervention wages seems to decrease as age increases however. In other words for younger participants, the distribution of wages seems to multiplicatively scale while for older participants, the impact of treatment seems to be better represented by a more uniform shift. We examine the parameters of our estimate directly in Appendix \ref{app:employment}. Another example examining the effect of a treatment on colon cancer remission is presented in Appendix \ref{app:colon} where we use a neural network (NN) to model a nonlinear CQC function. 

\section{Limitations and future work}\label{sec:limitations}
One limitation of our approach is that while our direct estimator performs best overall, there is evidence to suggest it is practically more sensitive to nuisance parameter estimation error than the existing inversion based estimation approach. This is somewhat mirrored in Theorem \ref{thm:cqc_est_conv}, where our double robustness is with respect to error on our loss function rather than directly on error of the CQC. Future work could investigate these two properties and their relationship more thoroughly, with the potential to improve upon them further.

Additionally, while our estimator is direct in terms of exclusively estimating our estimand of interest, it does not have the form of estimating the estimand through a conditional expectation as is common for other estimators (e.g. \citet{Kennedy2023b, Kallus2023}.)
Such an estimator then has the advantage of being estimable by various non-parametric procedures for conditional expectation estimation while also being estimable parametrically via least squares. It also has the advantage of giving accuracy results directly in terms of the estimand of interest which we are only able to do under certain settings. As such, a future direction would be to explore whether a doubly robust estimator of this form could be produced for the CQC.

Finally, while our current convergence results apply to a good number of parametric and nonparametric CQC models, later work could expand these results to CQC estimates which are not linear with respect to their parameters, such as NNs \citep{Shi2019} or Bayesian additive regression trees \citep{Hill2011,Green2012, Kunzel2019}.
\section{Conclusion}
To conclude, we have proposed the first direct estimation procedure for the CQC, an estimand which aims to bridge the gap between the CATE and the CQTE. We have demonstrated the efficacy of this new estimation procedure both theoretically and empirically, showing it to outperform existing approaches. Furthermore, we have highlighted its ability to allow for direct parameterisation of the CQC and demonstrated its benefit in terms of both empirical performance and interpretability in real-world scenarios. Overall, this represents an improvement over existing CQC methods, further enhancing the utility and real-world applicability of this emerging treatment effect estimand.

\bibliographystyle{apalike}
\bibliography{ref}

@article{Abadie2002,
  title = {Instrumental Variables Estimates of the Effect of Subsidized Training on the Quantiles of Trainee Earnings},
  author = {Abadie, Alberto and Angrist, Joshua and Imbens, Guido},
  year = {2002},
  journal = {Econometrica},
  volume = {70},
  number = {1},
  eprint = {2692164},
  eprinttype = {jstor},
  pages = {91--117},
  publisher = {{[Wiley, Econometric Society]}},
  issn = {00129682, 14680262},
  urldate = {2023-11-09},
  abstract = {This paper reports estimates of the effects of JTPA training programs on the distribution of earnings. The estimation uses a new instrumental variable (IV) method that measures program impacts on quantiles. The quantile treatment effects (QTE) estimator reduces to quantile regression when selection for treatment is exogenously determined. QTE can be computed as the solution to a convex linear programming problem, although this requires first-step estimation of a nuisance function. We develop distribution theory for the case where the first step is estimated nonparametrically. For women, the empirical results show that the JTPA program had the largest proportional impact at low quantiles. Perhaps surprisingly, however, JTPA training raised the quantiles of earnings for men only in the upper half of the trainee earnings distribution.},
  file = {/home/cn21903/Zotero/storage/H56ZMLI6/Abadie et al. - 2002 - Instrumental variables estimates of the effect of .pdf}
}

@InProceedings{Avron17,
  title = 	 {Random {F}ourier Features for Kernel Ridge Regression: Approximation Bounds and Statistical Guarantees},
  author =       {Haim Avron and Michael Kapralov and Cameron Musco and Christopher Musco and Ameya Velingker and Amir Zandieh},
  booktitle = 	 {Proceedings of the 34th International Conference on Machine Learning},
  pages = 	 {253--262},
  year = 	 {2017},
  editor = 	 {Precup, Doina and Teh, Yee Whye},
  volume = 	 {70},
  series = 	 {Proceedings of Machine Learning Research},
  month = 	 {06--11 Aug},
  publisher =    {PMLR},
  pdf = 	 {http://proceedings.mlr.press/v70/avron17a/avron17a.pdf},
  url = 	 {https://proceedings.mlr.press/v70/avron17a.html},
  abstract = 	 {Random Fourier features is one of the most popular techniques for scaling up kernel methods, such as kernel ridge regression. However, despite impressive empirical results, the statistical properties of random Fourier features are still not well understood. In this paper we take steps toward filling this gap. Specifically, we approach random Fourier features from a spectral matrix approximation point of view, give tight bounds on the number of Fourier features required to achieve a spectral approximation, and show how spectral matrix approximation bounds imply statistical guarantees for kernel ridge regression.}
}

@misc{Sun2025,
	title = {Minimax rate-optimal inference for individualized quantile treatment effects in high-dimensional models},
	url = {https://arxiv.org/abs/2503.18523},
	author = {Sun, Jiachen and Xia, Yin},
	year = {2025},
	note = {arXiv: 2503.18523 [math.ST]},
}

@article{Belloni2017,
	title = {Program {Evaluation} and {Causal} {Inference} with {High}-dimensional {Data}},
	volume = {85},
	issn = {00129682, 14680262},
	url = {http://www.jstor.org/stable/44155422},
	abstract = {In this paper, we provide efficient estimators and honest confidence bands for a variety of treatment effects including local average (LATE) and local quantile treatment effects (LQTE) in data-rich environments. We can handle very many control variables, endogenous receipt of treatment, heterogeneous treatment effects, and function-valued outcomes. Our framework covers the special case of exogenous receipt of treatment, either conditional on controls or unconditionally as in randomized control trials. In the latter case, our approach produces efficient estimators and honest bands for (functional) average treatment effects (ATE) and quantile treatment effects (QTE). To make informative inference possible, we assume that key reduced-form predictive relationships are approximately sparse. This assumption allows the use of regularization and selection methods to estimate those relations, and we provide methods for postregularization and post-selection inference that are uniformly valid (honest) across a wide range of models. We show that a key ingredient enabling honest inference is the use of orthogonal or doubly robust moment conditions in estimating certain reducedform functional parameters. We illustrate the use of the proposed methods with an application to estimating the effect of 401(k) eligibility and participation on accumulated assets. The results on program evaluation are obtained as a consequence of more general results on honest inference in a general moment-condition framework, which arises from structural equation models in econometrics. Here, too, the crucial ingredient is the use of orthogonal moment conditions, which can be constructed from the initial moment conditions. We provide results on honest inference for (function-valued) parameters within this general framework where any high-quality, machine learning methods (e.g., boosted trees, deep neural networks, random forest, and their aggregated and hybrid versions) can be used to learn the nonparametric/high-dimensional components of the model. These include a number of supporting auxiliary results that are of major independent interest: namely, we (1) prove uniform validity of a multiplier bootstrap, (2) offer a uniformly valid functional delta method, and (3) provide results for sparsitybased estimation of regression functions for function-valued outcomes.},
	number = {1},
	urldate = {2025-11-19},
	journal = {Econometrica},
	author = {Belloni, A. and Chernozhukov, V. and Fernández-Val, I. and Hansen, C.},
	year = {2017},
	note = {Publisher: [Wiley, The Econometric Society]},
	pages = {233--298},
}

@misc{Foster2023,
	title = {Orthogonal statistical learning},
	url = {https://arxiv.org/abs/1901.09036},
	author = {Foster, Dylan J. and Syrgkanis, Vasilis},
	year = {2023},
	note = {arXiv: 1901.09036 [math.ST]},
}

@inproceedings{Shalit2017,
	series = {Proceedings of machine learning research},
	title = {Estimating individual treatment effect: generalization bounds and algorithms},
	volume = {70},
	url = {https://proceedings.mlr.press/v70/shalit17a.html},
	abstract = {There is intense interest in applying machine learning to problems of causal inference in fields such as healthcare, economics and education. In particular, individual-level causal inference has important applications such as precision medicine. We give a new theoretical analysis and family of algorithms for predicting individual treatment effect (ITE) from observational data, under the assumption known as strong ignorability. The algorithms learn a “balanced” representation such that the induced treated and control distributions look similar, and we give a novel and intuitive generalization-error bound showing the expected ITE estimation error of a representation is bounded by a sum of the standard generalization-error of that representation and the distance between the treated and control distributions induced by the representation. We use Integral Probability Metrics to measure distances between distributions, deriving explicit bounds for the Wasserstein and Maximum Mean Discrepancy (MMD) distances. Experiments on real and simulated data show the new algorithms match or outperform the state-of-the-art.},
	booktitle = {Proceedings of the 34th international conference on machine learning},
	publisher = {PMLR},
	author = {Shalit, Uri and Johansson, Fredrik D. and Sontag, David},
	editor = {Precup, Doina and Teh, Yee Whye},
	month = aug,
	year = {2017},
	pages = {3076--3085},
}

@incollection{Robins2008,
	series = {Inst. {Math}. {Stat}. {Collect}.},
	title = {Higher order influence functions and minimax estimation of nonlinear functionals},
	language = {English},
	booktitle = {Probability and statistics: essays in honor of {David} {A}. {Freedman}},
	publisher = {Inst. Math. Statist.},
	author = {Robins, J. and Li, L. and Tchetgen, E. and van der Vaart, A.W.},
	year = {2008},
	note = {Number: 2},
	pages = {335--421},
	annote = {MR2459958},
}

@inproceedings{Kingma2015,
	title = {Adam: {A} method for stochastic optimization},
	url = {http://arxiv.org/abs/1412.6980},
	booktitle = {3rd international conference on learning representations, {ICLR} 2015, san diego, {CA}, {USA}, may 7-9, 2015, conference track proceedings},
	author = {Kingma, Diederik P. and Ba, Jimmy},
	editor = {Bengio, Yoshua and LeCun, Yann},
	year = {2015},
	note = {tex.bibsource: dblp computer science bibliography, https://dblp.org
tex.timestamp: Thu, 25 Jul 2019 14:25:37 +0200},
}

@incollection{Wainwright2019a,
	address = {Cambridge},
	series = {Cambridge series in statistical and probabilistic mathematics},
	title = {Basic tail and concentration bounds},
	booktitle = {High-dimensional statistics: a non-asymptotic viewpoint},
	publisher = {Cambridge University Press},
	author = {Wainwright, Martin J.},
	year = {2019},
	pages = {21--57},
}

@article{Autor2010,
	title = {Do temporary-help jobs improve labor market outcomes for low-skilled workers? {Evidence} from "{Work} {First}"},
	volume = {2},
	url = {https://www.aeaweb.org/articles?id=10.1257/app.2.3.96},
	doi = {10.1257/app.2.3.96},
	number = {3},
	journal = {American Economic Journal: Applied Economics},
	author = {Autor, David H. and Houseman, Susan N.},
	month = jul,
	year = {2010},
	pages = {96--128},
	file = {Autor and Houseman - 2010 - Do temporary-help jobs improve labor market outcom.pdf:/home/cn21903/Zotero/storage/YCWLPUZF/Autor and Houseman - 2010 - Do temporary-help jobs improve labor market outcom.pdf:application/pdf},
}

@article{Autor2017,
	title = {The effect of work first job placements on the distribution of earnings: {An} instrumental variable quantile regression approach},
	volume = {35},
	url = {https://doi.org/10.1086/687522},
	doi = {10.1086/687522},
	abstract = {Government employment programs for low-skilled workers typically emphasize rapid placement of participants into jobs, of which many are temporary-help jobs. Using data from Detroit’s welfare-to-work program and the Chernozhukov-Hansen instrumental variables quantile regression method, we find that neither direct-hire nor temporary-help job placements significantly affect the lower tail of the earnings distribution. In the upper tail, direct-hire placements yield sizable earnings increases for over half of participants, while temporary-help placements yield significant earnings losses at higher quantiles. Our results cast doubt on the efficacy of employment programs’ exclusive focus on rapid job placement and their widespread reliance on temporary-help placements.},
	number = {1},
	journal = {Journal of Labor Economics},
	author = {Autor, David H. and Houseman, Susan N. and Kerr, Sari Pekkala},
	year = {2017},
	pages = {149--190},
	file = {Autor et al. - 2017 - The effect of work first job placements on the dis.pdf:/home/cn21903/Zotero/storage/MXEF92IN/Autor et al. - 2017 - The effect of work first job placements on the dis.pdf:application/pdf},
}

@article{Hill2011,
	title = {Bayesian nonparametric modeling for causal inference},
	volume = {20},
	url = {https://doi.org/10.1198/jcgs.2010.08162},
	doi = {10.1198/jcgs.2010.08162},
	number = {1},
	journal = {Journal of Computational and Graphical Statistics},
	author = {Hill, Jennifer L.},
	year = {2011},
	pages = {217--240},
	file = {Hill - 2011 - Bayesian nonparametric modeling for causal inferen.pdf:/home/cn21903/Zotero/storage/QFVWJ6IH/Hill - 2011 - Bayesian nonparametric modeling for causal inferen.pdf:application/pdf},
}

@article{Birge1989,
	title = {The {Grenader} {Estimator}: {A} {Nonasymptotic} {Approach}},
	volume = {17},
	issn = {0090-5364},
	url = {https://libkey.io/10.1214/aos/1176347380},
	doi = {10.1214/aos/1176347380},
	number = {4},
	urldate = {2025-05-08},
	journal = {The Annals of Statistics},
	author = {Birge, Lucien},
	month = dec,
	year = {1989},
}

@inproceedings{Givens2024,
	title = {Conditional outcome equivalence: a quantile alternative to {CATE}},
	volume = {37},
	url = {https://proceedings.neurips.cc/paper_files/paper/2024/file/ba1293b663ddd4a29c6854e4d3bf766a-Paper-Conference.pdf},
	booktitle = {Advances in neural information processing systems},
	publisher = {Curran Associates, Inc.},
	author = {Givens, Josh and Reeve, Henry W J and Liu, Song and Reluga, Katarzyna},
	editor = {Globerson, A. and Mackey, L. and Belgrave, D. and Fan, A. and Paquet, U. and Tomczak, J. and Zhang, C.},
	year = {2024},
	pages = {102634--102671},
}

@article{Firpo2007,
  title = {Efficient Semiparametric Estimation of Quantile Treatment Effects},
  author = {Firpo, Sergio},
  year = {2007},
  journal = {Econometrica : journal of the Econometric Society},
  volume = {75},
  number = {1},
  eprint = {4123114},
  eprinttype = {jstor},
  pages = {259--276},
  publisher = {{[Wiley, Econometric Society]}},
  issn = {00129682, 14680262},
  urldate = {2023-11-09},
  abstract = {This paper develops estimators for quantile treatment effects under the identifying restriction that selection to treatment is based on observable characteristics. Identification is achieved without requiring computation of the conditional quantiles of the potential outcomes. Instead, the identification results for the marginal quantiles lead to an estimation procedure for the quantile treatment effect parameters that has two steps: nonparametric estimation of the propensity score and computation of the difference between the solutions of two separate minimization problems. Root-N consistency, asymptotic normality, and achievement of the semiparametric efficiency bound are shown for that estimator. A consistent estimation procedure for the variance is also presented. Finally, the method developed here is applied to evaluation of a job training program and to a Monte Carlo exercise. Results from the empirical application indicate that the method works relatively well even for a data set with limited overlap between treated and controls in the support of covariates. The Monte Carlo study shows that, for a relatively small sample size, the method produces estimates with good precision and low bias, especially for middle quantiles.},
  file = {/home/cn21903/Zotero/storage/RB85A69Y/Firpo - 2007 - Efficient semiparametric estimation of quantile tr.pdf}
}

@article{Green2012,
	title = {Modeling heterogeneous treatment effects in survey experiments with bayesian additive regression trees},
	volume = {76},
	issn = {0033-362X},
	url = {https://doi.org/10.1093/poq/nfs036},
	doi = {10.1093/poq/nfs036},
	abstract = {Survey experimenters routinely test for systematically varying treatment effects by using interaction terms between the treatment indicator and covariates. Parametric models, such as linear or logistic regression, are currently used to search for systematic treatment effect heterogeneity but suffer from several shortcomings; in particular, the potential for bias due to model misspecification and the large amount of discretion they introduce into the analysis of experimental data. Here, we explicate what we believe to be a better approach. Drawing on the statistical learning literature, we discuss Bayesian Additive Regression Trees (BART), a method for analyzing treatment effect heterogeneity. BART automates the detection of nonlinear relationships and interactions, thereby reducing researchers’ discretion when analyzing experimental data. These features make BART an appealing “off-the-shelf” tool for survey experimenters who want to model systematic treatment effect heterogeneity in a flexible and robust manner. In order to illustrate how BART can be used to detect and model heterogeneous treatment effects, we reanalyze a well-known survey experiment on welfare attitudes from the General Social Survey.},
	number = {3},
	journal = {Public Opinion Quarterly},
	author = {Green, Donald P. and Kern, Holger L.},
	month = sep,
	year = {2012},
	note = {tex.eprint: https://academic.oup.com/poq/article-pdf/76/3/491/5350131/nfs036.pdf},
	pages = {491--511},
}

@inproceedings{Shi2019,
	title = {Adapting neural networks for the estimation of treatment effects},
	volume = {32},
	url = {https://proceedings.neurips.cc/paper_files/paper/2019/file/8fb5f8be2aa9d6c64a04e3ab9f63feee-Paper.pdf},
	booktitle = {Advances in neural information processing systems},
	publisher = {Curran Associates, Inc.},
	author = {Shi, Claudia and Blei, David and Veitch, Victor},
	editor = {Wallach, H. and Larochelle, H. and Beygelzimer, A. and dAlché-Buc, F. and Fox, E. and Garnett, R.},
	year = {2019},
}

@misc{Melnychuk2025,
	title = {Orthogonal representation learning for estimating causal quantities},
	url = {https://arxiv.org/abs/2502.04274},
	author = {Melnychuk, Valentyn and Frauen, Dennis and Schweisthal, Jonas and Feuerriegel, Stefan},
	year = {2025},
	note = {arXiv: 2502.04274 [cs.LG]},
}

@misc{Kennedy2023,
  title = {Minimax Rates for Heterogeneous Causal Effect Estimation},
  author = {Kennedy, Edward H. and Balakrishnan, Sivaraman and Robins, James M. and Wasserman, Larry},
  year = {2023},
  eprint = {2203.00837},
  primaryclass = {math.ST},
  archiveprefix = {arxiv},
  file = {/home/cn21903/Zotero/storage/5YMF97QS/Kennedy et al. - 2023 - Minimax rates for heterogeneous causal effect esti.pdf}
}

@misc{Kennedy2023a,
  title = {Semiparametric Doubly Robust Targeted Double Machine Learning: A Review},
  author = {Kennedy, Edward H.},
  year = {2023},
  eprint = {2203.06469},
  primaryclass = {stat.ME},
  archiveprefix = {arxiv},
  file = {/home/cn21903/Zotero/storage/S2GVV2UN/Kennedy - 2023 - Semiparametric doubly robust targeted double machi.pdf}
}

@article{Kennedy2023b,
	title = {Towards optimal doubly robust estimation of heterogeneous causal effects},
	volume = {17},
	url = {https://doi.org/10.1214/23-EJS2157},
	doi = {10.1214/23-EJS2157},
	abstract = {Heterogeneous effect estimation is crucial in causal inference, with applications across medicine and social science. Many methods for estimating conditional average treatment effects (CATEs) have been proposed, but there are gaps in understanding if and when such methods are optimal. This is especially true when the CATE has nontrivial structure (e.g., smoothness or sparsity). Our work contributes in several ways. First, we study a two-stage doubly robust CATE estimator and give a generic error bound, which yields rates faster than much of the literature. We apply the bound to derive error rates in smooth nonparametric models, and give sufficient conditions for oracle efficiency. Along the way we give a general error bound for regression with estimated outcomes; this is the second main contribution. The third contribution is aimed at understanding the fundamental statistical limits of CATE estimation. To that end, we propose and study a local polynomial adaptation of double-residual regression. We show that this estimator can be oracle efficient under even weaker conditions, and we conjecture that they are minimal in a minimax sense. We go on to give error bounds in the non-trivial regime where oracle rates cannot be achieved. Some finite-sample properties are explored with simulations.},
	number = {2},
	journal = {Electronic Journal of Statistics},
	author = {Kennedy, Edward H.},
	year = {2023},
	note = {Institute of Mathematical Statistics and Bernoulli Society},
	keywords = {Conditional effects, influence function, Minimax rate, Nonparametric regression},
	pages = {3008 -- 3049},
	file = {Kennedy - 2023 - Towards optimal doubly robust estimation of hetero.pdf:/home/cn21903/Zotero/storage/7MHVBZC4/Kennedy - 2023 - Towards optimal doubly robust estimation of hetero.pdf:application/pdf},
}

@article{Kunzel2019,
  title = {Metalearners for Estimating Heterogeneous Treatment Effects Using Machine Learning},
  author = {K{\"u}nzel, S{\"o}ren R. and Sekhon, Jasjeet S. and Bickel, Peter J. and {Bin Yu}},
  year = {2019},
  journal = {Proceedings of the National Academy of Sciences},
  volume = {116},
  number = {10},
  eprint = {https://www.pnas.org/doi/pdf/10.1073/pnas.1804597116},
  pages = {4156--4165},
  doi = {10.1073/pnas.1804597116},
  abstract = {There is growing interest in estimating and analyzing heterogeneous treatment effects in experimental and observational studies. We describe a number of metaalgorithms that can take advantage of any supervised learning or regression method in machine learning and statistics to estimate the conditional average treatment effect (CATE) function. Metaalgorithms build on base algorithms{\textemdash}such as random forests (RFs), Bayesian additive regression trees (BARTs), or neural networks{\textemdash}to estimate the CATE, a function that the base algorithms are not designed to estimate directly. We introduce a metaalgorithm, the X-learner, that is provably efficient when the number of units in one treatment group is much larger than in the other and can exploit structural properties of the CATE function. For example, if the CATE function is linear and the response functions in treatment and control are Lipschitz-continuous, the X-learner can still achieve the parametric rate under regularity conditions. We then introduce versions of the X-learner that use RF and BART as base learners. In extensive simulation studies, the X-learner performs favorably, although none of the metalearners is uniformly the best. In two persuasion field experiments from political science, we demonstrate how our X-learner can be used to target treatment regimes and to shed light on underlying mechanisms. A software package is provided that implements our methods.},
  file = {/home/cn21903/Zotero/storage/EXUXXVMS/Künzel et al. - 2019 - Metalearners for estimating heterogeneous treatmen.pdf}
}

@article{Rubin2005,
author = {Donald B Rubin},
title = {Causal Inference Using Potential Outcomes},
journal = {Journal of the American Statistical Association},
volume = {100},
number = {469},
pages = {322--331},
year = {2005},
publisher = {Taylor \& Francis},
doi = {10.1198/016214504000001880},
URL = {https://doi.org/10.1198/016214504000001880},
eprint = {https://doi.org/10.1198/016214504000001880},
abstract = { Causal effects are defined as comparisons of potential outcomes under different treatments on a common set of units. Observed values of the potential outcomes are revealed by the assignment mechanismâ€”a probabilistic model for the treatment each unit receives as a function of covariates and potential outcomes. Fisher made tremendous contributions to causal inference through his work on the design of randomized experiments, but the potential outcomes perspective applies to other complex experiments and nonrandomized studies as well. As noted by Kempthorne in his 1976 discussion of Savage's Fisher lecture, Fisher never bridged his work on experimental design and his work on parametric modeling, a bridge that appears nearly automatic with an appropriate view of the potential outcomes framework, where the potential outcomes and covariates are given a Bayesian distribution to complete the model specification. Also, this framework crisply separates scientific inference for causal effects and decisions based on such inference, a distinction evident in Fisher's discussion of tests of significance versus tests in an accept/reject framework. But Fisher never used the potential outcomes framework, originally proposed by Neyman in the context of randomized experiments, and as a result he provided generally flawed advice concerning the use of the analysis of covariance to adjust for posttreatment concomitants in randomized trials. }
}

@article{Bitler2006,
	title = {What mean impacts miss: {Distributional} effects of welfare reform experiments},
	volume = {96},
	url = {https://www.aeaweb.org/articles?id=10.1257/aer.96.4.988},
	doi = {10.1257/aer.96.4.988},
	number = {4},
	journal = {American Economic Review},
	author = {Bitler, Marianne P. and Gelbach, Jonah B. and Hoynes, Hilary W.},
	month = sep,
	year = {2006},
	pages = {988--1012},
}

@incollection{ShalevShwartz2014,
	title = {Stochastic {Gradient} {Descent}},
	booktitle = {Understanding {Machine} {Learning}: {From} {Theory} to {Algorithms}},
	publisher = {Cambridge University Press},
	author = {Shalev-Shwartz, Shai and Ben-David, Shai},
	year = {2014},
	pages = {150--166},
}

@inproceedings{Kallus2023,
	series = {Proceedings of machine learning research},
	title = {Robust and agnostic learning of conditional distributional treatment effects},
	volume = {206},
	url = {https://proceedings.mlr.press/v206/kallus23a.html},
	abstract = {The conditional average treatment effect (CATE) is the best measure of individual causal effects given baseline covariates. However, the CATE only captures the (conditional) average, and can overlook risks and tail events, which are important to treatment choice. In aggregate analyses, this is usually addressed by measuring the distributional treatment effect (DTE), such as differences in quantiles or tail expectations between treatment groups. Hypothetically, one can similarly fit conditional quantile regressions in each treatment group and take their difference, but this would not be robust to misspecification or provide agnostic best-in-class predictions. We provide a new robust and model-agnostic methodology for learning the conditional DTE (CDTE) for a class of problems that includes conditional quantile treatment effects, conditional super-quantile treatment effects, and conditional treatment effects on coherent risk measures given by f-divergences. Our method is based on constructing a special pseudo-outcome and regressing it on covariates using any regression learner. Our method is model-agnostic in that it can provide the best projection of CDTE onto the regression model class. Our method is robust in that even if we learn these nuisances nonparametrically at very slow rates, we can still learn CDTEs at rates that depend on the class complexity and even conduct inferences on linear projections of CDTEs. We investigate the behavior of our proposal in simulations, as well as in a case study of 401(k) eligibility effects on wealth.},
	booktitle = {Proceedings of the 26th international conference on artificial intelligence and statistics},
	publisher = {PMLR},
	author = {Kallus, Nathan and Oprescu, Miruna},
	editor = {Ruiz, Francisco and Dy, Jennifer and van de Meent, Jan-Willem},
	month = apr,
	year = {2023},
	pages = {6037--6060},
	file = {Kallus and Oprescu - 2023 - Robust and agnostic learning of conditional distri.pdf:/home/cn21903/Zotero/storage/XAJVK9SA/Kallus and Oprescu - 2023 - Robust and agnostic learning of conditional distri.pdf:application/pdf},
}

@article{Lei2021,
  title={Conformal inference of counterfactuals and individual treatment effects},
  author={Lei, Lihua and Cand{\`e}s, Emmanuel J},
  journal={Journal of the Royal Statistical Society, Series B},
  volume={83},
  number={5},
  pages={911--938},
  year={2021},
  publisher={Oxford University Press}
}

@article{Liu2022b,
  author={Liu, Fanghui and Huang, Xiaolin and Chen, Yudong and Suykens, Johan A. K.},
  journal={IEEE Transactions on Pattern Analysis and Machine Intelligence}, 
  title={Random Features for Kernel Approximation: A Survey on Algorithms, Theory, and Beyond}, 
  year={2022},
  volume={44},
  number={10},
  pages={7128-7148},
  keywords={Kernel;Approximation algorithms;Taxonomy;Scalability;Risk management;Prediction algorithms;Loss measurement;Random features;kernel approximation;generalization properties;over-parameterized models},
  doi={10.1109/TPAMI.2021.3097011}
}

@article{Nie2020,
  title = {Quasi-Oracle Estimation of Heterogeneous Treatment Effects},
  author = {Nie, X and Wager, S},
  year = {2020},
  month = sep,
  journal = {Biometrika},
  volume = {108},
  number = {2},
  eprint = {https://academic.oup.com/biomet/article-pdf/108/2/299/37938939/asaa076.pdf},
  pages = {299--319},
  issn = {0006-3444},
  doi = {10.1093/biomet/asaa076},
  abstract = {Flexible estimation of heterogeneous treatment effects lies at the heart of many statistical applications, such as personalized medicine and optimal resource allocation. In this article we develop a general class of two-step algorithms for heterogeneous treatment effect estimation in observational studies. First, we estimate marginal effects and treatment propensities to form an objective function that isolates the causal component of the signal. Then, we optimize this data-adaptive objective function. The proposed approach has several advantages over existing methods. From a practical perspective, our method is flexible and easy to use: in both steps, any loss-minimization method can be employed, such as penalized regression, deep neural networks, or boosting; moreover, these methods can be fine-tuned by cross-validation. Meanwhile, in the case of penalized kernel regression, we show that our method has a quasi-oracle property. Even when the pilot estimates for marginal effects and treatment propensities are not particularly accurate, we achieve the same error bounds as an oracle with prior knowledge of these two nuisance components. We implement variants of our approach based on penalized regression, kernel ridge regression, and boosting in a variety of simulation set-ups, and observe promising performance relative to existing baselines.},
  file = {/home/cn21903/Zotero/storage/K6567LUU/Nie and Wager - 2020 - Quasi-oracle estimation of heterogeneous treatment.pdf}
}

@article{Powell2020,
  title = {Quantile {{Treatment Effects}} in the {{Presence}} of {{Covariates}}},
  author = {Powell, David},
  year = {2020},
  month = dec,
  journal = {The Review of Economics and Statistics},
  volume = {102},
  number = {5},
  pages = {994--1005},
  issn = {0034-6535},
  doi = {10.1162/rest_a_00858},
  urldate = {2023-08-11},
  abstract = {This paper proposes a method to estimate unconditional quantile treatment effects (QTEs) given one or more treatment variables, which may be discrete or continuous, even when it is necessary to condition on covariates. The estimator, generalized quantile regression (GQR), is developed in an instrumental variable framework for generality to permit estimation of unconditional QTEs for endogenous policy variables, but it is also applicable in the conditionally exogenous case. The framework includes simultaneous equations models with nonadditive disturbances, which are functions of both unobserved and observed factors. Quantile regression and instrumental variable quantile regression are special cases of GQR and available in this framework.},
  keywords = {Katarzyna},
  file = {/home/cn21903/Zotero/storage/BHV8E92E/rest_a_00858-esupp.pdf;/home/cn21903/Zotero/storage/DGT8QP5J/Powell - 2020 - Quantile Treatment Effects in the Presence of Cova.pdf}
}

@InProceedings{Rahimi2007,
  title = {Random {F}ourier Features for Kernel Ridge Regression: Approximation Bounds and Statistical Guarantees},
  author = {Ali Rahimi and Ben Recht},
  booktitle = {Advances in Neural Information Processing Sys-
tems (NIPS)},
  year = {2007},
  volume = {20},
  publisher = {{Curran Associates, Inc.}},
}

@article{Semenova2021,
  title={Debiased machine learning of conditional average treatment effects and other causal functions},
  author={Semenova, Vira and Chernozhukov, Victor},
  journal={The Econometrics Journal},
  volume={24},
  number={2},
  pages={264--289},
  year={2021},
  publisher={Oxford University Press}
}

@article{Obermeyer2016,
	title = {Predicting the {Future} - {Big} {Data}, {Machine} {Learning}, and {Clinical} {Medicine}.},
	volume = {375},
	issn = {1533-4406 0028-4793},
	doi = {10.1056/NEJMp1606181},
	language = {eng},
	number = {13},
	journal = {The New England journal of medicine},
	author = {Obermeyer, Ziad and Emanuel, Ezekiel J.},
	month = sep,
	year = {2016},
	pmid = {27682033},
	pmcid = {PMC5070532},
	note = {Place: United States},
	keywords = {*Algorithms, *Clinical Medicine, *Machine Learning, Computational Biology, Diagnosis, Expert Systems, Prognosis},
	pages = {1216--1219},
	file = {Obermeyer and Emanuel - 2016 - Predicting the Future - Big Data, Machine Learning.pdf:/home/cn21903/Zotero/storage/F79UIUBL/Obermeyer and Emanuel - 2016 - Predicting the Future - Big Data, Machine Learning.pdf:application/pdf},
}

@article{Hirano2009,
	title = {Asymptotics for statistical treatment rules},
	volume = {77},
	issn = {00129682, 14680262},
	url = {http://www.jstor.org/stable/25621374},
	abstract = {This paper develops asymptotic optimality theory for statistical treatment rules in smooth parametric and semiparametric models. Manski (2000, 2002, 2004) and Dehejia (2005) have argued that the problem of choosing treatments to maximize social welfare is distinct from the point estimation and hypothesis testing problems usually considered in the treatment effects literature, and advocate formal analysis of decision procedures that map empirical data into treatment choices. We develop large-sample approximations to statistical treatment assignment problems using the limits of experiments framework. We then consider some different loss functions and derive treatment assignment rules that are asymptotically optimal under average and minmax risk criteria.},
	number = {5},
	urldate = {2024-05-11},
	journal = {Econometrica},
	author = {Hirano, Keisuke and Porter, Jack R.},
	year = {2009},
	note = {Publisher: [Wiley, The Econometric Society]},
	pages = {1683--1701},
	file = {Hirano and Porter - 2009 - Asymptotics for statistical treatment rules.pdf:/home/cn21903/Zotero/storage/YXNGMP5U/Hirano and Porter - 2009 - Asymptotics for statistical treatment rules.pdf:application/pdf},
}

@article{Laurie1989,
	title = {Surgical adjuvant therapy of large-bowel carcinoma: an evaluation of levamisole and the combination of levamisole and fluorouracil. {The} {North} {Central} {Cancer}  {Treatment} {Group} and the {Mayo} {Clinic}.},
	volume = {7},
	issn = {0732-183X},
	doi = {10.1200/JCO.1989.7.10.1447},
	abstract = {A total of 401 eligible patients with resected stages B and C colorectal carcinoma were randomly assigned to no-further therapy or to adjuvant treatment  with either levamisole alone, 150 mg/d for 3 days every 2 weeks for 1 year, or  levamisole plus fluorouracil (5-FU), 450 mg/m2/d intravenously (IV) for 5 days  and beginning at 28 days, 450 mg/m2 weekly for 1 year. Levamisole plus 5-FU, and  to a lesser extent levamisole alone, reduced cancer recurrence in comparison with  no adjuvant therapy. These differences, after correction for imbalances in  prognostic variables, were only suggestive for levamisole alone (P = .05) but  quite significant for levamisole plus 5-FU (P = .003). Whereas both treatment  regimens were associated with overall improvements in survival, these  improvements reached borderline significance only for stage C patients treated  with levamisole plus 5-FU (P = .03). Therapy was clinically tolerable with either  regimen and severe toxicity was uncommon. These promising results have led to a  large national intergroup confirmatory trial currently in progress.},
	language = {eng},
	number = {10},
	journal = {Journal of clinical oncology},
	author = {Laurie, J. A. and Moertel, C. G. and Fleming, T. R. and Wieand, H. S. and Leigh, J. E. and Rubin, J. and McCormack, G. W. and Gerstner, J. B. and Krook, J. E. and Malliard, J.},
	month = oct,
	year = {1989},
	pmid = {2778478},
	note = {Place: United States},
	keywords = {Adult, Aged, Aged, 80 and over, Colorectal Neoplasms/*drug therapy/mortality/pathology, Female, Fluorouracil/administration \& dosage/*therapeutic use, Humans, Levamisole/administration \& dosage/*therapeutic use, Lymphatic Metastasis, Male, Middle Aged, Neoplasm Recurrence, Local, Neoplasm Staging, Neoplasms, Multiple Primary, Patient Compliance, Random Allocation},
	pages = {1447--1456},
}

@article{Collins2015,
	title = {A new initiative on precision medicine.},
	volume = {372},
	issn = {1533-4406 0028-4793},
	doi = {10.1056/NEJMp1500523},
	abstract = {President Obama has announced a research initiative that aims to accelerate progress toward a new era of precision medicine, with a near-term focus on  cancers and a longer-term aim to generate knowledge applicable to the whole range  of health and disease.},
	language = {eng},
	number = {9},
	journal = {The New England journal of medicine},
	author = {Collins, Francis S. and Varmus, Harold},
	month = feb,
	year = {2015},
	pmid = {25635347},
	pmcid = {PMC5101938},
	note = {Place: United States},
	keywords = {*Precision Medicine, Biomedical Research/economics/*legislation \& jurisprudence, Financing, Government, Health, Humans, Neoplasms/therapy, Research Support as Topic/*legislation \& jurisprudence, United States},
	pages = {793--795},
	file = {Collins and Varmus - 2015 - A new initiative on precision medicine..pdf:/home/cn21903/Zotero/storage/4BCCF2HV/Collins and Varmus - 2015 - A new initiative on precision medicine..pdf:application/pdf},
}

@article{Imbens2004,
	title = {Nonparametric estimation of average treatment effects under exogeneity: a review},
	volume = {86},
	issn = {0034-6535},
	url = {https://doi.org/10.1162/003465304323023651},
	doi = {10.1162/003465304323023651},
	abstract = {Recently there has been a surge in econometric work focusing on estimating average treatment effects under various sets of assumptions. One strand of this literature has developed methods for estimating average treatment effects for a binary treatment under assumptions variously described as exogeneity, unconfoundedness, or selection on observables. The implication of these assumptions is that systematic (for example, average or distributional) differences in outcomes between treated and control units with the same values for the covariates are attributable to the treatment. Recent analysis has considered estimation and inference for average treatment effects under weaker assumptions than typical of the earlier literature by avoiding distributional and functional-form assumptions. Various methods of semiparametric estimation have been proposed, including estimating the unknown regression functions, matching, methods using the propensity score such as weighting and blocking, and combinations of these approaches. In this paper I review the state of this literature and discuss some of its unanswered questions, focusing in particular on the practical implementation of these methods, the plausibility of this exogeneity assumption in economic applications, the relative performance of the various semiparametric estimators when the key assumptions (unconfoundedness and overlap) are satisfied, alternative estimands such as quantile treatment effects, and alternate methods such as Bayesian inference.},
	number = {1},
	journal = {The Review of Economics and Statistics},
	author = {Imbens, Guido W.},
	month = feb,
	year = {2004},
    pages = {4--29},
	file = {Imbens - 2004 - Nonparametric estimation of average treatment effe.pdf:/home/cn21903/Zotero/storage/HK6F74I6/Imbens - 2004 - Nonparametric estimation of average treatment effe.pdf:application/pdf},
}

@techreport{Abadie2002a,
	type = {Working {Paper}},
	title = {Simple and bias-corrected matching estimators for average treatment effects},
	url = {http://www.nber.org/papers/t0283},
	abstract = {Matching estimators for average treatment effects are widely used in evaluation research despite the fact that their large sample properties have not been established in many cases. In this article, we develop a new framework to analyze the properties of matching estimators and establish a number of new results. First, we show that matching estimators include a conditional bias term which may not vanish at a rate faster than root-N when more than one continuous variable is used for matching. As a result, matching estimators may not be root-N-consistent. Second, we show that even after removing the conditional bias, matching estimators with a fixed number of matches do not reach the semiparametric efficiency bound for average treatment effects, although the efficiency loss may be small. Third, we propose a bias-correction that removes the conditional bias asymptotically, making matching estimators root-N-consistent. Fourth, we provide a new estimator for the conditional variance that does not require consistent nonparametric estimation of unknown functions. We apply the bias-corrected matching estimators to the study of the effects of a labor market program previously analyzed by Lalonde (1986). We also carry out a small simulation study based on Lalonde's example where a simple implementation of the biascorrected matching estimator performs well compared to both simple matching estimators and to regression estimators in terms of bias and root-mean-squared-error. Software for implementing the proposed estimators in STATA and Matlab is available from the authors on the web.},
	number = {283},
	institution = {National Bureau of Economic Research},
	author = {Abadie, Alberto and Imbens, Guido W},
	month = oct,
	year = {2002},
	doi = {10.3386/t0283},
	note = {Series: Technical working paper series},
	file = {Abadie and Imbens - 2002 - Simple and bias-corrected matching estimators for .pdf:/home/cn21903/Zotero/storage/2TVXC529/Abadie and Imbens - 2002 - Simple and bias-corrected matching estimators for .pdf:application/pdf},
}

\appendix

\section{Additional theory and proofs}\label{app:proofs}
\subsection{Loss justification proofs}\label{app:loss_just_proofs}

\begin{proof}[Proof of equation \eqref{eq:loss_justification}]
As a reminder the identity of interest is
\begin{align*}
    \truecqc(y_0|\bm x)=\argmin_{y_1}\fixedsampleloss(y_1,y_0,\bm x).
\end{align*}
First define the intermediary loss function

By the fundamental theorem of calculus we have that $\partial_{y_1}\fixedsampleloss(y_1,y_0,\bm x)=\contrastfunc(y_1,y_0,\bm x)$. Therefore, as $\partial_{y_1}\fixedsampleloss(y_1,y_0,\bm x)$ is increasing in $y_1$ for any $\bm x, y_0$, we have that
\begin{align*}
    y_1=\argmin_{y_1'}\fixedsampleloss(y_1',y_0,\bm x)&\Longleftrightarrow 
    \partial_{y_1}\fixedsampleloss(y_1,y_0,\bm x)=0\\
    &\Longleftrightarrow\ccdf[1](y_1|\bm x)-\ccdf[0](y_0|\bm x)=0\\
    &\Longleftrightarrow\ccdf[1](y_1|\bm x)=\ccdf[0](y_0|\bm x).
\end{align*}
Hence by definition of $\truecqc$, we have that
\begin{align*}
    \truecqc(y_0|\bm x)=\argmin_{y_1}\fixedsampleloss(y_1,y_0|\bm x).
\end{align*}
\end{proof}

\begin{proof}[Proof of Proposition \ref{prop:dr_grad_term}]
    Firstly define $\gradfixedsampleloss(y_1,y_0|\bm x)$by
    \begin{align*}
        \gradfixedsampleloss\coloneqq& \frac{a}{\propensity(\bm x)}\lrbrc{\one\{y\leq \cqc_\param(y_0|\bm x)\}-\ccdf[1](\cqc_\param(\cqc|\bm x)|\bm x)}-\frac{1-a}{1-\propensity(\bm x)}\lrbrc{\one\{y\leq y_0\}-\ccdf[0](y_0|\bm x)}\\
        &~+~\ccdf[1](y_1|\bm x)-\ccdf[0](y_0|\bm x).
    \end{align*}

    So that $\gradsampleloss(\param,y_0,\bm x)=\{\nabla_\param \cqc_\param(y_0|\bm x)\}\gradfixedsampleloss(\cqc_\param(y_0|\bm x),y_0,\bm x)$.
    By the chain rule we have that $\nabla_\param(\param,y_0,\bm x)=\nabla_\param\cqc_\param(y_0|\bm x)\partial_{\cqc_\param(y_0|\bm x)}\cdot \fixedsampleloss(\cqc_\param(y_0|\bm x),y_0|\bm x)$.

    Hence all that is left to show is that $\E[\gradfixedsampleloss(y_1,y_0,Z)|X=\bm x)]\partial_{y_1}\fixedsampleloss(y_1,y_0|\bm x)$.
    To this end we can use the tower property to get that
    \begin{align*}
        \E[\gradfixedsampleloss_{\dr}(y_1,y_0,Z)|X=\bm x]&=\E\bigg[\frac{A}{\propensity(\bm x)}\underbrace{\lrbrc{\E[\one\{Y\leq y_1\}|X,A=1]-\ccdf[1](y_1|\bm x)}}_{=0}\bigg]\\
        &~~~-~\E\bigg[\frac{1-A}{1-\propensity(\bm x)}\underbrace{\lrbrc{\E[\one\{Y\leq y_0\}|X,A=0]-\ccdf[0](y_0|\bm x)}}_{=0}\bigg]\\
        &~~~+~\ccdf[1](y_1|\bm x)-\ccdf[0](y_0|\bm x)\\
        &=\ccdf[1](y_1|\bm x)-\ccdf[0](y_0|\bm x)\\
        &=\partial_{y_1}\fixedsampleloss(y_1,y_0,\bm x).
    \end{align*}  
\end{proof}

\subsubsection{Loss bound proofs}\label{app:proof_loss_bound}
We now provide the proof for our result bounding the loss in various circumstances. First however we provide a proposition with various upper and lower bounds on integrals which will inform our upper and lower bounds on the loss.
\begin{prop}\label{prop:integral_bounds}
    Let $F:\responsespace\rightarrow\R$ be an arbitrary increasing function with $F(a)=0$, $F(b)=\beta$ for $a<b\in\responsespace$. Also define $f(y)=\partial_y F(y)$ and $I=\int_a^b F(y)\diff y$. We then have that
    \begin{enumerate}
        \item $I\leq \abs{\beta}\abs{b-a}$.
        \item If $f(y)\geq\eta$ for all $y\in[a,b]$, $I\geq\frac{\eta}{2}(b-a)^2$.
        \item If $f(y)\leq\xi$ for all $y\in[a,b]$, $I\geq(2\xi)^{-1}\beta^2$.
        \item If $f(y)$ is increasing on $[a,b]$ then $I\geq\half\abs{\beta}\abs{b-a}$.
    \end{enumerate}
    As a convention we allow for the possibility that $a>b$ and take $[a,b]$ in this case to mean $[b,a]$.
\end{prop}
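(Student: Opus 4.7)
The plan is to prove the four bounds in sequence, exploiting one shared structural fact: since $F$ is non-decreasing with $F(a)=0$ and $F(b)=\beta$, the sign of $\beta$ agrees with the sign of $b-a$, and $F$ takes values between $0$ and $\beta$ on the segment between $a$ and $b$. The stated convention for $a>b$ exactly compensates for the sign of $\beta$, so $I$ is always non-negative and the absolute values $|\beta|$ and $|b-a|$ naturally appear. I would therefore write each argument for $a<b$ and $\beta\geq 0$ and cite the convention in the swapped case.

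Parts (1) and (2) are straightforward pointwise-then-integrate steps. For (1), monotonicity gives $0\leq F(y)\leq \beta$ on $[a,b]$, so $I\leq \beta(b-a)$. For (2), $f\geq \eta$ together with $F(a)=0$ gives, by the fundamental theorem of calculus, $F(y)\geq \eta(y-a)$, and $\int_a^b \eta(y-a)\diff y = \eta(b-a)^2/2$. For (4), the hypothesis (which I read as $f$ decreasing, to match Proposition \ref{prop:loss_bound}(c)) makes $F$ concave, so $F$ lies above its chord from $(a,0)$ to $(b,\beta)$, i.e.\ $F(y)\geq \beta(y-a)/(b-a)$ on $[a,b]$; integrating the chord gives $I\geq \beta(b-a)/2$.

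Part (3) is where I expect the main obstacle. The natural consequence of $f\leq \xi$ is the linear lower bound $F(y)\geq \beta-\xi(b-y)$ obtained by integrating $f$ backwards from $b$, but integrating this over $[a,b]$ yields $\beta(b-a) - \xi(b-a)^2/2$, which is non-monotone in $b-a$ and so does not directly deliver the $\beta^2/(2\xi)$ bound. The remedy is to combine this with the trivial bound $F\geq 0$, giving $F(y)\geq \max\{0,\beta-\xi(b-y)\}$. The right-hand side is non-zero only for $y\geq b-\beta/\xi$, and this cutoff lies inside $[a,b]$ because $\beta = \int_a^b f \leq \xi(b-a)$. Integrating over the relevant sub-interval then collapses cleanly to $\int_0^{\beta/\xi}(\beta-\xi u)\diff u = \beta^2/(2\xi)$. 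An equivalent route uses the change of variable $u=F(y)$ with $\diff y = \diff u/f(y) \geq \diff u/\xi$, writing $I=\int_0^\beta(b-F^{-1}(u))\diff u\geq \int_0^\beta(\beta-u)/\xi\,\diff u$; I would pick whichever is cleanest given any regularity already assumed on $f$.
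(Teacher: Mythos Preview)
Your proposal is correct and matches the paper's approach essentially line for line: parts (1) and (2) are identical, and for part (3) your lower bound $\max\{0,\beta-\xi(b-y)\}$ is exactly the extremal function $\tilde F$ the paper constructs (the paper spends a paragraph arguing its minimality via the mean value theorem, whereas you obtain it directly by integrating $f\leq\xi$ backward from $b$ and truncating at zero, which is cleaner). For part (4) you are right that the hypothesis should read ``$f$ decreasing'' so that $F$ is concave and lies above its chord; the paper's own proof says ``$f$ increasing implies $F$ convex'' and then applies the chord inequality in the direction that only holds for concave functions, so you have in fact caught and corrected an error in the paper.
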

\begin{proof}
All results are proved under the case $a\leq b$. The results for the case $a>b$ follow an identical argument with signs and equalities reversed.
    The first result follows directly from the fact that $F(y)\leq F(b)$ for all $y\in[a,b]$.

    For the second result we have that 
    \begin{align*}
        F(y)&=\int_{a}^{y}f(s)\diff s+F(a)\\
        &= \int_{a}^{y}f(s)\diff s\\
        &\geq(y-a)\eta.
    \end{align*}
    Therefore
    \begin{align*}
        I&\geq\int_a^b (y-a)\eta\diff y\\
        &=\frac{\eta}{2}(b-a)^2.
    \end{align*}

    For the third result define $\tilde{F}:[a,b]\rightarrow\responsespace$ by
    \begin{align*}
        \tilde{F}(y)=\begin{cases}
            0 &\text{if }y\in[a,b-\beta/\xi],\\
            \xi y -\xi b + \beta&\text{if }y\in(b-\beta/\xi,b] .
        \end{cases}
    \end{align*}
    Then $\tilde{F}$ is, non-negative, continuous and increasing with $F(b)=\beta$ and maximum gradient $\xi$. Furthermore we claim that $\tilde{F}$ lower bounds any other functions with this property which also has continuous derivative.
    
    This is trivially true for $y\in[a,b-\beta/\xi]$. Otherwise suppose there exists function $G$ satisfying all these assumptions excluding the gradient bound with $G(y)<\tilde F(y)$ for some $y\in(b-\beta/\xi,b]$. Then we have that $\frac{G(b)-G(y)}{b-y}<\xi$, hence by the mean value theorem we must have that $\partial_{y}G(y')<\xi$ for some $y'$ in $[y,b]$. Thus by the contrapositive, $\tilde F$ is the minimal function satisfying all these conditions on $(b-\beta/\xi,b]$. 

    As such we can now get the following bound on $I$
    \begin{align*}
        I&\geq\int_a^b\tilde F(y)\diff y\\
        &=\int^{b}_{b-\beta/\xi} \xi y-\xi b + \beta \diff y=\beta^2/\xi.
    \end{align*}

    For the final result note that $f(y)$ increasing implies that $F(y)$ is convex. Therefore we have that
    \begin{align*}
        I&=\int_a^b F(y) \diff y\\
        &\geq \int_a^b F(a)+\lrbr{\frac{y-a}{b-a}\lrbr{F(b)-F(a)}}\diff y\\
        &\leq \int_a^b\frac{y-a}{b-a}\beta\diff y=\half\abs{\beta}\abs{b-a}.
    \end{align*}
\end{proof}

\begin{proof}[Proof of Proposition \ref{prop:loss_bound}]
For notational convenience we introduce the function $\contrastfunc:\responsespace\times\responsespace\times\covariatespace\rightarrow [-1,1]$ given by
\begin{align*}
    \contrastfunc(y_1,y_0,\bm x)\coloneqq \ccdf[1](y_1|\bm x)-\ccdf[0](y_0|\bm x)
\end{align*}
so that $\fixedsampleloss(y_1,y_0,\bm x)\coloneqq\int_{\cqc^*(y_0|\bm x)}^{y_1}\contrastfunc(t,y_0,\bm x)\diff t$.
Remember that $\sampleloss(\param, y_0, \bm x)=\fixedsampleloss(\cqc_\param(y_0|\bm x),y_0,\bm x)$.
We can then notice that $\contrastfunc(y_1,y_0,\bm x)$ satisfies the conditions of Proposition \ref{prop:integral_bounds} as a function of $y_1$ with $a=\truecqc(y_0,\bm x)$ and $b=\cqc_\param(y_0|\bm x)$ and $\beta=\ccdf[1](\cqc_{\param}(y_0|\bm x)|\bm x)-\ccdf[0](y_0|\bm x)$.

Furthermore $\partial_{y_1}\fixedsampleloss(y_1,y_0,\bm x)=\density[1](y_1|\bm x)$.
Therefore the assumptions in Proposition \ref{prop:loss_bound}(a)-(c) correspond to the assumptions in results 2-4 of Proposition \ref{prop:integral_bounds}.

Therefore we can simply directly apply each result of Proposition \ref{prop:integral_bounds} prove our required results. 
\end{proof}

\subsection{Estimation accuracy theory and proofs}\label{app:acc_proofs}
\subsubsection{Convex Convergence}
\begin{lemma}\label{lemma:sgd_biased_conv}
Let $\loss(\param)$ be a convex function and define $\minparam=\argmin_{\param}\loss(\param)$ with $\|\minparam\|\leq B$ for some $B>0$.
Define $\param^{(1)}=\bm 0$ and inductively take
\begin{align*}
    \param^{(t+\half)}&=\param^{(t)}-\eta v^{(t)}&
    \param^{(t+1)}&=\argmin_{\param:\|\param\|\leq B }\|\param-\param^{(t+\half)}\|
\end{align*}
with $\eta=\frac{B}{\gradsamplebound\sqrt{n}}$ and $v_1,\dotsc, v_n$ a sequence of RVs with $\|v^{(t)}\|\leq\gradsamplebound$. Finally, take our parameter estimate to be $\estparam=\inv{n}\sum_{t=1}^n\param^{(t)}$.

Then we have that 
\begin{align*}
    \loss(\estparam)-\loss(\minparam)\leq\frac{B\gradsamplebound}{\sqrt{n}}-\frac{1}{n}\sum_{t=1}^{n}\ipr{\param^{(t)}-\minparam}{\eps^{(t)}}
\end{align*}
where $\eps^{(t)}\coloneqq v^{(t)}-\nabla_{\param^{(t)}}\loss(\param^{(t)})$. 
\end{lemma}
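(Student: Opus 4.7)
The plan is to adapt the classical projected stochastic subgradient analysis, carefully tracking the bias $\eps^{(t)}$ rather than killing it with an expectation. The key observation is that since the step uses $v^{(t)}$ (not the true subgradient), we can split $v^{(t)} = \nabla\loss(\param^{(t)}) + \eps^{(t)}$ and run the usual analysis on $v^{(t)}$, then pull the $\eps^{(t)}$ contribution out as an explicit remainder.

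First, I would start from convexity of $\loss$, which yields the pointwise bound
\begin{align*}
\loss(\param^{(t)})-\loss(\minparam)\leq \ipr{\nabla\loss(\param^{(t)})}{\param^{(t)}-\minparam} = \ipr{v^{(t)}}{\param^{(t)}-\minparam} - \ipr{\eps^{(t)}}{\param^{(t)}-\minparam}.
\end{align*}
To control the first inner product, I would carry out the standard one-step recursion for projected gradient descent. Because $\|\minparam\|\leq B$, the point $\minparam$ lies in the feasible ball; projection onto a convex set is non-expansive so $\|\param^{(t+1)}-\minparam\|\leq \|\param^{(t+\half)}-\minparam\|$. Expanding the unprojected update $\param^{(t+\half)}=\param^{(t)}-\eta v^{(t)}$ gives
\begin{align*}
\ipr{v^{(t)}}{\param^{(t)}-\minparam}\leq \frac{1}{2\eta}\lrbr{\|\param^{(t)}-\minparam\|^2-\|\param^{(t+1)}-\minparam\|^2}+\frac{\eta}{2}\|v^{(t)}\|^2.
\end{align*}

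Next I would sum over $t=1,\dotsc,n$ so the squared-distance terms telescope, using $\param^{(1)}=\bm 0$ and hence $\|\param^{(1)}-\minparam\|^2\leq B^2$, together with the uniform bound $\|v^{(t)}\|\leq \gradsamplebound$. This produces
\begin{align*}
\sum_{t=1}^{n}\ipr{v^{(t)}}{\param^{(t)}-\minparam}\leq \frac{B^2}{2\eta}+\frac{\eta n \gradsamplebound^2}{2}.
\end{align*}
The prescribed step size $\eta=B/(\gradsamplebound\sqrt{n})$ equalises the two terms and yields $B\gradsamplebound\sqrt{n}$. Dividing by $n$, applying Jensen's inequality to the convex $\loss$ (so that $\loss(\estparam)\leq \frac{1}{n}\sum_{t=1}^{n}\loss(\param^{(t)})$), and combining with the summed convexity inequality and the retained $\eps^{(t)}$ remainder delivers the stated bound.

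The proof is largely bookkeeping, and the main subtlety is simply deciding to keep the bias term $\eps^{(t)}$ explicit rather than immediately taking an expectation. This matters because in the subsequent use of the lemma inside Theorem~\ref{thm:cqc_est_conv}, the independence structure between the nuisance estimates $\estpropensity,\estccdf$ and the sample $\{(Y_0^{(t)},Z^{(t)})\}$ is exploited to convert $\E[\ipr{\param^{(t)}-\minparam}{\eps^{(t)}}]$ into a product-type nuisance error, which is precisely what gives double robustness. Keeping $\eps^{(t)}$ separated here is therefore essential; none of the steps above require independence or unbiasedness of $v^{(t)}$.
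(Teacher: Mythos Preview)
Your proposal is correct and follows essentially the same approach as the paper: both use Jensen's inequality on $\loss(\estparam)$, apply convexity to replace $\loss(\param^{(t)})-\loss(\minparam)$ by the inner product with the true gradient, split $\nabla\loss(\param^{(t)})=v^{(t)}-\eps^{(t)}$, and bound $\frac{1}{n}\sum_t\ipr{\param^{(t)}-\minparam}{v^{(t)}}$ via the standard projected-SGD telescoping argument. The only difference is that the paper cites \citet{shalevshwartz2014}, Section~14.4.1, for the telescoping bound, whereas you derive it explicitly from non-expansiveness of the projection and the one-step expansion; your version is therefore slightly more self-contained.
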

\begin{proof}[Proof of Lemma \ref{lemma:sgd_biased_conv}]
    Define $\gradstep[t]\coloneqq\nabla_{\param^{(t)}}\loss(\param^{(t)})$ so that $\E[v^{(t)}|\param^{(t)}]=\gradstep[t]+\eps^{(t)}$ where $\gradstep[t]$ represents the unbiased gradient estimate and $\eps^{(t)}$ represents the bias.

    From \citet{shalevshwartz2014} section 14.4.1 we have that
    \begin{align*}
        \E\lrbrs{\inv{n}\sum_{t=1}^{n}\ipr{\param^{(t)}-\param^*}{v^{(t)}}}\leq\frac{B\gradsamplebound}{\sqrt{n}}
    \end{align*}

    Additionally we have
    \begin{align*}
        \loss(\estparam)-\loss(\param^*)&\leq \inv{n}\sum_{t=1}^n\loss(\param^{(t)})-\loss(\param^*)\quad\text{by Jensen's inequality.}\\
        &\leq \inv{n}\sum_{t=1}^n\ipr{\param^{(t)}-\param^*}{ \gradstep[t]}\quad\text{by convexity of $\loss$ and definition of $\gradstep[t]$}\\
        &=\inv{n}\sum_{t=1}^n\ipr{\param^{(t)}-\param^*}{ v^{(t)}} -\inv{n}\sum_{t=1}^n\ipr{\param^{(t)}-\param^*}{\eps^{(t)}}\\
        &\leq \frac{B\rho}{\sqrt{n}}-\inv{n}\sum_{t=1}^n\ipr{\param^{(t)}-\param^*}{\eps^{(t)}}\quad\text{from our prior result}\\
        % &\leq \frac{B\rho}{\sqrt{n}}-\inv{n}\sum_{t=1}^n\E\lrbrs{\ipr{\param^{(t)}-\param^*}{\E[\eps^{(t)}|W_t]}}\quad\text{by the tower property\& linearity}\\
        % &\leq \frac{B\gradsamplebound}{\sqrt{n}}+\inv{n}\sum_{t=1}^n\E[\|\param^{(t)}-\param^*\|\|\E[\eps^{(t)}|W_t]\|] \quad\text{by Cauchy-Schwartz}\\
        % &\leq \frac{B\gradsamplebound}{\sqrt{n}}+2B\inv{n}\sum_{t=1}^n\E[\|\E[\eps^{(t)}|W_t]\|] \quad\text{By our projection step.}
    \end{align*}
\end{proof}

\begin{lemma}\label{lemma:pseudoerror_bound}
    Suppose that assumption \ref{ass:bdd_sgd_conv} holds.
    For arbitrary fixed $\param\in\Param$, Define \\$\eps=\gradsampleloss(\param,Y_0,Z)-\nabla_\param\loss(\param)$

    Then we have that
    \begin{align*}
        \norm{\E[\eps]}&\leq\frac{2\featurebound}{\propensitybound}\sqrt{\E\lrbrs{\bigg|\propensity(X)-\estpropensity(X)\bigg|^2}\E\lrbrs{\sup_{\substack{y_0\in \responsespace,\\ \zerone\in\{0,1\}}}\abs{\ccdf(y_0|X)-\estccdf(y_0|X)}^2}}.
    \end{align*}
\end{lemma}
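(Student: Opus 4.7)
The plan is to exploit the doubly robust structure of $\estgradsampleloss_\dr$ by conditioning on $(Y_0, X)$ and expressing the bias as a product of nuisance errors. By Proposition \ref{prop:dr_grad_term}, the oracle version satisfies $\E[\gradsampleloss_\dr(\param, Y_0, Z)] = \nabla_\param \loss(\param)$, so $\E[\eps] = \E[\estgradsampleloss_\dr(\param, Y_0, Z) - \gradsampleloss_\dr(\param, Y_0, Z)]$, and the two versions of $\gradsampleloss_\dr$ differ only in whether they use $(\propensity, \ccdf[\zerone])$ or $(\estpropensity, \estccdf[\zerone])$.

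First I would compute the inner conditional expectation given $(Y_0, X)$, writing $y_1 = \cqc_\param(Y_0|X)$ for brevity. The tower property together with $\E[A|X] = \propensity(X)$ and $\E[\one\{Y \leq t\} \mid X, A = \zerone] = \ccdf(t|X)$ reduces the indicator-style terms to expressions involving only the nuisance estimates and their truths. The crucial algebraic step, and the one I consider most central, is the identity
\begin{align*}
\frac{\propensity(X)}{\estpropensity(X)}\bigl(\ccdf[1](y_1|X) - \estccdf[1](y_1|X)\bigr) + \bigl(\estccdf[1](y_1|X) - \ccdf[1](y_1|X)\bigr) = \frac{\propensity(X) - \estpropensity(X)}{\estpropensity(X)}\bigl(\ccdf[1](y_1|X) - \estccdf[1](y_1|X)\bigr),
\end{align*}
and its analogue for the $\zerone = 0$ pair (with $1 - \estpropensity(X)$ in the denominator, inherited from the $-(1-A)$ factor). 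These two identities expose the conditional bias as a sum of \emph{products} of a propensity error with a CCDF error, which is the hallmark of double robustness.

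With the product structure in hand, the remainder is routine. Jensen's inequality pushes the norm inside the outer expectation; Assumption \ref{ass:bdd_sgd_conv}(b)--(c) gives $\|\nabla_\param \cqc_\param(y_0|\bm x)\| = \|\featurefunc(y_0,\bm x)\| \leq \featurebound$; Assumption \ref{ass:bdd_sgd_conv}(a) gives $\estpropensity(X)^{-1}, (1 - \estpropensity(X))^{-1} \leq \propensitybound^{-1}$. Combining the two summands and dominating $|\ccdf(y_0|X) - \estccdf(y_0|X)|$ by $\sup_{y_0 \in \responsespace,\, \zerone \in \{0,1\}} |\ccdf(y_0|X) - \estccdf(y_0|X)|$ produces a prefactor of $2\featurebound/\propensitybound$ multiplying $\E[|\propensity(X) - \estpropensity(X)| \cdot \sup_{y_0, \zerone} |\ccdf - \estccdf|]$. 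One application of Cauchy--Schwarz in the outer expectation over $X$ then yields the claimed square-root of the product of mean-squared nuisance errors. The main obstacle is really only the sign bookkeeping in the algebraic identity above; once the product form is visible, everything else is mechanical.
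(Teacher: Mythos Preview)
Your proposal is correct and follows essentially the same route as the paper: condition on $(Y_0,X)$, compute $\E[\estgradsampleloss_\dr-\gradsampleloss_\dr\mid Y_0,X]$, collapse it to the product form $\bigl(\tfrac{\propensity}{\estpropensity}-1\bigr)(\ccdf[1]-\estccdf[1])-\bigl(\tfrac{1-\propensity}{1-\estpropensity}-1\bigr)(\ccdf[0]-\estccdf[0])$ times $\nabla_\param\cqc_\param=\featurefunc$, then apply Jensen, the bounds from Assumption~\ref{ass:bdd_sgd_conv}, the sup over $(y_0,\zerone)$, and Cauchy--Schwarz in $X$. Your algebraic identity is exactly the paper's $\bigl(\tfrac{\propensity}{\estpropensity}-1\bigr)(\ccdf[1]-\estccdf[1])$ term written out, so there is no substantive difference in approach.
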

\begin{proof}
    To do this first define
\begin{align*}
    \pseudoerror(\param, Y_0,X)\coloneqq\E[\estgradsampleloss(\param,Y_0,Z)-\gradsampleloss(\param,Y_0,Z)|X, Y_0,\param]
\end{align*}

To bound $\pseudoerror(\param, Y_0, Z)$ firstly have that 
    \begin{align*}
        \E[\one\{Y\leq y\}\one\{A=a\}|X]&=\E[\one\{Y\leq y\}|A=a]\prob(A=a|X)\\
        &=\ccdf(y|X)\prob(A=a|X).
    \end{align*}
    We can then use the fact that $\prob(A=1|X)=\propensity(X)$ to get
    \begin{align*}
        \E[\estgradsampleloss(\param,Y_0,Z)|\param,Y_0,X]=&\nabla_{\param}\cqc_\param(Y_0|X)\Bigg\{\lrbr{\frac{\propensity(X)}{\estpropensity(X)}}\lrbr{\ccdf[1]\{\cqc_\param(Y_0|X)|X\}-\estccdf[1]\{\cqc(Y_0|X)|X\}}\\
        &\qquad\qquad\qquad\quad-~\frac{1-\propensity(X)}{1-\estpropensity(X)}\lrbr{\ccdf[0]\{y_0|X\}-\estccdf[0]\{Y_0|X\}}\\
        &\qquad\qquad\qquad\quad+~\estccdf[1]\{\cqc_\param(Y_0|X)|X\}-\estccdf[0](Y_0|X)\Bigg\}.
    \end{align*}
    Hence
    \begin{align*}
        \pseudoerror(\param, Y_0, X)=&\nabla_{\param}\cqc_\param(Y_0|X)\Bigg\{\lrbr{\frac{\propensity(X)}{\estpropensity(X)}-1}\lrbr{\ccdf[1]\{\cqc_{\param}(Y_0|X)|X\}-\estccdf[1]\{\cqc_{\param}(Y_0|X)|X\}}\\
        &\qquad\qquad\qquad\quad-~\lrbr{\frac{1-\propensity(X)}{1-\estpropensity(X)}-1}\lrbr{\ccdf[0](Y_0|X)-\estccdf[0](Y_0|X)}\Bigg\}.
    \end{align*}

    Now by the tower property and linearity of expectation, we have that  $\E[\eps]=\E[\pseudoerror(\param, Y_0,X)|\param]$. In turn we then get
    \begin{align*}
        \norm{\E\lrbrs{\eps}}
        &\leq \E\lrbrs{\norm{\pseudoerror(\param^{(t)},Y_0,X)}|\param^{(t)}}\quad\text{by Jensen's inequality}.
    \end{align*}
    Now using our bound on $\featurefunc$ in assumption \ref{ass:bdd_sgd_conv}, we get that $\|\nabla_\param\featurefunc(y_0,\bm x)\|\leq\featurebound$ for all $y,\bm x$. Additionally using our bound on $\estpropensity$  we get that
    \begin{align*}
        \abs{\frac{1-\propensity(\bm x)}{1-\estpropensity(\bm x)}-1}=\abs{\frac{\propensity(\bm x)}{\estpropensity(\bm x)}-1}&\leq \frac{\abs{\propensity(\bm x)-\estpropensity(\bm x)}}{\propensitybound}
    \end{align*}
    
    Combining these we get
    \begin{align*}
        \E\lrbrs{\norm{\eps}}&\leq \frac{\featurebound}{\propensitybound}\E\bigg[\Big|\Big(\propensity(X)-\estpropensity(X)\Big)\\
        &\qquad\quad~~\lrbr{\ccdf[1]\{\cqc_{\param}(Y_0|X)|X\}-\estccdf[1]\{\cqc_{\param}(Y_0|X)|X\} + \ccdf[0](Y_0|X)-\estccdf[0](Y_0|X)}\Big|\bigg]\\
        &\leq\frac{2\featurebound}{\propensitybound}\sqrt{\E\lrbrs{\bigg|\propensity(X)-\estpropensity(X)\bigg|^2}\E\lrbrs{\sup_{\substack{y_0\in \responsespace,\\ \zerone\in\{0,1\}}}\abs{\ccdf(y_0|X)-\estccdf(y_0|X)}^2}}.
    \end{align*}
\end{proof}

\begin{prop}\label{prop:cqc_est_conv}
    Suppose that assumption \ref{ass:bdd_sgd_conv} holds and that $\|\minparam\|\leq B$ for some $B>0$.
For $t\in[n]$, define $\param^{(t)}$ inductively by
\begin{align*}
    \param^{(t+\half)}&=\param^{(t)}-\lr[t] v^{(t)}&
    \param^{(t+1)}&=\argmin_{\param:\|\param\|\leq B }\|\param-\param^{(t+\half)}\|
\end{align*}
    with $\param^{(1)}=\bm 0$, $\lr[t]=\frac{B\propensitybound}{2\featurebound\sqrt{n}}$, and $v^{(t)}\coloneqq\estgradsampleloss(\param^{(t)},Y_0^{(t)},Z^{(t)})$. Finally, define the parameter estimate as $\estparam=\inv{n}\sum_{t=1}^n\param^{(t)}$.
    Then, if $\estpropensity,\estccdf$ are independent of $\lrbrc{(Y^{(t)}_0,Z^{(t)}}_{t=1}^n$, we have that
    \begin{align}
        \E[\loss(\estparam)-\loss(\minparam)]\leq C_1\lrbr{\inv{\sqrt{n}}+\!\!\sqrt{\E\!\!\lrbrs{\Big(\propensity(X)-\estpropensity(X)\Big)^2}\E\!\!\lrbrs{\sup_{y_0, \zerone}\lrbr{\ccdf(y_0|X)-\estccdf(y_0|X)}^2}}}
    \end{align}
    where $C_1=4B\featurebound/\propensitybound$.
\end{prop}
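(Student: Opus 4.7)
The plan is to chain the two preparatory lemmas. Lemma \ref{lemma:sgd_biased_conv} reduces bounding $\E[\loss(\estparam) - \loss(\minparam)]$ to controlling the bias terms $\eps^{(t)} \coloneqq v^{(t)} - \nabla_\param \loss(\param^{(t)})$ of the stochastic gradient estimates, and Lemma \ref{lemma:pseudoerror_bound} then converts this control into the nuisance-product form $\nuisanceerror$.

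To invoke Lemma \ref{lemma:sgd_biased_conv} I first verify convexity of $\loss$ and obtain a deterministic gradient bound $\|v^{(t)}\| \leq \gradsamplebound$ matching the learning rate. For convexity, Assumption \ref{ass:bdd_sgd_conv}(b) makes $\param \mapsto \cqc_\param(y_0|\bm x) = \param^\T \featurefunc(y_0,\bm x)$ linear in $\param$, while $\fixedsampleloss(\cdot, y_0, \bm x)$ is convex in its first argument since its derivative $\contrastfunc$ is increasing there, so $\sampleloss$ and hence $\loss$ are convex in $\param$. For the gradient bound, Assumption \ref{ass:bdd_sgd_conv}(b,c) yield $\|\nabla_\param \cqc_\param(y_0|\bm x)\| = \|\featurefunc(y_0,\bm x)\| \leq \featurebound$; since $a\in\{0,1\}$ only one of the two inverse-propensity terms in \eqref{eq:grad_sample} is active, and Assumption \ref{ass:bdd_sgd_conv}(a) together with the fact that indicators and CCDF values lie in $[0,1]$ bound the remaining bracket by $1/\propensitybound + 1 \leq 2/\propensitybound$. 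Thus $\|v^{(t)}\| \leq \gradsamplebound \coloneqq 2\featurebound/\propensitybound$, matching the chosen $\lr[t] = B/(\gradsamplebound \sqrt{n})$. Lemma \ref{lemma:sgd_biased_conv} combined with the triangle inequality after taking expectations then gives
\begin{align*}
\E[\loss(\estparam) - \loss(\minparam)] \leq \frac{B\gradsamplebound}{\sqrt{n}} + \frac{1}{n}\sum_{t=1}^{n} \bigl|\E[\ipr{\param^{(t)}-\minparam}{\eps^{(t)}}]\bigr|.
\end{align*}

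For the cross term, I condition on $(\estpropensity,\estccdf)$: by sample splitting the samples $\{(Y_0^{(s)}, Z^{(s)})\}_{s=1}^n$ remain IID from their original marginal distribution given the nuisance estimates, and $\param^{(t)}$ is a measurable function only of the nuisance estimates and $\{(Y_0^{(s)}, Z^{(s)})\}_{s<t}$, hence independent of $(Y_0^{(t)}, Z^{(t)})$ given the nuisance estimates. Applying Lemma \ref{lemma:pseudoerror_bound} pointwise in the frozen value of $\param^{(t)}$ yields $\|\E[\eps^{(t)} \mid \param^{(t)},\estpropensity,\estccdf]\| \leq (2\featurebound/\propensitybound)\,\nuisanceerror(\estpropensity,\estccdf[0],\estccdf[1])$. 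Combining with the projection-based bound $\|\param^{(t)}-\minparam\| \leq 2B$ via Cauchy--Schwarz gives $|\E[\ipr{\param^{(t)}-\minparam}{\eps^{(t)}}]| \leq (4B\featurebound/\propensitybound)\,\nuisanceerror$ uniformly in $t$. Substituting back and noting that $B\gradsamplebound/\sqrt{n} = 2B\featurebound/(\propensitybound\sqrt{n})$ is dominated by $C_1/\sqrt{n}$ with $C_1 = 4B\featurebound/\propensitybound$ yields the stated bound. The main obstacle is precisely this conditional application of Lemma \ref{lemma:pseudoerror_bound} in the adaptive SGD setting, since $\param^{(t)}$ is random and data-dependent: the sample-splitting hypothesis must be used carefully to decouple $\param^{(t)}$ from the fresh draw $(Y_0^{(t)}, Z^{(t)})$ so that the lemma, which is stated for deterministic $\param$, can be applied pointwise before integrating back out.
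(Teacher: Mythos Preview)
Your proposal is correct and follows essentially the same route as the paper: verify convexity of $\loss$ via the linearity in Assumption~\ref{ass:bdd_sgd_conv}(b), obtain the uniform bound $\|v^{(t)}\|\leq 2\featurebound/\propensitybound$ from Assumption~\ref{ass:bdd_sgd_conv}(a)--(c), apply Lemma~\ref{lemma:sgd_biased_conv}, and then control the bias term via Cauchy--Schwarz, the projection bound $\|\param^{(t)}-\minparam\|\leq 2B$, and Lemma~\ref{lemma:pseudoerror_bound}. If anything, your treatment of the conditioning step---freezing $(\estpropensity,\estccdf,\param^{(t)})$ and invoking sample splitting so that the fresh draw $(Y_0^{(t)},Z^{(t)})$ is independent of $\param^{(t)}$---is spelled out more explicitly than in the paper's own argument.
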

\begin{proof}
    First note that
    \begin{align*}   
    \E[\gradsampleloss(\param^{(t)},Y_0^{(t)},Z^{(t)})|\param^{(t)}]&= \nabla{\param^{(t)}}\loss(\param^{(t)})\\
    &= \E\lrbrs{\nabla_{\param^{(t)}}\cqc_{\param^{(t)}}(Y_0|X)\lrbr{\ccdf[1]\{\cqc_{\param^{(t)}}(Y_0|X)|X\}-\ccdf[0](Y_0|X)}|\param^{(t)}}.
    \end{align*}

    We now aim to show that we are in the scenario of Lemma \ref{lemma:sgd_biased_conv} with 
    \begin{align*}
        \eps^{(t)}=\estgradsampleloss(\param^{(t)},Y_0^{(t)},Z^{(t)})-\E[\gradsampleloss(\param^{(t)},Y_0,Z)|\param^{(t)}].
    \end{align*}

    First we show that under Assumption \ref{ass:bdd_sgd_conv}(b), $\loss(\param)$ is convex as a function of $\param$.

    To this end we note that $\fixedsampleloss(y_1,y_0,\bm x)$ is convex w.r.t. $y_1$ as, by construction, 
    \begin{align*}
        \partial_{y_1}\fixedsampleloss(y_1,y_0,\bm x)=\ccdf[1](y_1|\bm x)-\ccdf[0](y_0|\bm x)
    \end{align*}
    which is increasing in $y_1$ for any $y_0,\bm x$. Furthermore for any $\bm x, y_0$ $\cqc_\param$ is by construction affine in $\param$. Hence, as the composition of an affine function and a convex function is convex, we have that
    \begin{align*}
        \sampleloss(\param,y_0,\bm x)=\fixedsampleloss(\cqc_\param(y_0|\bm x),y_0,\bm x)
    \end{align*}
    is convex w.r.t. $\param$.
    Hence as integrals of convex functions are convex, $\loss(\param)=\E[\sampleloss(\param,Y_0,Z)]$ is also convex w.r.t. $\param$.

    We also have that from Assumption \ref{ass:bdd_sgd_conv}(a) that $\gradfixedsampleloss(y_1,y_0,\bm x)\leq 1+1/\propensitybound$ for all $y_1,y_0,\bm x$ combining this with Assumptions \ref{ass:bdd_sgd_conv}(b)\&(c) we have that 
    \begin{align*}
        \|v^{(t)}\|&\leq\sup_{\param, y_0,\bm z}\|\featurefunc(y_0,\bm x)\gradfixedsampleloss(\param^T\featurefunc(y_0|\bm x), y_0,\bm z)\\
        &\leq \featurebound\cdot(1+1/\propensitybound)\leq\frac{2\featurebound}{\propensitybound}.
    \end{align*}

    Meaning that we are in the setting of Lemma \ref{lemma:sgd_biased_conv}.

    Taking expectations on over the result of the Lemma gives
    \begin{align*}
       \E[\loss(\estparam)-\loss(\minparam)] \leq \frac{2B\featurebound}{\propensitybound\sqrt{n}}-\frac{1}{n}\sum_{t=1}^{n}\E\lrbrs{\ipr{\param^{(t)}-\minparam}{\eps^{(t)}}}
    \end{align*}
    and all we have remaining to do is bound $-\E\lrbrs{\ipr{\param^{(t)}-\minparam}{\eps^{(t)}}}$. 
    For this we have that
    \begin{align*}
        -\E\lrbrc{\ipr{\param^{(t)}-\minparam}{\eps^{(t)}}}&=-\E\lrbrs{\ipr{\param^{(t)}-\minparam}{\E\lrbrs{\eps^{(t)}\Big|\param^{(t)}}}}\\
        &\leq \E\lrbrc{\norm{\param^{(t)}-\minparam}\norm{\E\lrbrs{\eps^{(t)}\Big|\param^{(t)}}}}\quad\text{by the Cauchy-Schwartz inequality}\\
        &\leq \E\lrbrc{\norm{\param^{(t)}-\minparam}\E\lrbrs{\norm{\eps^{(t)}}\Big|\param^{(t)}}}\quad\text{by Jensen's inequality}\\
        &\leq \frac{2\featurebound}{\propensitybound}\sqrt{\E\lrbrs{\bigg|\propensity(X)-\estpropensity(X)\bigg|^2}\E\lrbrs{\sup_{\substack{y_0\in \responsespace,\\ \zerone\in\{0,1\}}}\abs{\ccdf(y_0|X)-\estccdf(y_0|X)}^2}}\\
        &\qquad\cdot\E\lrbrs{\norm{\param^{(t)}-\minparam}}\quad\text{by Lemma \ref{lemma:pseudoerror_bound}}\\
        &\leq \frac{4B\featurebound}{\propensitybound}\sqrt{\E\lrbrs{\bigg|\propensity(X)-\estpropensity(X)\bigg|^2}\E\lrbrs{\sup_{\substack{y_0\in \responsespace,\\ \zerone\in\{0,1\}}}\abs{\ccdf(y_0|X)-\estccdf(y_0|X)}^2}}.
    \end{align*}
    with the final line coming from our projection step.
    Combining this with Lemma \ref{lemma:sgd_biased_conv} gives our desired result.
\end{proof}

\subsubsection{Strongly Convex Convergence}

\begin{lemma}\label{lemma:sgd_biased_strong_conv}
Let $\loss(\param)$ be a strongly function w.r.t. $\param$ with strong convexity parameter $\strongconvexityparam$ and define $\minparam=\argmin_{\param}\loss(\param)$. Assume that $\|\minparam\|\leq B$ for some $B>0$. Define $\param^{(1)}=\bm 0$ and inductively take
\begin{align*}
    \param^{(t+\half)}&=\param^{(t)}-\lr[t] v^{(t)}&
    \param^{(t+1)}&=\argmin_{\param:\|\param\|\leq B }\|\param-\param^{(t+\half)}\|
\end{align*}
with $\lr[t]=\frac{1}{\strongconvexityparam t}$ and $v^{(1)},\dotsc, v^{(n)}$ a sequence of RVs satisfying $\|v^{(t)}\|\leq\gradsamplebound$ almost surely. Finally, take our parameter estimate to be $\estparam=\inv{n}\sum_{t=1}^n\param^{(t)}$.

Then we have that 
\begin{align*}
    \loss(\estparam)-\loss(\minparam)\leq\frac{\gradsamplebound^2}{2\strongconvexityparam}\frac{1+\log(n)}{n}-\inv{n}\sum_{t=1}^n\ipr{\param^{(t)}-\param^*}{\eps^{(t)}}\\
\end{align*}
where $\eps^{(t)}\coloneqq v^{(t)}-\nabla_{\param^{(t)}}\loss(\param^{(t)})$. 
\end{lemma}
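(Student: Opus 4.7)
The plan is to adapt the classical projected SGD analysis for $\strongconvexityparam$-strongly convex objectives with decaying step size $\lr[t] = 1/(\strongconvexityparam t)$, carrying the bias $\eps^{(t)} = v^{(t)} - \nabla_{\param^{(t)}}\loss(\param^{(t)})$ through the argument unaggregated so that the final bound is pathwise rather than in expectation.

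First I would exploit non-expansiveness of Euclidean projection onto $\{\param:\|\param\|\leq B\}$ (which contains $\minparam$ by assumption) to obtain $\|\param^{(t+1)}-\minparam\|^2 \leq \|\param^{(t+\half)}-\minparam\|^2 = \|\param^{(t)}-\minparam\|^2 - 2\lr[t]\ipr{v^{(t)}}{\param^{(t)}-\minparam} + \lr[t]^2\|v^{(t)}\|^2$. Rearranging isolates $\ipr{v^{(t)}}{\param^{(t)}-\minparam}$, and substituting $v^{(t)} = \nabla_{\param^{(t)}}\loss(\param^{(t)}) + \eps^{(t)}$ together with the defining inequality of $\strongconvexityparam$-strong convexity,
\begin{align*}
\loss(\param^{(t)}) - \loss(\minparam) \leq \ipr{\nabla_{\param^{(t)}}\loss(\param^{(t)})}{\param^{(t)}-\minparam} - \frac{\strongconvexityparam}{2}\|\param^{(t)}-\minparam\|^2,
\end{align*}
yields the per-iterate bound
\begin{align*}
\loss(\param^{(t)}) - \loss(\minparam) \leq \frac{\|\param^{(t)}-\minparam\|^2 - \|\param^{(t+1)}-\minparam\|^2}{2\lr[t]} - \frac{\strongconvexityparam}{2}\|\param^{(t)}-\minparam\|^2 + \frac{\lr[t]}{2}\|v^{(t)}\|^2 - \ipr{\eps^{(t)}}{\param^{(t)}-\minparam}.
\end{align*}

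Next I would sum from $t=1$ to $n$. The choice $\lr[t] = 1/(\strongconvexityparam t)$, giving $1/(2\lr[t]) = \strongconvexityparam t/2$, is tuned precisely so that reindexing $\sum_{t=1}^n (\strongconvexityparam t/2)\|\param^{(t+1)}-\minparam\|^2$ as $\sum_{t=2}^{n+1}(\strongconvexityparam(t-1)/2)\|\param^{(t)}-\minparam\|^2$ and then subtracting $\sum_{t=1}^n (\strongconvexityparam/2)\|\param^{(t)}-\minparam\|^2$ collapses every intermediate squared-distance term, leaving only $-(\strongconvexityparam n/2)\|\param^{(n+1)}-\minparam\|^2 \leq 0$. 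The remaining gradient-magnitude sum is controlled via the almost-sure bound $\|v^{(t)}\|\leq\gradsamplebound$ and the harmonic estimate $\sum_{t=1}^n 1/t \leq 1 + \log(n)$, giving $\sum_{t=1}^n \lr[t]\|v^{(t)}\|^2/2 \leq \gradsamplebound^2(1+\log n)/(2\strongconvexityparam)$. Dividing the summed inequality by $n$ and applying Jensen's inequality to $\estparam = \inv{n}\sum_{t=1}^n\param^{(t)}$ delivers the claimed bound.

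The main delicate point is the telescoping cancellation: one must verify that after reindexing, the net coefficient on each intermediate $\|\param^{(t)}-\minparam\|^2$ for $2\leq t\leq n$ is exactly $\strongconvexityparam t/2 - \strongconvexityparam(t-1)/2 - \strongconvexityparam/2 = 0$, so that the strongly convex penalty absorbs the growing weights from the divergence estimate and only the endpoint term survives (with a non-positive sign). By contrast, the bias-inner-product term $\ipr{\eps^{(t)}}{\param^{(t)}-\minparam}$ is not controlled in this lemma; it is deliberately left unaggregated in the statement so that a subsequent argument, analogous to how Lemma \ref{lemma:sgd_biased_conv} is combined with Lemma \ref{lemma:pseudoerror_bound} in the convex case, can take expectations, apply Cauchy--Schwartz, and convert it into the usual nuisance-parameter product error.
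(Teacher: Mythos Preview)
Your proposal is correct and follows essentially the same approach as the paper's proof. The paper invokes \citet{shalevshwartz2014} section 14.4.1 for the projection inequality that you derive explicitly from non-expansiveness, and then applies Jensen, strong convexity, the substitution $v^{(t)}=\gradstep[t]+\eps^{(t)}$, and the telescoping-plus-harmonic-sum argument in the same order; your treatment of the telescoping cancellation is in fact more carefully spelled out than the paper's.
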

\begin{proof}[Proof of Theorem \ref{lemma:sgd_biased_conv}]
    Define $\gradstep[t]\coloneqq\nabla_{\param^{(t)}}\loss(\param^{(t)})$ so that $\E[v^{(t)}|\param^{(t)}]=\gradstep[t]+\eps^{(t)}$ where $\gradstep[t]$ represents the unbiased gradient estimate and $\eps^{(t)}$ represents the bias.

    From \citet{shalevshwartz2014} section 14.4.1 we have that
    \begin{align*}
        \ipr{\param^{(t)}-\param^*}{v^{(t)}}&\leq\frac{\lr[t]}{2}\|v^{(t)}\|^2~+~\frac{\|\param^{(t)}-\trueparam\|-\|\param^{(t+1)}-\trueparam\|^2}{2\lr[t]}
    \end{align*}

    Additionally we have
    \begin{align*}
        \loss(\estparam)-\loss(\param^*)&\leq \inv{n}\sum_{t=1}^n\loss(\param^{(t)})-\loss(\param^*)\quad\text{by Jensen's inequality.}\\
        &\leq \inv{n}\sum_{t=1}^n\ipr{\param^{(t)}-\param^*}{ \gradstep[t]}-\frac{\strongconvexityparam}{2}\|\param^{(t)}-\trueparam\|^2\quad\text{by strong convexity of $\loss$}\\
        &=\inv{n}\sum_{t=1}^n\ipr{\param^{(t)}-\param^*}{ v^{(t)}} -\frac{\strongconvexityparam}{2}\|\param^{(t)}-\trueparam\|^2~-~\inv{n}\sum_{t=1}^n\ipr{\param^{(t)}-\param^*}{\eps^{(t)}}\\
          \intertext{then from our prior result we get}
        &=\inv{n}\sum_{t=1}^n\frac{\lr[t]}{2}\|v^{(t)}\|^2~+~\frac{\|\param^{(t)}-\trueparam\|-\|\param^{(t+1)}-\trueparam\|^2}{2\lr[t]} -\frac{\strongconvexityparam}{2}\|\param^{(t)}-\trueparam\|^2\param\\
            &\quad~~-~\inv{n}\sum_{t=1}^n\ipr{\param^{(t)}-\param^*}{\eps^{(t)}}\\
        &\leq \inv{n}\sum_{t=1}^n\frac{\gradsamplebound^2}{2\strongconvexityparam t}~+~\inv{n}\frac{-\|\param^{(n+1)}-\trueparam\|^2}{2\lr[n]} ~-~\inv{n}\sum_{t=1}^n\ipr{\param^{(t)}-\param^*}{\eps^{(t)}}\\
        &\leq \frac{\gradsamplebound^2}{2\eta}\frac{1+\log(n)}{n}-\inv{n}\sum_{t=1}^n\ipr{\param^{(t)}-\param^*}{\eps^{(t)}}.
    \end{align*}
\end{proof}

\begin{prop}\label{prop:cqc_est_strongconv}
    Suppose that assumption \ref{ass:bdd_sgd_conv} holds and that $\|\param^*\|\leq B$ for some $B>0$.
    Additionally now suppose that $\density[1](y|\bm x)>\densitylb$ for all $y,\bm x$ and that the minimum eigenvalue of $\E\lrbrs{\featurefunc(Y_0,X)\featurefunc(Y_0,X)^\T}$ is greater than $\featurecorrlb$.
    
    Define $\param^{(1)}=\bm 0$ and inductively take
    \begin{align*}
        \param^{(t+\half)}&=\param^{(t)}-\lr[t] v^{(t)}&
        \param^{(t+1)}&=\argmin_{\param:\|\param\|\leq B }\|\param-\param^{(t+\half)}\|
    \end{align*}
    with $\lr[t]=\frac{1}{\featurecorrlb\densitylb n}$ and $v^{(t)}\coloneqq\estgradsampleloss(\param^{(t)},Y_0^{(t)},Z^{(t)})$. Finally, take our parameter estimate to be \\ $\estparam=\inv{n}\sum_{t=1}^n\param^{(t)}$.
    Then we have 
    \begin{align*}
        \E[\loss(\estparam)\!-\!\loss(\minparam)]\!\leq\! C_2\!\lrbr{\!\frac{1+\log(n)}{n}\!+\!\!\!\sqrt{\!\E\!\!\lrbrs{\Big(\propensity(X)-\estpropensity(X)\Big)^2}\!\!\E\!\!\lrbrs{\!\sup_{\substack{y_0\in\responsespace\\ \zerone\in\{0,1\}}}\!\!\!\!\lrbr{\ccdf(y_0|X)-\estccdf(y_0|X)}^2}}}
    \end{align*}
    with $C_2=\frac{\featurebound^2}{\propensitybound\featurecorrlb\densitylb}+\frac{4B\featurebound}{\propensitybound}$
\end{prop}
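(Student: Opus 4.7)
The plan is to mirror the structure of the proof of Proposition \ref{prop:cqc_est_conv}, but replace the convex SGD result (Lemma \ref{lemma:sgd_biased_conv}) with its strongly convex counterpart (Lemma \ref{lemma:sgd_biased_strong_conv}). The essential new ingredient is establishing that $\loss$ is strongly convex with parameter $\strongconvexityparam = \densitylb\featurecorrlb$, while the bias-term analysis carries over almost verbatim.

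First, I would verify the assumed form of $\loss$ is twice differentiable in $\param$ and compute its Hessian. Since $\cqc_\param(y_0|\bm x) = \param^\T\featurefunc(y_0,\bm x)$ is linear in $\param$ by Assumption \ref{ass:bdd_sgd_conv}(b), and $\partial_{y_1}^2 \fixedsampleloss(y_1,y_0,\bm x) = \partial_{y_1}\contrastfunc(y_1,y_0,\bm x) = \density[1](y_1|\bm x)$, the chain rule gives
\begin{equation*}
\nabla_\param^2 \sampleloss(\param,y_0,\bm x) = \density[1]\!\bigl(\param^\T\featurefunc(y_0,\bm x)\,\big|\,\bm x\bigr)\,\featurefunc(y_0,\bm x)\featurefunc(y_0,\bm x)^\T.
\end{equation*}
Exchanging derivatives and expectation (justified by the boundedness of $\featurefunc$ via Assumption \ref{ass:bdd_sgd_conv}(c) and dominated convergence), then applying the density lower bound $\density[1] \geq \densitylb$ and the eigenvalue bound $\E[\featurefunc(Y_0,X)\featurefunc(Y_0,X)^\T] \succeq \featurecorrlb I$, yields $\nabla^2_\param \loss(\param) \succeq \densitylb\featurecorrlb I$, i.e.\ $\loss$ is $\densitylb\featurecorrlb$-strongly convex. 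This is the main conceptual step of the proof.

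Next, I would reuse the almost-sure gradient bound $\|v^{(t)}\| \leq 2\featurebound/\propensitybound$ already established in the proof of Proposition \ref{prop:cqc_est_conv} (it only used Assumptions \ref{ass:bdd_sgd_conv}(a)-(c) and the definition of $\estgradsampleloss$). Applying Lemma \ref{lemma:sgd_biased_strong_conv} with $\strongconvexityparam = \densitylb\featurecorrlb$ and $\gradsamplebound = 2\featurebound/\propensitybound$ and taking expectations gives
\begin{equation*}
\E[\loss(\estparam) - \loss(\minparam)] \leq \frac{\gradsamplebound^2}{2\strongconvexityparam}\cdot\frac{1+\log n}{n} \;-\; \inv{n}\sum_{t=1}^n \E\!\left[\ipr{\param^{(t)}-\minparam}{\eps^{(t)}}\right],
\end{equation*}
where the first term contributes the $\featurebound^2/(\propensitybound^2\densitylb\featurecorrlb)$ factor (up to constants matching $C_2$).

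Finally, for the bias term, I would reproduce the chain of inequalities from the proof of Proposition \ref{prop:cqc_est_conv}: tower property to pass from $\eps^{(t)}$ to its conditional mean, Cauchy-Schwarz, Jensen's inequality, Lemma \ref{lemma:pseudoerror_bound} to bound $\|\E[\eps^{(t)}\mid\param^{(t)}]\|$ by $(2\featurebound/\propensitybound)\cdot\nuisanceerror(\estpropensity,\estccdf[0],\estccdf[1])$, and finally the projection bound $\|\param^{(t)}-\minparam\| \leq 2B$ (from both parameters lying in the ball of radius $B$). Summing over $t$ and dividing by $n$ yields a $\frac{4B\featurebound}{\propensitybound}\nuisanceerror$ contribution, which combines with the variance term to give the stated $C_2$. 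The main obstacle is the strong-convexity verification, since all remaining pieces are mechanical adaptations; care is needed there because $\density[1]$ is evaluated at the data-dependent argument $\param^\T\featurefunc(Y_0,X)$, so the lower bound must hold uniformly over $y$, as assumed.
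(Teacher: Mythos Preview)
Your proposal is correct and follows essentially the same route as the paper: establish $\densitylb\featurecorrlb$-strong convexity of $\loss$ via the Hessian computation $\nabla_\param^2\loss(\param)=\E[\density[1](\cqc_\param(Y_0|X)|X)\,\featurefunc(Y_0,X)\featurefunc(Y_0,X)^\T]\succeq\densitylb\featurecorrlb I$, invoke Lemma~\ref{lemma:sgd_biased_strong_conv} with $\gradsamplebound=2\featurebound/\propensitybound$, and then bound the bias term exactly as in Proposition~\ref{prop:cqc_est_conv} via Lemma~\ref{lemma:pseudoerror_bound} and the projection bound $\|\param^{(t)}-\minparam\|\le 2B$. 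The only (minor) discrepancy is in constant tracking for the leading term, where your $\gradsamplebound^2/(2\strongconvexityparam)$ gives a $\propensitybound^2$ in the denominator rather than the stated $\propensitybound$; the paper's own proof has the same loose bookkeeping here, so this does not reflect a gap in your argument.
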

\begin{proof}
This is almost identical to the proof of Proposition \ref{prop:cqc_est_conv}. The only additional step is to prove strong convexity of $\loss(\param)$.

We have that 
\begin{align*}
    \nabla_\param\loss(\param)&=\E\lrbrs{\nabla_\param(\cqc_\param(y|\bm x))\lrbr{\ccdf[1]\lrbrs{\cqc_\param(Y_0|\bm x)|\bm x}}-\ccdf[0]\lrbrs{y_0|\bm x}}\\
    &=\E\lrbrs{\featurefunc(Y_0,X)\cdot \lrbr{\ccdf[1]\lrbrs{\cqc_\param(Y_0|\bm x)|\bm x}}-\ccdf[0]\lrbrs{y_0|\bm x}}\\
    \Rightarrow \nabla^2_{\param}\loss(\param)&=\E\lrbrs{\featurefunc(Y_0,X)\featurefunc(Y_0,X)^\T\cdot \partial_{\cqc_\param(Y_0|X)}\lrbr{\ccdf[1]\lrbrs{\cqc_\param(Y_0|X)|X}}-\ccdf[0]\lrbrs{Y_0|X}}\\
    &=\E\lrbrs{\featurefunc(Y_0,X)\featurefunc(Y_0,X)^\T\cdot \density[1]\lrbrc{\cqc_\param(Y_0|X)}}\\
    &\geq \densitylb\E\lrbrs{\featurefunc(Y_0,X)\featurefunc(Y_0,X)^\T}
\end{align*}
which by our assumptions has minimum Eigenvalue greater than $\densitylb\featurecorrlb$. Hence $\loss(\param)$ is strongly convex with parameter $\eta\coloneqq\densitylb\featurecorrlb$. 

Now we can proceed as in Theorem \ref{thm:cqc_est_conv} to obtain
\begin{align*}
    \E[\loss(\estparam)-\loss(\minparam)]\leq \frac{\featurebound}{\propensitybound\featurecorrlb\densitylb}\frac{1+\log(n)}{n}-\E\lrbrs{\inv{n}\sum_{t=1}^n\ipr{\param^{(t)}-\minparam}{\eps^{(t)}}}.
\end{align*}
Then using Lemma \ref{lemma:pseudoerror_bound} and following an identical approach to Proposition \ref{prop:cqc_est_conv} we get our result.
\end{proof}

\subsubsection{Proof of Theorem \ref{thm:cqc_est_conv}}
\begin{proof}[Proof of Theorem \ref{thm:cqc_est_conv}]
    This result is simply the concatenations of Propositions \ref{prop:cqc_est_conv} \& \ref{prop:cqc_est_strongconv}.
\end{proof}

\subsubsection{Probability Bounds}\label{app:cqc_est_conv_prob}
We first state a version of Azuma-Hoeffding bound which will be useful for our work. This Lemma is a slight modification of the version found in \citet{Wainwright2019a}.
\begin{lemma}[Azuma-Hoeffding]\label{lemma:azuma-hoeffding}
    For $n\in\N$, let $W^{(1)},\dotsc,W^{(n)}$ be a Martingale difference sequence with respect to filtration $\{\cal{F}^{(t)}\}_{t=1}^n$

    Suppose also that $\abs{W^{(t)}}\leq\martingalebound$ a.s. for all $t\in[n]$. We then have that for any $\delta>0$
    \begin{align*}
        \prob\lrbr{\inv{n}\sum_{t=1}^n W^{(t)}\leq \sqrt{\frac{2\log(1/\delta)}{n}}}\geq 1-\delta.
    \end{align*}
\end{lemma}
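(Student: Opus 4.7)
The plan is to follow the standard Chernoff-type argument for martingale differences, since the stated bound is the classical Azuma-Hoeffding concentration inequality for bounded martingale differences.

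First I would establish control on the conditional moment generating function of each increment. Fix $\lambda \in \R$. Since $W^{(t)}$ is $\mathcal{F}^{(t)}$-measurable with $\E[W^{(t)} \mid \mathcal{F}^{(t-1)}] = 0$ and $|W^{(t)}| \leq \martingalebound$ almost surely, Hoeffding's lemma applied conditionally on $\mathcal{F}^{(t-1)}$ yields
\begin{align*}
\E\!\left[e^{\lambda W^{(t)}} \,\big|\, \mathcal{F}^{(t-1)}\right] \leq \exp\!\lrbr{\frac{\lambda^2 \martingalebound^2}{2}}.
\end{align*}
The proof of Hoeffding's lemma itself uses convexity of the exponential to dominate $W^{(t)}$ by a two-point distribution on $\{-\martingalebound, \martingalebound\}$ with the correct conditional mean, followed by a Taylor expansion of the resulting log-MGF.

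Next I would iterate this bound along the filtration using the tower property. Writing $S^{(n)} \coloneqq \sum_{t=1}^n W^{(t)}$, I peel off the last factor:
\begin{align*}
\E\!\left[e^{\lambda S^{(n)}}\right] = \E\!\left[e^{\lambda S^{(n-1)}}\, \E\!\left[e^{\lambda W^{(n)}} \,\big|\, \mathcal{F}^{(n-1)}\right]\right] \leq e^{\lambda^2 \martingalebound^2 / 2}\, \E\!\left[e^{\lambda S^{(n-1)}}\right].
\end{align*}
Induction on $n$ then gives $\E[\exp(\lambda S^{(n)})] \leq \exp(n \lambda^2 \martingalebound^2 / 2)$.

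Finally I would apply a Chernoff bound and optimise over $\lambda$. For any $s > 0$ and $\lambda > 0$, Markov's inequality gives
\begin{align*}
\prob\!\lrbr{S^{(n)} \geq s} \leq e^{-\lambda s}\, \E\!\left[e^{\lambda S^{(n)}}\right] \leq \exp\!\lrbr{-\lambda s + \frac{n \lambda^2 \martingalebound^2}{2}},
\end{align*}
and minimising the right-hand side at $\lambda = s/(n \martingalebound^2)$ yields $\prob(S^{(n)} \geq s) \leq \exp(-s^2/(2 n \martingalebound^2))$. Setting this probability equal to $\delta$ and solving for $s$ gives the threshold $s = \martingalebound\sqrt{2 n \log(1/\delta)}$, which after dividing by $n$ recovers the displayed bound (with the implicit $\martingalebound$ factor). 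There is no real obstacle here, the only subtlety being the careful conditioning in the Hoeffding step to exploit the martingale difference property; everything else is routine Chernoff bookkeeping.
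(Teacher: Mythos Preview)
Your proof is correct and follows the standard Chernoff--Hoeffding argument. The paper itself does not give a proof of this lemma; it simply states that it is a slight modification of the version in \citet{Wainwright2019a}, which is established via exactly the approach you outline. Your observation that the displayed bound is missing a factor of $\martingalebound$ is also correct, as confirmed by the paper's subsequent application of the lemma in Proposition~\ref{prop:cqc_est_conv_prob}, where the constant $C_3 = \martingalebound\sqrt{2}$ appears.
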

\begin{remark}
     For $W^{(t)}$ to be a martingale difference sequence we must have that $W^{(t)}$ is $\cal{F}^{(t)}$ measurable, $\E[|W^{(t)}|]<\infty$, and $\E[W^{(t)}|\cal{F}^{(t-1)}]=0$ a.s. .
\end{remark}

We now get finite sample probability result in the setting of the first part of Theorem \ref{thm:cqc_est_conv}. For clarity we restate this setting in the result.
\begin{prop}\label{prop:cqc_est_conv_prob}
    Suppose that assumption \ref{ass:bdd_sgd_conv} holds and that $\|\minparam\|\leq B$ for some $B>0$.
    For $t\in[n]$, define $\param^{(t)}$ inductively by
    \begin{align*}
        \param^{(t+\half)}&=\param^{(t)}-\lr[t] v^{(t)}&
        \param^{(t+1)}&=\argmin_{\param:\|\param\|\leq B }\|\param-\param^{(t+\half)}\|
    \end{align*}
    with $\param^{(1)}=\bm 0$, $\lr[t]=\frac{B\propensitybound}{2\featurebound\sqrt{n}}$, and $v^{(t)}\coloneqq\estgradsampleloss(\param^{(t)},Y_0^{(t)},Z^{(t)})$. Finally, define the parameter estimate as $\estparam=\inv{n}\sum_{t=1}^n\param^{(t)}$.
    Then if $\estpropensity,\estccdf$ are independent of $\lrbrc{(Y^{(t)}_0,Z^{(t)}}_{t=1}^n$, we have that for any $\delta>0$, with probability at least $1-\delta$,
    \begin{align*}
       \loss(\estparam)-\loss(\minparam)&\leq C_3\frac{1+\sqrt{\log(1/\delta)}}{\sqrt{n}}+\bigg|\propensity(X)-\estpropensity(X)\bigg|\sup_{\substack{y_0\in \responsespace,\\ \zerone\in\{0,1\}}}\abs{\ccdf(y_0|X)-\estccdf(y_0|X)}.
    \end{align*}
    with $C_3\coloneqq16\sqrt{2}\frac{B\featurebound}{\propensitybound}$
\end{prop}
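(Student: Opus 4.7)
}
The plan is to upgrade the expectation-level argument behind Proposition \ref{prop:cqc_est_conv} to a high-probability statement by combining Lemma \ref{lemma:sgd_biased_conv} (which already holds pathwise, since the inequality $\sum_t \langle \param^{(t)}-\minparam, v^{(t)}\rangle \leq B\gradsamplebound \sqrt{n}$ used in its proof is deterministic in the standard online-SGD analysis of \citet{shalevshwartz2014}) with an Azuma--Hoeffding bound (Lemma \ref{lemma:azuma-hoeffding}) applied to a suitable martingale difference sequence. Exactly as in the proof of Proposition \ref{prop:cqc_est_conv}, under Assumption \ref{ass:bdd_sgd_conv} we have $\gradsamplebound \leq 2\featurebound/\propensitybound$, so Lemma \ref{lemma:sgd_biased_conv} yields
\begin{align*}
\loss(\estparam) - \loss(\minparam) \leq \frac{2B\featurebound}{\propensitybound \sqrt{n}} - \frac{1}{n}\sum_{t=1}^{n}\ipr{\param^{(t)}-\minparam}{\eps^{(t)}},
\end{align*}
and the task reduces to controlling the sample average on the right hand side with high probability.

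Next, I would introduce the filtration $\cal{F}^{(t)} = \sigma\lrbr{\rsampsplitone, \{(Y_0^{(s)}, Z^{(s)})\}_{s\leq t}}$ so that $\param^{(t)}$ and the nuisance estimates $\estpropensity, \estccdf$ are $\cal{F}^{(t-1)}$-measurable, and decompose the error into a conditionally centred part and a conditional bias,
\begin{align*}
\eps^{(t)} \;=\; \underbrace{\eps^{(t)} - \E[\eps^{(t)}\mid \cal{F}^{(t-1)}]}_{=: M^{(t)}} \;+\; \underbrace{\E[\eps^{(t)}\mid \cal{F}^{(t-1)}]}_{=: b^{(t)}}.
\end{align*}
The sequence $W^{(t)} := -\ipr{\param^{(t)}-\minparam}{M^{(t)}}$ is then an $\{\cal{F}^{(t)}\}$-martingale difference sequence, and is bounded almost surely by $\|\param^{(t)}-\minparam\|\cdot\|M^{(t)}\| \leq 2B \cdot 2\gradsamplebound = 8 B \featurebound / \propensitybound$ using $\|\eps^{(t)}\| \leq 2\gradsamplebound$ together with Jensen's inequality. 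Applying Lemma \ref{lemma:azuma-hoeffding} to $\{W^{(t)}\}$ then gives, with probability at least $1-\delta$,
\begin{align*}
-\frac{1}{n}\sum_{t=1}^n \ipr{\param^{(t)}-\minparam}{M^{(t)}} \;\leq\; \frac{8 B \featurebound}{\propensitybound}\sqrt{\frac{2\log(1/\delta)}{n}},
\end{align*}
which supplies precisely the $\sqrt{\log(1/\delta)/n}$ term in the stated bound with constant of order $16\sqrt{2}B\featurebound/\propensitybound = C_3$.

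It remains to bound the predictable bias contribution $-\frac{1}{n}\sum_t \ipr{\param^{(t)}-\minparam}{b^{(t)}}$. Here I would reuse the pointwise (pre-Jensen) computation inside the proof of Lemma \ref{lemma:pseudoerror_bound}: since $\estpropensity, \estccdf$ are $\cal{F}^{(0)}$-measurable and therefore independent of $(Y_0^{(t)}, Z^{(t)})$, the conditional bias factorises as a product of the propensity error and a sup over CCDF errors evaluated along each sampled covariate, yielding the nuisance-product term on the right-hand side. A final application of Cauchy--Schwarz with $\|\param^{(t)}-\minparam\|\leq 2B$ absorbs this into the claimed expression. The main obstacle is bookkeeping: verifying the martingale property under the chosen filtration and ensuring the Azuma bound applies despite $\param^{(t)}$ being data-dependent; once the filtration is fixed and the nuisance-estimation sample split is respected, the boundedness of $M^{(t)}$ and $b^{(t)}$ follows directly from Assumption \ref{ass:bdd_sgd_conv} and the calculations already carried out in Lemma \ref{lemma:pseudoerror_bound}.
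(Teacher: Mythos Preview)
Your plan is correct and follows essentially the same route as the paper: apply Lemma \ref{lemma:sgd_biased_conv} pathwise, split $\eps^{(t)}$ into a martingale-difference part (controlled by Azuma--Hoeffding, Lemma \ref{lemma:azuma-hoeffding}) and a conditional-bias part (controlled via the computation inside Lemma \ref{lemma:pseudoerror_bound} together with $\|\param^{(t)}-\minparam\|\leq 2B$). The only slip is in the bound on $\|M^{(t)}\|$: from $\|\eps^{(t)}\|\leq 2\gradsamplebound$ you get $\|M^{(t)}\|\leq 4\gradsamplebound$ (not $2\gradsamplebound$) after subtracting the conditional mean, which gives $|W^{(t)}|\leq 16B\featurebound/\propensitybound$ and recovers the stated constant $C_3$ exactly.
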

\begin{proof}
    Again we are in the case of Lemma \ref{lemma:sgd_biased_conv} with $\gradsamplebound=2\featurebound/\propensitybound$
    meaning we have that 
    \begin{align}\label{eq:exact_convergence_bound}
       \loss(\estparam)-\loss(\minparam) \leq \frac{2B\featurebound}{\propensitybound\sqrt{n}}-\frac{1}{n}\sum_{t=1}^{n}\ipr{\param^{(t)}-\minparam}{\eps^{(t)}}
    \end{align}

    Now for $t\in[n]$ define the filtration $\{\cal F^{(t)}\}_{t=1}^n$ by $\cal F^{(t)}=\{\{\param^{(i)}\}_{i=1}^t, \estpropensity,\estccdf\}$. Additionally define RVs $W^{(t)}=-\ipr{\param^{(t)}-\minparam}{\eps^{(t)}-\E[\eps^{(t)}|\param^{(t)},\estpropensity,\estccdf]}$.

    Then we have that $\{W^{(t)}\}_{t=1}^n$ is a martingale difference process with respect to $\{\cal F^{(t)}\}_{t=1}^n$.

    Furthermore we have that $\norm{v^{(t)}}\leq \frac{2\featurebound}{\propensitybound}$. Additionally $\norm{\nabla_{\param^{(t)}}\loss(\param^{(t)})}\leq2\featurebound$. Hence 
    \begin{align*}
        \norm{\eps^{(t)}}&\leq \frac{4\featurebound}{\propensitybound}\\
        \Rightarrow\norm{\eps^{(t)}-\E\lrbrs{\eps^{(t)}|\eps^{(t)}|\param^{(t)},\estpropensity,\estccdf}}&\leq\frac{8\featurebound}{\propensitybound}\\
        \Rightarrow\norm{W^{(t)}}&\leq\frac{16B\featurebound}{\propensitybound}.
    \end{align*}

    As such we can apply the Azuma-Hoeffding inequality stated in Lemma \ref{lemma:azuma-hoeffding} to get that
    \begin{align*}
        \prob\lrbr{\inv{n}\sum_{t=1}^n W^{(t)}\leq C_3\sqrt{\frac{\log(1/\delta)}{n}}}\geq 1-\delta.
    \end{align*}
    with $C_3=16\sqrt{2}\frac{B\featurebound}{\propensitybound}$.
    Furthermore we have that
    \begin{align*}
        \inv{n}\sum_{t=1}^n W^{(t)}&\leq C_3\sqrt{\frac{\log(1/\delta)}{n}}\\
        \Rightarrow -\inv{n}\sum_{t=1}^n \ipr{\param^{(t)}-\minparam}{\eps^{(t)}}&\leq C_3\sqrt{\frac{\log(1/\delta)}{n}}-\inv{n}\sum_{i=1}^n\ipr{\param^{(t)}-\minparam}{\E[\eps^{(t)}|\param^{(t)},\estpropensity,\estccdf]}\\
        \Rightarrow -\inv{n}\sum_{t=1}^n \ipr{\param^{(t)}-\minparam}{\eps^{(t)}}&\leq C_3\sqrt{\frac{\log(1/\delta)}{n}}+\inv{n}\sum_{i=1}^n\norm{\param^{(t)}-\minparam}\norm{\E[\eps^{(t)}|\param^{(t)},\estpropensity,\estccdf]}
    \end{align*}
    by the Cauchy-Schwartz inequality. By Lemma \ref{lemma:pseudoerror_bound} and the fact that $\norm{\param^{(t)}-\minparam}\leq 2B$ this gives that
    \begin{align*}
        -\inv{n}\sum_{t=1}^n \ipr{\param^{(t)}-\minparam}{\eps^{(t)}}&\leq \!C_3\!\!\lrbr{\sqrt{\frac{\log(1/\delta)}{n}}+\bigg|\propensity(X)-\estpropensity(X)\bigg|\sup_{\substack{y_0\in \responsespace,\\ \zerone\in\{0,1\}}}\abs{\ccdf(y_0|X)-\estccdf(y_0|X)}}\!\!.
        \end{align*}
        Hence by equation \ref{eq:exact_convergence_bound} we have that w.p. at least $1-\delta$
        \begin{align*}
        \loss(\estparam)-\loss(\minparam)&\leq C_3\frac{1+\sqrt{\log(1/\delta)}}{\sqrt{n}}+\bigg|\propensity(X)-\estpropensity(X)\bigg|\sup_{\substack{y_0\in \responsespace,\\ \zerone\in\{0,1\}}}\abs{\ccdf(y_0|X)-\estccdf(y_0|X)}.
    \end{align*}
\end{proof}

\section{Additional methods}

\subsection{IPW Approach}\label{app:ipw_approach}
Alternatively to our doubly-robust gradient estimator we can define an arguably simpler estimator which only depends on the propensity function $\pi$. This is
done by defining 
\begin{align*}
    \gradsampleloss_{\ipw}(\param,y_0,\bm z)=\nabla_\param\cqc_\param(y_0|\bm x)\lrbr{\frac{a}{\propensity(\bm x)}\one{y\leq \cqc_\param(y_0|\bm x)}}-\frac{1-a}{1-\propensity(\bm x)}\one{y\leq y_0}.
\end{align*}

We then have that $\E[\gradsampleloss_{\ipw}(\param,y_0,Z)|X=\bm x)]=\nabla_\param\sampleloss(\param,y_0,\bm x)$. Meaning that Proposition \ref{prop:dr_grad_term} holds for $\gradsampleloss_{\ipw}$ as well. 
From this we can define $\estgradsampleloss_{\ipw}$ analogously to $\estgradsampleloss_{\dr}$ and also use it in Algorithm \ref{alg:cqc_est}. This is precisely the IPW procedure presented in our results.

In these results we see that the performance of this is very poor due to it's over reliance on inverse probability weighting which can be quite unstable.

\subsection{Directly evaluating the loss}\label{app:loss_eval}
For validation purposes it can be useful to approximate the sample loss directly rather than its gradient. To obtain this from the gradient $\gradfixedsampleloss$ this we can split the objective into two parts, one involving all terms of $F_1(y_1|\bm x)$ and all other terms.

As such we re-write $\gradfixedsampleloss$ as 
\begin{align*}
    \gradfixedsampleloss_{\dr}(y_1,y_0,\bm z)
    \coloneqq& \underbrace{\frac{a}{\propensity(\bm x)}\left\lbrace{\one\{y\leq y_1\}}\right\rbrace-\frac{1-a}{1-\propensity(\bm x)}\lrbr{\one\{y\leq y_0\}-\ccdf[0](y_0|\bm x)}-\ccdf[0](y_0|\bm x)}_{I_1}\\
    &\quad~+~\underbrace{\lrbr{1-\frac{a}{\propensity(\bm x)}}\ccdf[1](y_1|\bm x)}_{I_2}
\end{align*}

Now for the first term ($I_1$) we know that an anti-(weak)derivative is which keeps the loss continuous w.r.t. $y_1$ is
\begin{align*}
    (y_1-y)\lrbrc{\frac{a}{\propensity(\bm x)}\left\lbrace{\one\{y\leq y_1\}}\right\rbrace-\frac{1-a}{1-\propensity(\bm x)}\lrbr{\one\{y\leq y_0\}-\ccdf[0](y_0|\bm x)}-\ccdf[0](y_0|\bm x)}
\end{align*}

For the second term (which is continuous as a function of $y_1$) we can use the FTC to get an antiderivative of
\begin{align*}
    \lrbr{\frac{\pi(\bm x)-a}{\pi(\bm x)}}\int_{y}^{y_1} F_1(t|\bm x)\diff t.
\end{align*}
In fact we can also view the antiderivative of $I_1$ as the integral of $I_1$ between $y_1,y$.

Combining these we thus get
\begin{align*}
    \fixedsampleloss_\dr(y_1,y_0,\bm z)=&(y_1-y)\bigg\{\frac{a}{\propensity(\bm x)}\lrbr{\one\{y\leq y_1\}}\\
    &\qquad\qquad-\frac{1-a}{1-\propensity(\bm x)}\lrbr{\one\{y\leq y_0\}-\ccdf[0](y_0|\bm x)}-\ccdf[0](y_0|\bm x)\bigg\}\\
    &~+~\lrbr{\frac{\pi(\bm x)-a}{\pi(\bm x)}}\int_{y}^{y_1} F_1(t|\bm x)\diff t\\
    \Rightarrow\E[\sampleloss(\param,Y_0,Z)]=&\E\Bigg[(\cqc_\param(Y_0|X)-Y)\bigg\{\frac{A}{\propensity(X)}\lrbr{\one\{Y\leq \cqc_\param(Y_0|X)\}}\\
    &\qquad\qquad\qquad\qquad-\frac{1-A}{1-\propensity(X)}\lrbr{\one\{Y\leq Y_0\}-\ccdf[0](Y_0|X)}-\ccdf[0](Y_0|X)\bigg\}\\
    &~+~\lrbr{\frac{\pi(X)-A}{\pi(X)}}\int_{y}^{\cqc(Y_0|X)} F_1(t|X)\diff t\Bigg]\\
\end{align*}

We can then approximate the expectation via samples and the 1D integral via quadrature to get an approximation for the loss.

\begin{remark}
    The choice of $y$ for the lower bound of the integral is simply chosen to keep the size of the integral reasonable and to give the first term a simple form. Any choice of lower bound \emph{not} depending upon $y_1$ would be valid.
\end{remark}
\section{Additional details}
\subsection{Complexity of the CQC versus the CCDF contrasting function}\label{app:cqc_vs_ccdfcontrast}
While not a strictly weaker notion, we do believe that a simple CQC function is a more natural notion than a simple CCDF contrasting function.

As a general case suppose we are in the potential outcomes framework so that $Y\equiv Y_A$ with $Y_0,Y_1$ representing our unobserved outcomes for an individual were the off or on treatment respectively. Suppose now that given $Y_0, X$ one can determine $Y_1$ as the following $Y_1=f(Y_0,X)$ with $f$ an increasing function of $Y_0$ (a natural notion wherein those who perform better off treatment also perform better on treatment.) We then have that the CQC is given by $f$, in other words $\truecqc(y_0|\bm x)=f(y_0,\bm x)$. Hence simplicity of $f$ translates directly to simplicity of the CQC.

Alternatively, for the CCDF contrasting function we get that
\begin{align*}
    \contrastfunc(y_1,y_0,\bm x)&=\ccdf[1](y_1|\bm x)-\ccdf[0](y_0|\bm x)\\
    &=\ccdf[1](y_1|\bm x)-\ccdf[1](f(y_0,\bm x)|\bm x)\\
\end{align*}
which does not necessarily cancel out to give a function of $f$ for all $y_0, y_1$. In fact the only case where we know this cancellation to occur is when $Y|X=\bm x, A=a$ are certain cases of uniform distributions. 

\subsection{Experimental details}\label{app:experimental_details}
Here we provide additional details for our experiments. For our training we used 1,000 iterations of Adam with a learning-rate of 0.1 for any optimisation based approach. For estimation of the propensity score we used logistic regression with L2 regularisation. 

For estimation of our CCDFs, we used kernel CCDF estimation. Specifically for a kernel $k:\responsespace\times\covariatespace\rightarrow \covariatespace$ and a sample $\lrbrc{\lrbr{Y^{(i)},X^{(i)}}}_{i=1}^n$, we take
\begin{align*}
    \estccdf(y|\bm x)\coloneqq \frac{\sum_{i=1}^n k(\bm x, X^{(i)})\one\{Y^{(i)}\leq y\} }{\sum_{i=1}^n k(\bm x, X^{(i)})}.
\end{align*}

For our kernel we used an RBF kernel with bandwidth parameter chosen via grid-search testing on separate data against the true CCDF.

For hyper-parameter optimisation of our CQC model, with the linear and MLP models with our approach, the only hyperparameter that was tuned was the learning rate. This was set using an 80-20 splits for training and validation from half the data used in our training (the other half being used for nuisance parameter estimation.) As our validation loss we used the sample loss given in Appendix \ref{app:loss_eval}. A choice was made to take the trimmed mean removing the top and bottom 5\% of samples in order to avoid a small number of large samples dominating the loss. For the pre-existing inversion based method, the kernel bandwidth was chosen on validation data when comparing to the true CQC when the CQC was trained on balanced data so that no nuisance parameter estimates are required. While not possible in practical examples, this was done to ensure the inverting method was not hampered by poor hyperparameter selection.

Each experiment was ran on a single 4 core CPU with 16Gb of ram and took no longer than 240 minutes to run (less than 1-minute per iteration).

The employment scheme data used in Section \ref{sec:real_world} was originally provided in \citet{Autor2010} with a Creative Commons Attribution 4.0 International Public License found here: \url{https://www.openicpsr.org/openicpsr/project/113761/version/V1/view}.

The colon cancer data used in Appendix \ref{app:colon} is provided as part of the R package survival and first introduced in \citet{Laurie1989} with no Licence provided.

\section{Additional results}\label{app:add_experiments}
\subsection{1-dim examples}\label{app:1dim_experiments}

In this example our data set-up is as follows $X\sim N(0,1)$, $Y|X=x,A=0\sim N(\cos (6 x), 1)$, $Y|X=x,A=0\sim N(2\cos(6x)+\gamma x, 4)$. Again in this case the marginal distributions contain "complexity" via the high frequency sine term which persists into the CATE, CQTE, and CCDF contrasting function however the CQC is simple, being given by $\truecqc(y|\bm x)=2y+\gamma x$. As in Section \ref{sec:numerical_experiments} we test estimation of this example with varying levels of $\gamma$ (representing steepness of our CQC), varying logit error on our nuisance parameter, and varying sample sizes. Due to a small number of outlier runs, for ease of interpretability, we present the truncated mean (where the largest and smallest 2.5\% of results for each method removed) alongside  95\% confidence intervals in figure \ref{fig:1Dim}. For transparency, we also present the standard mean with 95\% confidence intervals in Figure \ref{fig:1Dim_untrunc}.

\begin{figure}
    \centering
    \begin{subfigure}{0.32\linewidth}
        \centering
        \includegraphics[width=3\linewidth]{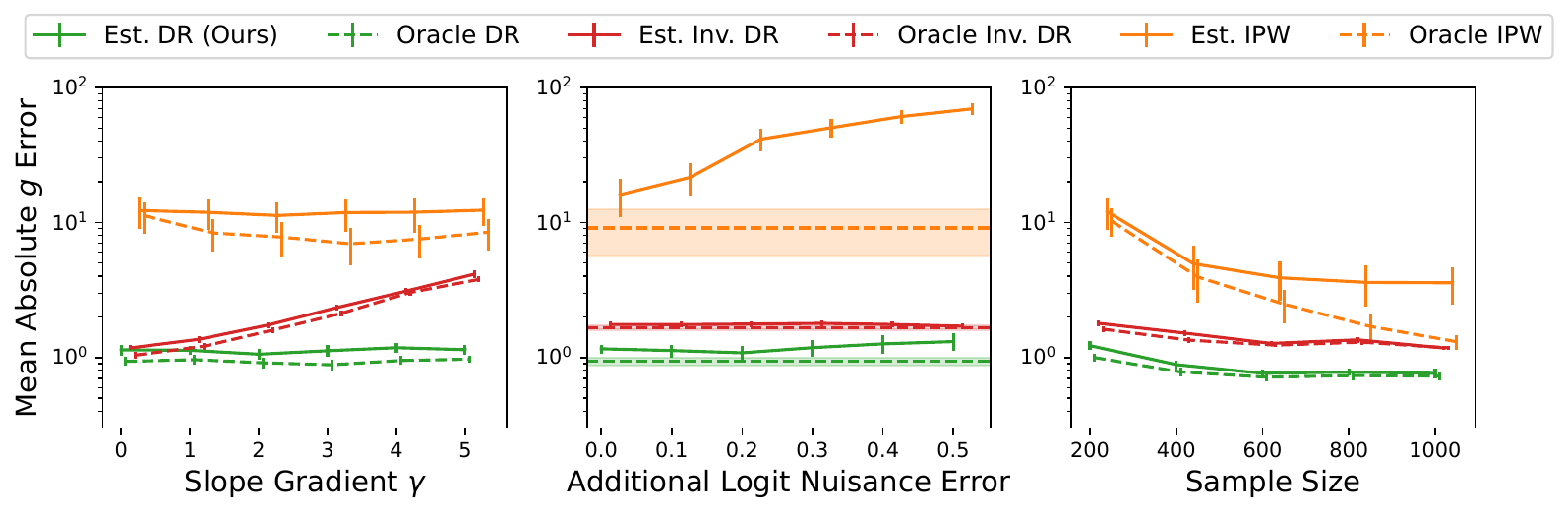}
        \caption{Varying CQC slope steepness w.r.t. $\bm x$ with sample size 500.}
        \label{fig:1Dim_varyslope}
    \end{subfigure}
    \hfill
    \begin{subfigure}{0.32\linewidth}
        \caption{Varying nuisance parameter error with sample size  500 and $\gamma=2$.}
        \label{fig:1Dim_varynuisanceerror}
    \end{subfigure}
    \hfill
    \begin{subfigure}{0.32\linewidth}
        \caption{Varying sample size with $\gamma=2$}
        \label{fig:1Dim_varysamplesize}
    \end{subfigure}
    \hfill
    \caption{Truncated mean absolute error of CQC estimate for various methods with top and bottom 2.5\% of runs removed alongside 95\% C.I.s over 100 runs. Lower is best.}
    \label{fig:1Dim}
\end{figure}

Here we see identical patterns to our previous 10-dimensional example presented in Figure \ref{fig:10Dim}, with our approach (Est. DR) performing strongest in almost all cases. We again see that as the CQC gets steeper (Figure \ref{fig:1Dim_varyslope}) our estimation error stay relatively unchanged while the estimation error of the inverting approach gets worse.

As we vary nuisance parameter estimation error (Figure \ref{fig:1Dim_varynuisanceerror}), observe that Est. DR performs best at all levels. Despite this, we again observe that there is no discernible difference between Est. Inv. DR and Oracle Inv. DR whereas Est. DR does seem to perform marginally worse than Oracle DR. This does seem to support the hypothesis that Est. Inv. DR is more robust to nuisance parameter estimation error. We do still see evidence of robustness in Est. DR however as it is still minimally affected by nuisance parameter estimation error when compared to Est. IPW (which is not doubly robust.)

In Figure \ref{fig:1Dim_varysamplesize}, we see our approach, Est. DR, having the smallest Mean absolute error across all sample sizes.

\begin{figure}[h]
    \centering
    \begin{subfigure}{0.32\linewidth}
        \centering
        \includegraphics[width=3\linewidth]{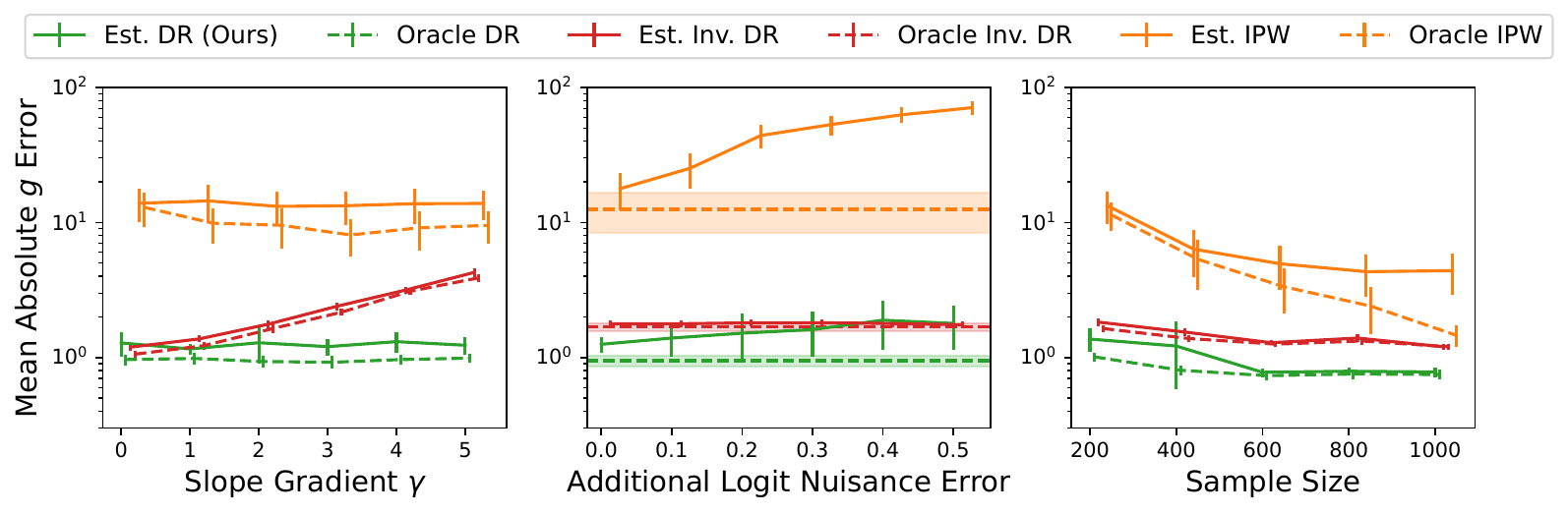}
        \caption{Varying CQC slope steepness w.r.t. $\bm x$ with sample size 500.}
        \label{fig:1Dim_varyslope_untrunc}
    \end{subfigure}
    \hfill
    \begin{subfigure}{0.32\linewidth}
        \caption{Varying nuisance parameter error with sample size  500 and $\gamma=2$.}
        \label{fig:1Dim_varynuisanceerror_untrunc}
    \end{subfigure}
    \hfill
    \begin{subfigure}{0.32\linewidth}
        \caption{Varying sample size with $\gamma=2$}
        \label{fig:1Dim_varysamplesize_untrunc}
    \end{subfigure}
    \hfill
    \caption{Mean absolute error of CQC estimate for various methods with 95\% C.I.s over 100 runs. Lower is best.}
    \label{fig:1Dim_untrunc}
\end{figure}

\subsection{10-dim Experiment}
Here we present additional results from our 10 dimensional experiment introduced in Section \ref{sec:numerical_experiments}.
\subsubsection{S-Learner and CQTE approach}
 Here we introduce additional comparators specifically in the form of an S-Learner and the CQTE estimator of \citet{Kallus2023}. 
 \paragraph{S-Learner}
 The S-Learner works by finding the value of $y_1$ which sets $\hat{\contrastfunc}(y_1,y_0,\bm x)=\estccdf[1](y_1|\bm x)-\estccdf[0](y_0|\bm x)=0$. This can equivalent be thought of as taking our estimator to be $\estinvccdf[1](\estccdf[0](y_0|\bm x)|\bm x)$ where $\estinvccdf$ is computed by inverting $\estccdf[1]$.

The results are presented in Figure \ref{fig:10Dim_wSeparate}. As we can see that Separate approach performs comparably to the DR approach in most settings except for the case when the slope parameter is set to 0. We can potentially understand this in terms of the derivative of our CQC w.r.t. $\bm x$. We have that $\nabla_{\bm x}\truecqc(y_0|\bm x)=\gamma\bm v$. Alternatively we see that $\nabla\ccdf[0](y_0|\bm x)=\nabla\Phi(y-\sin(\pi\bm v^\T\bm x))=f(y-\sin(\pi\bm v^\T\bm x))\cdot\pi\bm v$ where $\Phi,f$ are the CDF and density of a 0 means standard deviation 1 Gaussian. As such while the CQC is a simpler function, its derivative can be on a larger scale than that of the CQC making it more difficult to estimate from the perspective of Nadraya-Watson (NW) estimation. As such an approach which estimates the CQC using NW estimation (as the inverting approach does) will get minimal benefit over estimating the two CCDFs separately and using this as its estimate.

\begin{figure}[htbp]
    \centering
    \begin{subfigure}{0.32\linewidth}
        \centering
        \includegraphics[width=3\linewidth]{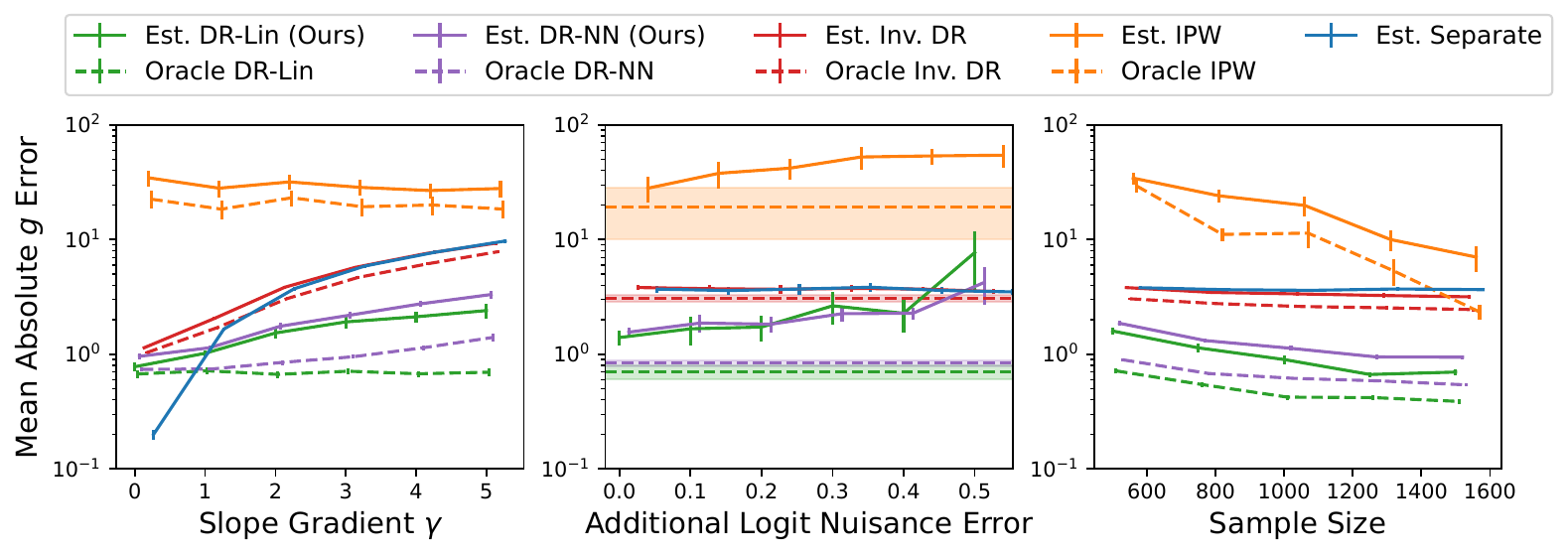}
        \caption{Varying CQC slope steepness w.r.t. $\bm x$ with sample size 500.}
        \label{fig:10Dim_varyslope_wSeparate}
    \end{subfigure}
    \hfill
    \begin{subfigure}{0.32\linewidth}
        \caption{Varying nuisance parameter error with sample size  500 and $\gamma=2$.}
        \label{fig:10Dim_varynuisanceerror_wSeparate}
    \end{subfigure}
    \hfill
    \begin{subfigure}{0.32\linewidth}
        \caption{Varying sample size with $\gamma=2$}
        \label{fig:10Dim_varysamplesize_wSeparate}
    \end{subfigure}
    \hfill
    \caption{Mean absolute error of CQC estimate for various methods with 95\% C.I.s over 100 runs.}
    \label{fig:10Dim_wSeparate}
\end{figure}

\paragraph{CQTE Estimator}
We also compare to the CQTE estimator of \citet{Kallus2023}. For estimation of each nuisance parameter and the final regression we use the same approach as used for the inverting estimator of \citet{Givens2024}. The CQTE also requires estimation of the conditional density of $Y|X$ as the quantiles. That is for $a\in\{0,1\}$, $p_{Y|X,A=a}(\invccdf[a](\alpha|\bm x))$ for a given value of $\alpha$ our specified quantile level. To rule out poor performance due to poor estimation of this additional nuisance parameter we use its exact value for both the oracle and estimated approach. To compare this estimator to our CQC estimate we use the identity 
\begin{align*}
    \cqc(y|\bm x)=\cqte(\ccdf[0](y|\bm x))+\ccdf[0](y|\bm x)
\end{align*}
to transform the CQTE estimate using the exact CCDF. Additionally as the CQTE estimator is constructed to learn the CQTE for a specific quantile, for each run we fix the quantile that we will test the estimator on. We do not change the training procedure of the other estimators.

Results for this experiment are given in Figure \ref{fig:10Dim_wCQTE} as we can see the CQTE approach performs comparably to the inverting approach and performs significantly worse than our direct estimator in almost all settings. Interestingly the Oracle and estimated approaches appear indistinguishable we could be due to using exact estimator of the conditional probability density function in both cases.

\begin{figure}[htbp]
    \centering
    \begin{subfigure}{0.32\linewidth}
        \centering
        \includegraphics[width=3\linewidth]{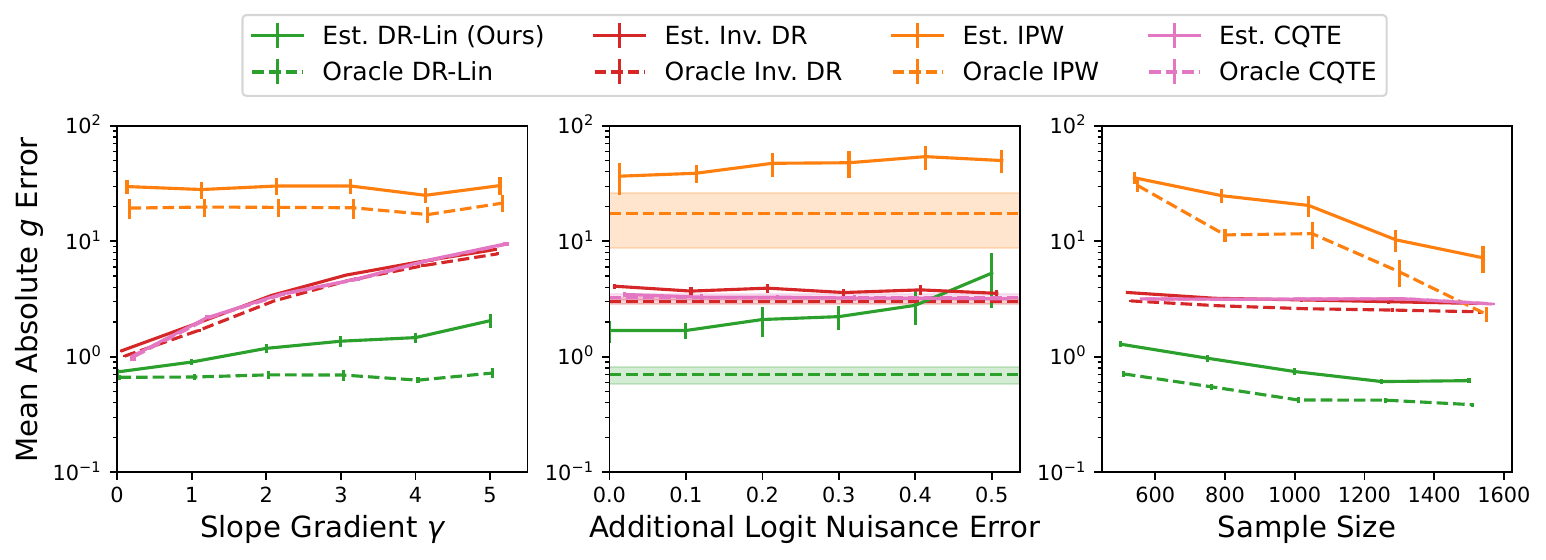}
        \caption{Varying CQC slope steepness w.r.t. $\bm x$ with sample size 500.}
        \label{fig:10Dim_varyslope_wCQTE}
    \end{subfigure}
    \hfill
    \begin{subfigure}{0.32\linewidth}
        \caption{Varying nuisance parameter error with sample size  500 and $\gamma=2$.}
        \label{fig:10Dim_varynuisanceerror_wCQTE}
    \end{subfigure}
    \hfill
    \begin{subfigure}{0.32\linewidth}
        \caption{Varying sample size with $\gamma=2$}
        \label{fig:10Dim_varysamplesize_wCQTE}
    \end{subfigure}
    \hfill
    \caption{Mean absolute error of CQC estimate for various methods with 95\% C.I.s over 100 runs.}
    \label{fig:10Dim_wCQTE}
\end{figure}

\subsection{Varying Hyperparameters \& Compute Time}\label{app:vary_hyperparameter}
\subsubsection{Varying Learning Rate}
Here we explore the effect of our choice of learning rate on our performance for our 10-dimensional experiment in Section \ref{sec:numerical_experiments}. The results are presented in Figure \ref{fig:vary_lr}. 

\begin{figure}[ht]
    \centering
    \includegraphics[width=.7\linewidth]{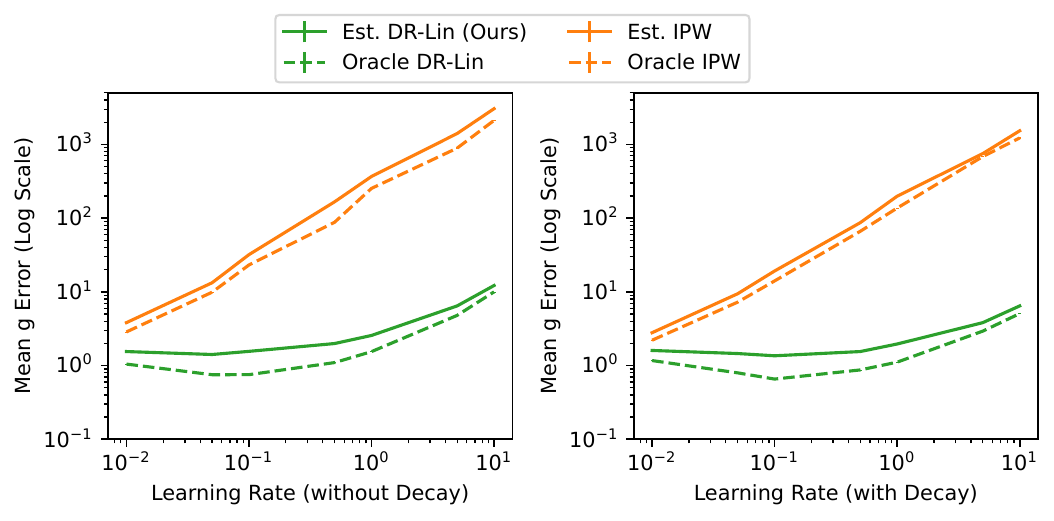}
    \caption{Mean absolute error of CQC estimate for various methods as learning rate increases. 95\% C.I.s included. For the right figure a learning rate decay was also introduced}
    \label{fig:vary_lr}
\end{figure}

As we can see, for our DR method, higher learning rates can hamper performance although the method does not seem excessively sensitive to learning rates. By contrast the IPW approach gets drastically worse as learning rate increases. We also see that adding learning rate decay can further mitigate the effect of the learning rate on performance. For our main experiment we chose our learning rate via a validation procedure using the test loss discussed in Appendix \ref{app:loss_eval}.

\subsubsection{Varying Iteration Number}
He we explore the rate at which our method converges. In Figure \ref{fig:vary_iter} we plot the convergence of our method for the IPW and DR approaches with oracle and estimated nuisance parameters.

\begin{figure}[ht]
    \centering
    \includegraphics[width=.4\linewidth]{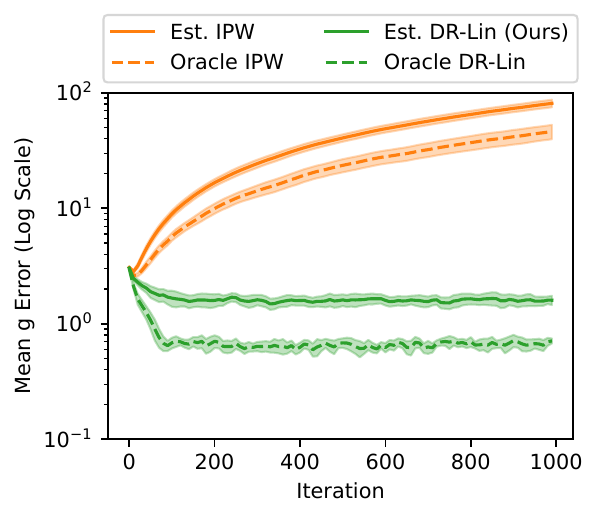}
    \caption{Mean absolute error of CQC estimate for various methods over iteration number. 95\% C.I.s included.}
    \label{fig:vary_iter}
\end{figure}
We see that our DR approach converges within about 150 iterations while the IPW approach doesn't seem to converge at all or if it does converges to an incorrect value. We note that while 1000 iterations is very conservative, this still takes around 1 second with 1000 samples and so is reasonable to perform. In the following section we illustrate the time take for our new approach, demonstrating it to have more desirable dependence upon sample and test size.

\subsubsection{Time Taken}\label{app:time_taken}
In Figure \ref{fig:vary_time} we plot the time take to train and evaluate various models for various number of training samples (left plot) and evaluation samples (right plot). 
\begin{figure}[ht]
    \centering
    \includegraphics[width=.7\linewidth]{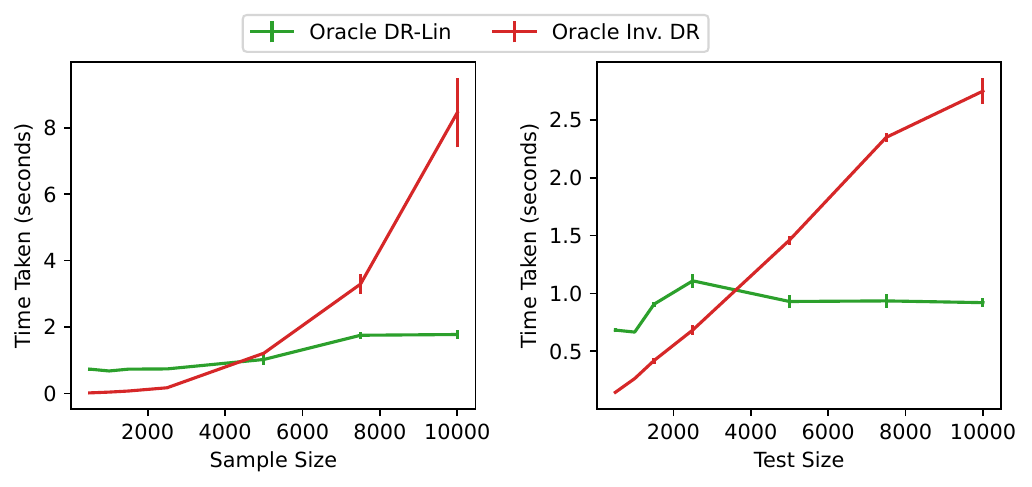}
    \caption{Mean time taken for training and evaluation of our gradient approach and the inverting approach for varying number of training and evaluation samples. 95\% C.I.s included.}
    \label{fig:vary_time}
\end{figure}
We see that for small training and evaluation samples the previous inverting approach is quicker due to not having a distinct training sample however we can see that overall it has less desirable dependency on the training and evaluation samples, with the computational cost being $O(n^2m)$ compared to $O(nT+m)$ for our approach with $n=$sample size, $m=$evaluation size, $T=$iterations. Throughout we kept iterations fixed at an overly conservative 1000.

\subsection{Nuisance Parameter Dependence}\label{app:separate_nuisance_perf}
Here we explore the dependence of our approach on the accuracy of our estimates. Specifically we fix either the propensity or the CCDFs at their true values and estimate the other alongside various levels of additional error. These results are presented in Figure \ref{fig:vary_separate_nuisance}.

\begin{figure}[ht]
    \centering
    \includegraphics[width=0.7\linewidth]{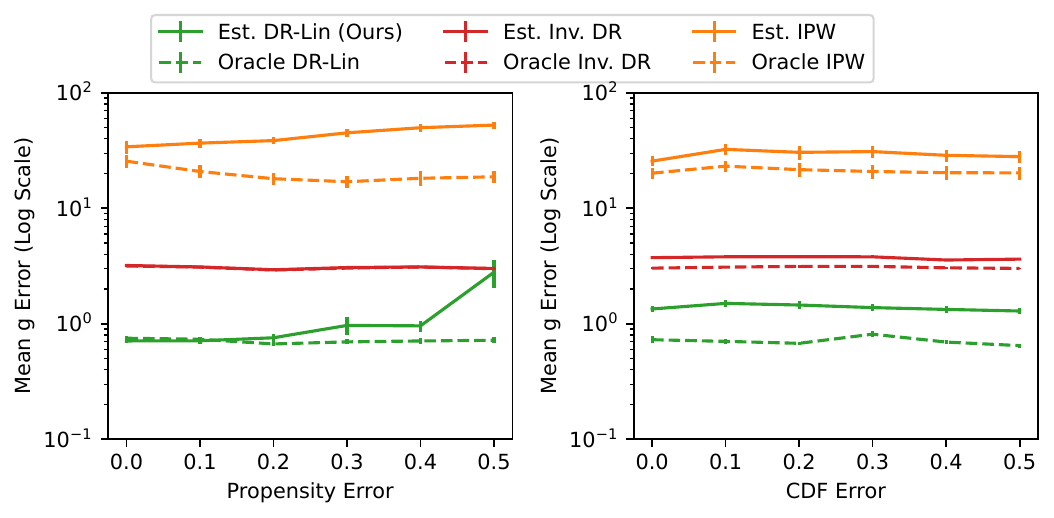}
    \caption{Mean absolute error of CQC estimate for various methods as nuisance error for Propensity and CCDF estimates increases separately. 95\% C.I.s included.}
    \label{fig:vary_separate_nuisance}
\end{figure}

We see that both CDF error and Propensity error have some effect on performance for our method. Interestingly with no additional error our propensity performs comparably to the oracle however additional propensity error can have a notable impact when it gets too large. Interestingly for the CCDFs, additional error doesn't seem to impact performance but our estimated approach performs significantly worse than our oracle estimator suggesting that our estimate for the CCDFs is already quite poor.
For the inverting approach, increased propensity error seems to have no effect while the effect of the estimated CCDF is small but statistically significant.

\subsection{\texorpdfstring{$Y_0$}{Y0} Sampling Method}
\label{app:y0_sample}
Here we explore the impact of our sampling choice on $Y_0$ as discussed in Remark \ref{rmk:y0_selection}. Specifically we sample $Y_0$ in 3 different ways. Firstly we sample $Y_0$ uniformly from the range of the 5\%-95\% quantile of $Y_0$ and call this method ``Uniform''. Secondly we sample $Y_0$ uniformly with replacement from our $Y$ samples with $A=0$ to approximately sample from $Y|A=0$ and call this method ``Unconditional''. Finally we sample exactly from $Y|X=X^{(i)},A=0$ for each $X^{(i)}$ using the true inverse CDF and call this method ``Conditional''. Performance over various sample sizes are presented in Figure \ref{fig:y0_sample}. As we can see the sample choice seems to have little impact on performance with the ``Uniform'' approach potentially performing marginally worse although this is not statistically significant for all sample sizes.

\begin{figure}[ht]
    \centering
    \includegraphics[width=0.5\linewidth]{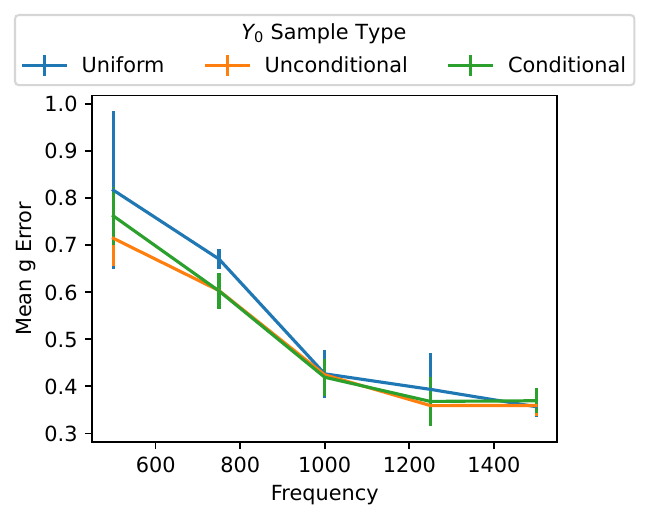}
    \caption{Mean absolute error of CQC estimate for various $Y_0$ sampling choices as sample size increases. 95\% C.I.s included.}
    \label{fig:y0_sample}
\end{figure}

\subsection{Employment Scheme Example}\label{app:employment}
Here we provide the parameters themselves for our aforementioned employment example.
\begin{table}[H]
\caption{Table presenting the covariates from our CQC estimate plotted in Figure \ref{fig:Employment}. The mode is $\cqc_\param(y|x)=\theta_{\text{int, shift}}+\theta_{\text{age, shift}}x~+\lrbr{\theta_{\text{int, scale}}+\theta_{\text{age, scale}}x}y$}
    \centering
    \begin{tabular}{ |c|c c|  }
    \cline{2-3}
    \multicolumn{1}{c}{} & \multicolumn{2}{|c|}{\textbf{Parameter Type}} \\
     \hline
     \textbf{Covariate} & Shift & Scale\\
     \hline
     Intercept & 1.43 & 1.74\\
     Age & 0.032& -0.017\\
 \hline
 \end{tabular}
\end{table}

We can see the overall shape of the CQC represented in the parameters. Firstly we see that the scale term is significantly larger than 1 at the intercept and will continue to be larger than 1 for all values of age thus representing an increase in earning improvement as non-intervention earnings increase. We also see this increase in earning improvement decrease as a function of age as the age scale parameter is negative. We can easily see how one could generalise this to multiple covariates. For interpretability it perhaps makes sense to normalise both $y$ and $x$ for all parameters to be on a comparable scale and give the intercept a more natural interpretation.

\subsection{Colon Cancer Example}\label{app:colon}
We additionally apply our trial to data from a clinical trial on the the effect of colon cancer treatment on survival time/time to remission. This dataset was originally introduced in \citet{Laurie1989} and can be found in the ``survival" package in R and loaded with the line \verb|data(colon, package="survival")|. It was also previously studied via the CQC in \citet{Givens2024}. The dataset consists of 929 patients who are randomised to receive either treatment or control. The time until their death, recurrence of their cancer, or the end of the trial was then recorded alongside which one of these 3 outcomes occurred.  The longest recorded time an individual participated in the trial was 3329 days. We take our response ($Y$) to be the time until their event/end of the trial and a 1-dim covariate ($X$) of the participants age upon trial entry. 

As previous analysis of this trial showed the CQC to be distinctly nonlinear, here we fit the CQC using a fully connected Neural Network (NN). This NN takes in $y_0,\bm x$ as two separate features and then consists of two fully connected hidden layers of 20 nodes each and tanh activation functions.
One again we estimate $\truecqc$ and then use this to estimate $\Delta(y|\bm x)=\transfunc^*(y|\bm x)-y$. The results of this estimation are given in Figure \ref{fig:Colon}. For comparison we provide the estimated CQC via the existing inversion procedure in Figure \ref{fig:Colon_old}
\begin{figure}[ht]
    \centering
    \begin{subfigure}{0.38\textwidth}
    \includegraphics[width=\textwidth]{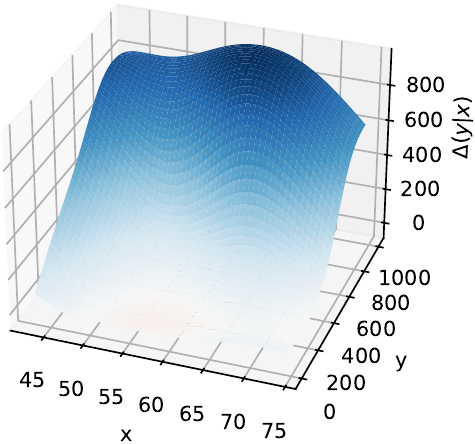}    
    \end{subfigure}
    \begin{subfigure}{0.52\textwidth}
    \includegraphics[width=\textwidth]{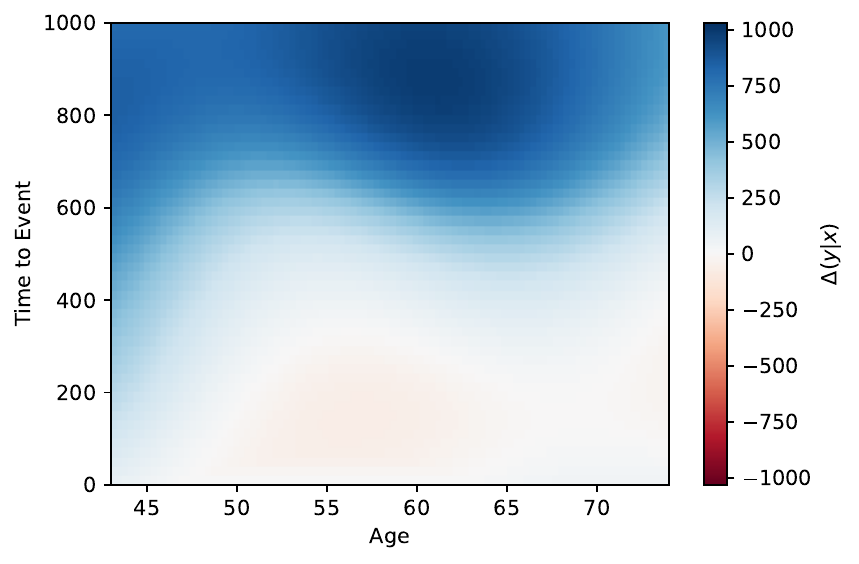}    
    \end{subfigure}
    \caption{Surface plot and heat plot of $\quantilediff(y|\bm x)$ over $y,\bm x$ for colon cancer trial data with $X=$Age, $Y$=Time to Event.}
    \label{fig:Colon}
\end{figure}

\begin{figure}[ht]
    \centering
    \begin{subfigure}{0.38\textwidth}
    \includegraphics[width=\textwidth]{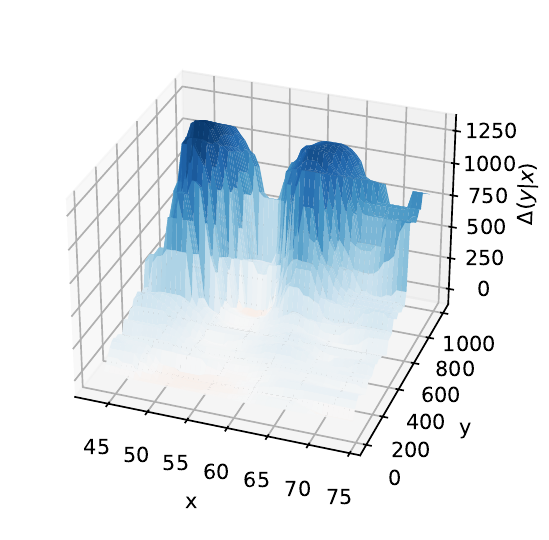}    
    \end{subfigure}
    \begin{subfigure}{0.52\textwidth}
    \includegraphics[width=\textwidth]{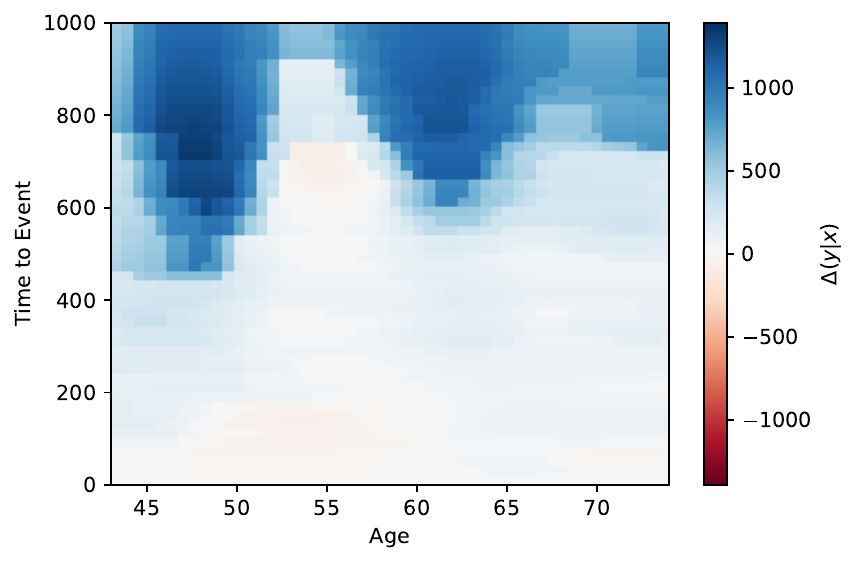}    
    \end{subfigure}
    \caption{Surface plot and heat plot of $\quantilediff(y|\bm x)$ over $y,\bm x$ for colon cancer trial data with $X=$Age, $Y$=Time to Event.}
    \label{fig:Colon_old}
\end{figure}

Here we see a very interesting pattern in which for the a reasonable range of the untreated response, the treated response is no difference and then there is a sudden increase in the treated response. This seems to suggest a relatively binary treatment outcome in which some people do not respond at all to treatment while others see a marked improvement. Interestingly, we also see that individuals younger than 50 seem to be most likely to see an improvement in their outcome while the strongest improvement seems to come for a smaller number of individuals between the ages of 56-66. This could partially be a result of the censoring as the largest values present on the graph are over 1,000 days larger than the untreated survival time of 1,000 days which, in total is reaching the longer end of follow-up.
All of this aligns closely with the estimate CQC via the existing inversion approach presented in Figure \ref{fig:Colon_old} with the newer version providing a smoother and more readable estimate of the CQC.

\section{LLM Usage}
An LLM was used for minor editing of the papers prose. This was done solely for the purposes of conciseness and clarity. 
\end{document}